\theoremstyle{plain}
\newtheorem{theorem}{Theorem}
\newtheorem{proposition}[theorem]{Proposition}
\newtheorem{definition}[theorem]{Definition}
\newtheorem{lemma}[theorem]{Lemma}
\newtheorem{corollary}[theorem]{Corollary}
\theoremstyle{definition}
\newtheorem{example}[theorem]{Example}
\newtheorem{remark}[theorem]{Remark}
\newtheorem{conjecture}[theorem]{Conjecture}
\numberwithin{exercise}{section} \numberwithin{equation}{section}
\numberwithin{theorem}{section} \numberwithin{problem}{section}
\numberwithin{figure}{section}
 \DeclareMathOperator{\diag}{diag}
\DeclareMathOperator{\diam}{diam}
\newcommand{\bs}[1]{{\boldsymbol{#1}}}
\newcommand{\R}{\mathbf{R}}
\newcommand{\N}{\mathbf{N}}
\newcommand{\Z}{\mathbf{Z}}
\newcommand{\Q}{\mathbf{Q}}
\newcommand{\C}{\mathbf{C}}
\newcommand{\I}{\mathrm{i}}
\begin{document}
\title{Generalized Quasispecies Model on Finite Metric Spaces: Isometry Groups and Spectral Properties of Evolutionary Matrices}

\author{Yuri S. Semenov$^{{1},}$\footnote{yuri\_semenoff@mail.ru}\,\,, Artem S. Novozhilov$^{{2},}$\footnote{artem.novozhilov@ndsu.edu} \\[3mm]
\textit{\normalsize $^\textrm{\emph{1}}$Applied Mathematics--1, Moscow State University of Railway Engineering,}\\[-1mm]\textit{\normalsize Moscow 127994, Russia}\\[2mm]
\textit{\normalsize $^\textrm{\emph{2}}$Department of Mathematics,
North Dakota State University, Fargo, ND 58108, USA}}

\date{}

\maketitle

\begin{abstract}
The quasispecies model introduced by Eigen in 1971 has close
connections with the isometry group of the space of binary
sequences relative to the Hamming distance metric. Generalizing
this observation we introduce an abstract quasispecies model on a
finite metric space $X$ together with a group of isometries
$\Gamma$ acting transitively on $X$. We show that if the domain of
the fitness function has a natural decomposition into the union of
$t$ $G$-orbits, $G$ being a subgroup of $\Gamma$, then the
dominant eigenvalue of the evolutionary matrix satisfies an
algebraic equation of degree at most $t\cdot {\rm rk}_{\Z} R$,
where $R$ is what we call the orbital ring. The general theory is
illustrated by two examples, in both of which $X$ is taken to be the metric
space of vertices of a regular polytope with the ``edge'' metric;
namely, the case of a regular $m$-gon and of a hyperoctahedron are
considered.
\paragraph{\small Keywords:} Quasispecies model; finite metric space; dominant eigenvalue; mean population fitness; isometry group; regular polytope

\paragraph{\small AMS Subject Classification:} 15A18; 92D15; 92D25
\end{abstract}

\section{Introduction}
The \textit{quasispecies model}, initially put forward by Manfred Eigen in \cite{eigen1971sma} to comprehensively study the problem of the origin of life, is now a classical object of modern evolutionary theory. More pertinent for the present paper, this model possesses a rich internal mathematical structure, as first was noted in \cite{dress1988evolution,rumschitzki1987spectral}, where intriguing connections between evolutionary dynamics on sequence space and tensor products of representation spaces were pointed out. This mathematical framework, interesting on its own, facilitates understanding why some versions of Eigen's model can be solved exactly and why for some other innocently looking versions numerical computations and subtle approximations are required. In \cite{semenov2016eigen} we noticed and used similar connections to introduce and analyze a special case of Eigen's model, in which two different types of sequences are present; we also formulated, using geometric language, an abstract mathematical model, which we called the \textit{generalized quasispecies} or \textit{Eigen model}. The goal of this paper is to present in detail, expand, and elaborate on this generalized model with the ultimate objective to outline a proper mathematical framework in which many peculiarities of the Eigen model, including the notorious \textit{error threshold},  can be understood from an algebraic point of view.

Eigen's model is quite special in bringing together abstract mathematics and biology. Even more uniquely, it also has very tight connections with statistical mechanics. The complexity and richness of the original Eigen's model can be emphasized by the fact that it is equivalent to the famous \textit{Ising model} in statistical mechanics \cite{leuthausser1987statistical,leuthausser1986exact}. The Ising model can be solved exactly only in some special cases, and hence any progress in understanding the conditions to solve Eigen's model may yield insights in the analysis of the Ising model.

In what follows we neither aim for the most general formulation of
the quasispecies model keeping the mutations symmetric and
independent, nor we present the most abstract version of our
model, using as the specific examples of the underlying metric
spaces regular polytopes with natural ``edge'' metrics. In this
way the presentation, in our opinion, can be accessible to
theoretical biologists, physicists, and mathematicians alike. The
rest of the text is organized as follows. In Section \ref{sec:2}
we recall the classical Eigen's model, provide a concise
description of the main mathematical advances of its analysis and
show in which way the hyperoctahedral group of isometries of the
space of two-letter sequences with the Hamming distance naturally
appears in the analysis of this model. This sets the stage for an
abstract formulation of the generalized Eigen's model on an
arbitrary finite metric space in Section \ref{sec:3}. In the same
section we also review the necessary algebraic background and
introduce what we call \textit{an orbital ring} that allows
identifying those spectral problems for which progress can be
achieved. Section \ref{sec:4} contains an explicit equation for
the dominant eigenvalue. In Section \ref{sec:5} we apply the
abstract theory developed so far to two specific cases, namely, to
the regular $m$-gon and to the hyperoctahedral mutational
landscapes. Short Section 6 is devoted to the discussion of open problems and future directions. Finally, Appendix contains some additional
calculations in a concise table form.

\section{The quasispecies model}\label{sec:2}
The quasispecies model \cite{eigen1971sma,eigen1988mqs} is a system of ordinary differential equations that describes the changes with time of the vector of frequencies of different types of individuals in a population. To be specific, the individuals are defined to be sequences of a fixed length, say $N$, composed of a two-letter alphabet $\{0,1\}$, hence we have $2^N=:l$ different types of sequences. Sequences can reproduce and mutate; the former is incorporated into the diagonal matrix $\bs W=\diag (w_0,\ldots,w_{l-1})$, which is called the \textit{fitness landscape}, and the latter is described by the stochastic matrix $\bs Q$, which is called the \textit{mutation landscape}. The entry $w_i\geq 0$ of $\bs W$ is the fitness of the sequence of type $i$, the entry $q_{ij}\in[0,1]$ of $\bs Q$ is interpreted as the probability that, upon reproduction, the sequence of type $j$ begets the sequence of type $i$. It is readily shown that the asymptotic state of the vector $\bs{\hat p}=(\hat p_0,\ldots,\hat p_{l-1})^\top\in \R^{l}$ of frequencies of different types of sequences is the positive eigenvector corresponding to the dominant eigenvalue $\overline{w}$ of the eigenvalue problem
\begin{equation}\label{eq0:1}
    \bs{QW\hat p}=\overline{w}\bs{\hat p}.
\end{equation}
The dominant eigenvalue and the corresponding eigenvector exist under some very mild technical conditions on $\bs{W}$ and $\bs{Q}$ due to the Perron--Frobenius theorem. The leading eigenvalue $\overline{w}$ is called the \textit{mean population fitness} and is given by $\overline{w}=\sum_{i=0}^{l-1}w_i\hat p_i$.  (We note that there exists an equally popular evolutionary model, which is usually called the Crow--Kimura model, whose properties are close to the problem \eqref{eq0:1}, see, e.g., \cite{baake1999,bratus2013linear,semenov2015}. Much more on the history and analysis of the various quasispecies models can be found in \cite{baake1999,schuster2015quasispecies,jainkrug2007}.)

To make further progress one needs to specify matrices $\bs W$ and $\bs Q$. In the simplest symmetric case we can assume that mutation at a given site of a sequence is independent from other mutations, and the mutation probability, which we denote $1-q$, such that $q$ is the \textit{fidelity}, i.e., the probability of the error free reproduction, is the same for any site. Then
$$
q_{ij}=q^{N-H_{ij}}(1-q)^{H_{ij}},\quad i,j=0,\ldots,l-1,
$$
where $H_{ij}$ is the Hamming distance between sequences of types $i$ and $j$ (we use the lexicographical order to index the sequences, such that sequence $i$ is given by the binary representation of length $N$ of the integer $i$). Thus the model has the natural geometry of the binary hypercube $X=\{0,1\}^N$, see Fig. \ref{fig:1}.
\begin{figure}[!ht]
\centering
\includegraphics[width=0.7\textwidth]{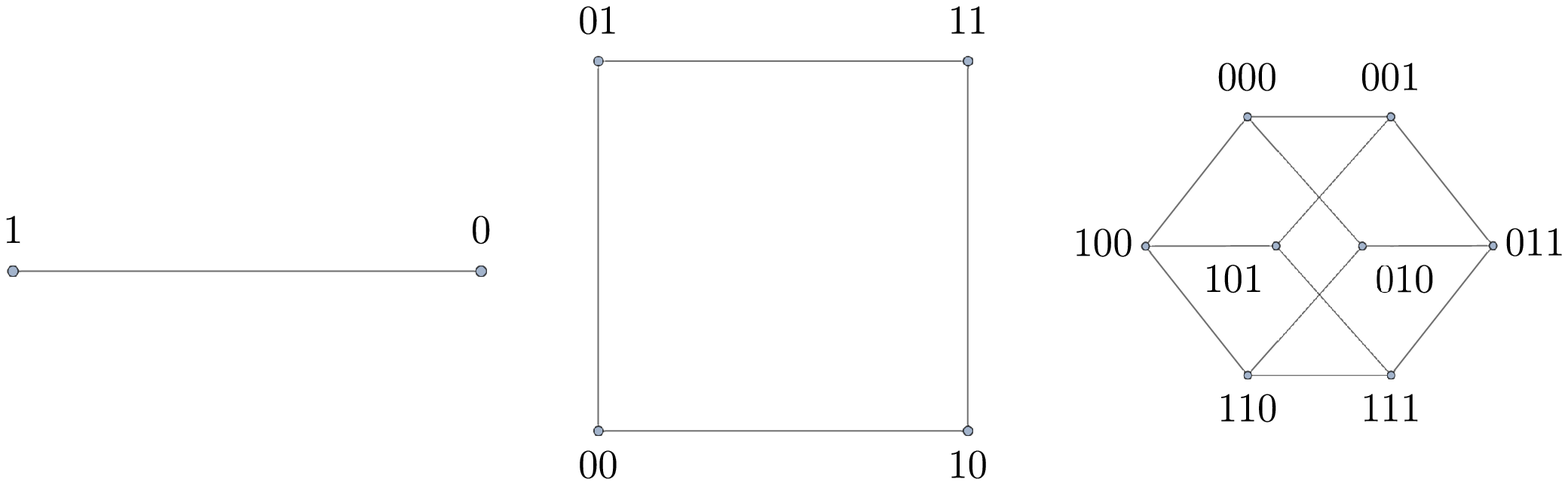}\\[5mm]
\includegraphics[width=0.5\textwidth]{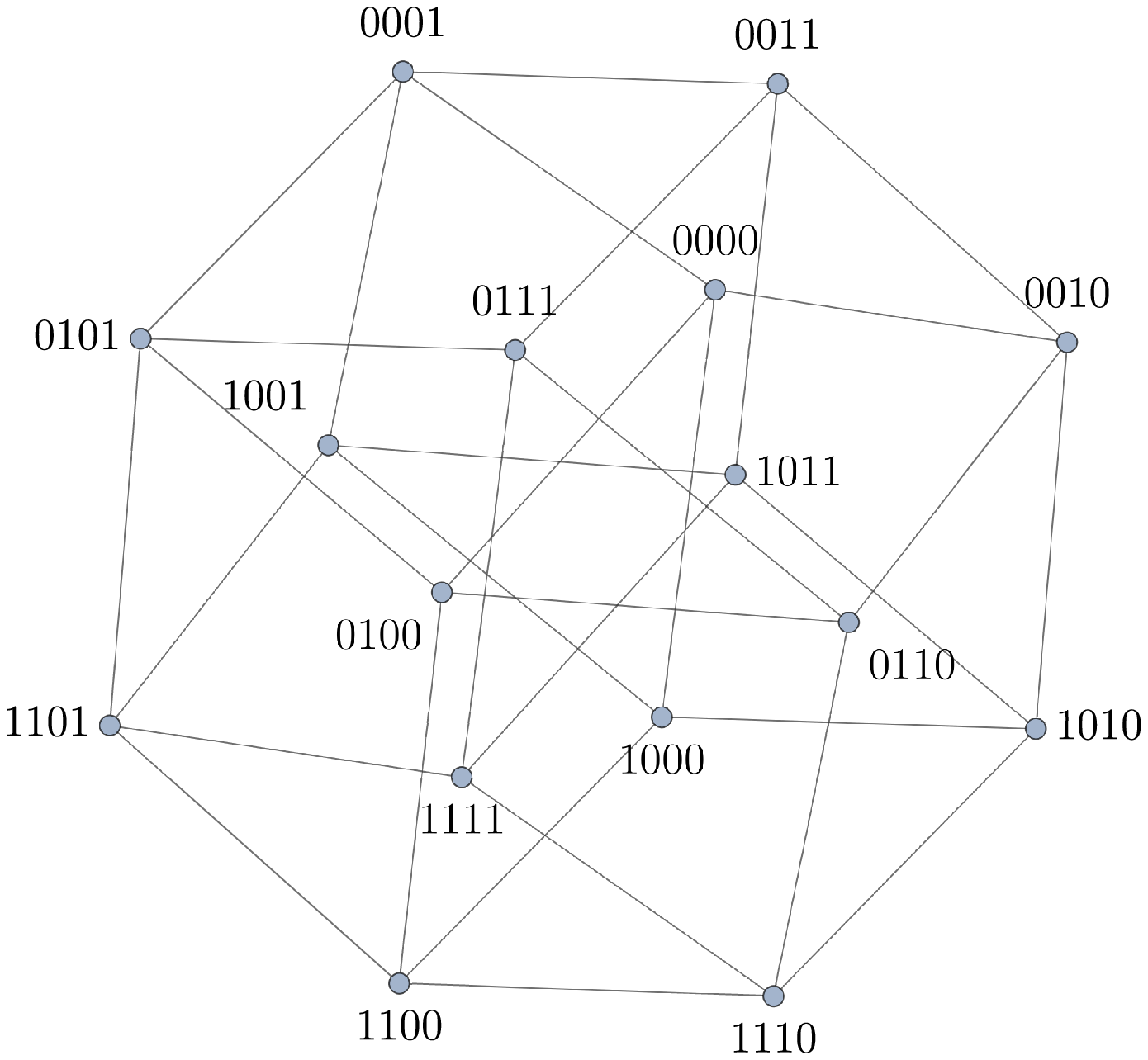}
\caption{The underlying geometry of the classical Eigen's model in
the dimensions $N=1,2,3$ and $4$. The vertices correspond to
different types of binary sequences, and the Hamming distance
between two sequences is given by the minimal number of edges
connecting them.}\label{fig:1}
\end{figure}

For matrix $\bs W$ it is possible to have different choices. One
of the most frequently used is the so-called \textit{single peaked
landscape} (SPL), which is defined as
$$
\bs W_{SPL}:=\diag(w+s,w,\ldots,w),\quad w\geq0,\,s>0.
$$
It turns out that it is impossible, however, to calculate $\overline{w}$ and $\bs p$ exactly in this case for finite values of $N$, and the first analysis of the quasispecies model with SPL relied heavily on numerical calculations (see \cite{swetina1982self} and Fig. \ref{fig:2}). Note that numerically it is not straightforward to solve the eigenvalue problem \eqref{eq0:1}, even for moderate values of $N$, because the dimension of the matrices is $2^N\times 2^N$. To overcome this difficulty, Swetina and Schuster \cite{swetina1982self} considered only the so-called \textit{permutation invariant fitness landscapes} whereas the fitness of a given sequence is determined by the Hamming distance from the master (zero) sequence. In this way one can track only the frequencies of class zero, which is the master sequence itself, of class one, which are all the sequences whose distance to the master sequence is one, etc, thus reducing the dimensionality of the problem to $(N+1)\times (N+1)$. SPL is an example of a permutation invariant fitness landscape.
\begin{figure}[!ht]
\centering
\includegraphics[width=0.48\textwidth]{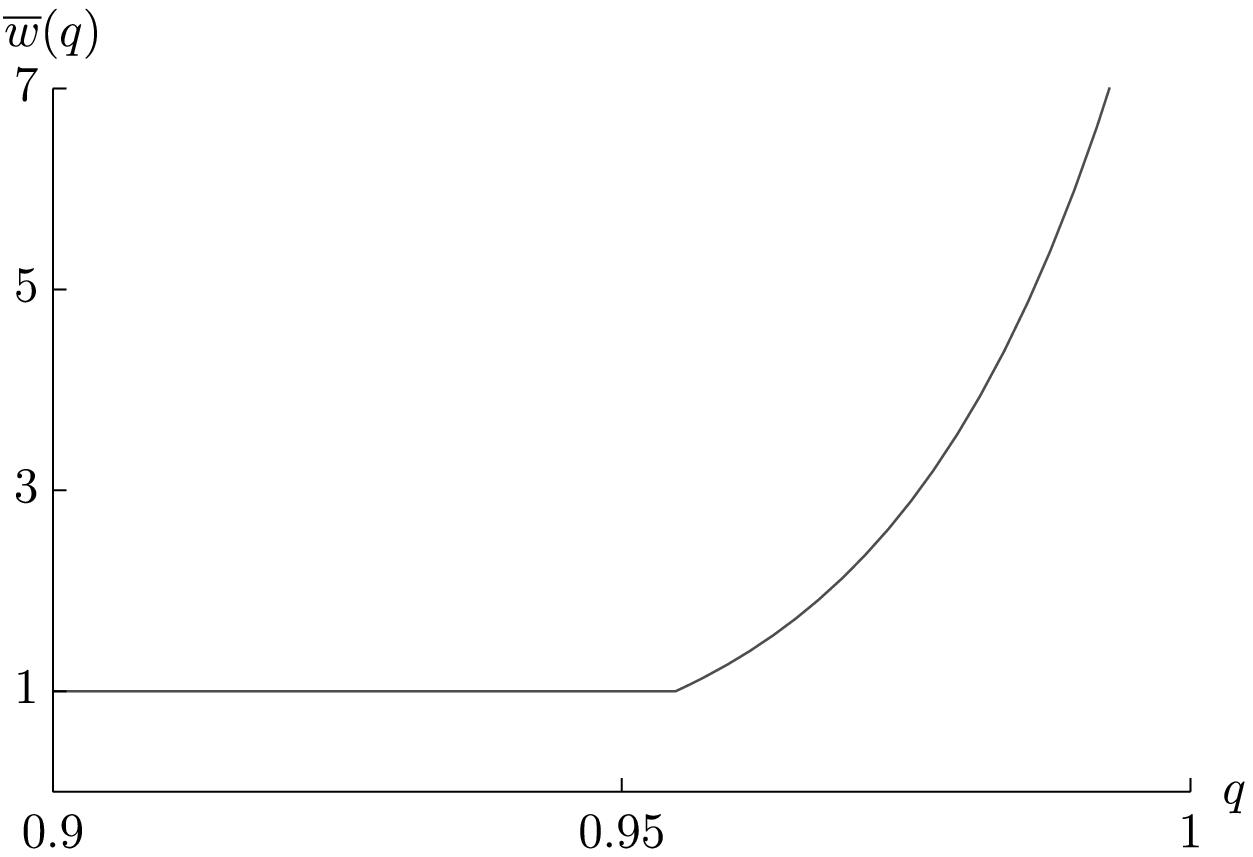}\hfill
\includegraphics[width=0.48\textwidth]{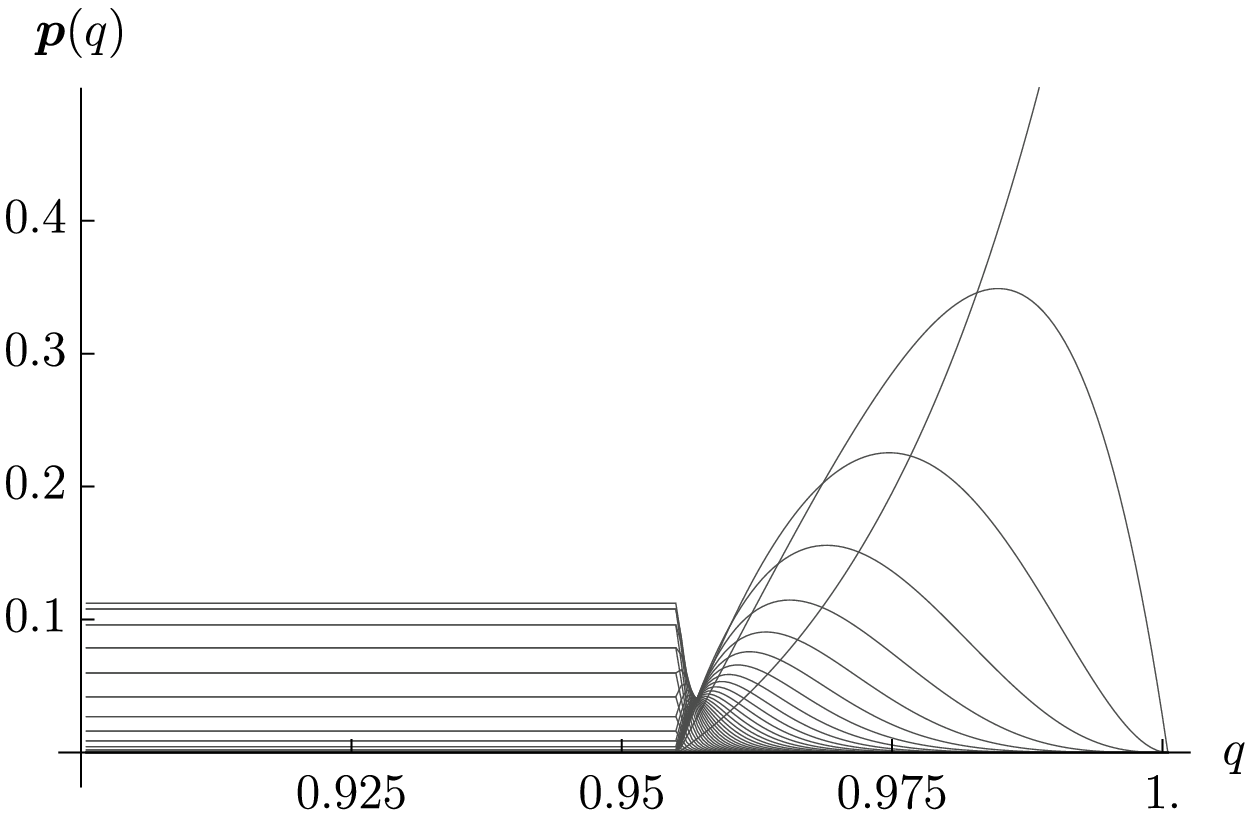}
\caption{Numerical solution of the quasispecies model with the single peaked landscape. $N=50$, $w=1$, $s=9$. On the left the leading eigenvalue is shown, on the right the $y$-axis gives the frequencies of the sequences with the same Hamming distance from the master sequence.}\label{fig:2}
\end{figure}

Fig. \ref{fig:2} shows the phenomenon of the notorious \textit{error threshold}: after some critical mutation rate the distribution of different types of sequences becomes uniform (and hence the distribution of classes shown in the right panel of Fig. \ref{fig:2} is binomial). We note that this phenomenon depends on the fitness landscape $\bs W$; for some $\bs W$ it does not manifest itself \cite{jainkrug2007,wiehe1997model,wilke2005quasispecies}.

It turns out that it is possible to exactly calculate $\overline{w}$ and $\bs p$ for Eigen's model in the case when the contributions to the overall fitness of different sites are independent \cite{dress1988evolution,rumschitzki1987spectral}, and the mathematical reason for this is the decomposition of the Eigen evolutionary matrix $\bs{QW}$ as
$$
\bs{QW}=\bs{Q}_0\bs{W}_1\otimes\bs{Q}_0\bs{W}_3\otimes \ldots\otimes \bs{Q}_0\bs{W}_N,
$$
where
$$
\bs Q_0=\begin{bmatrix}
          q & 1-q \\
          1-q & q \\
        \end{bmatrix},\quad \bs W_k=\begin{bmatrix}
                                      1 & 0 \\
                                      0 & s_k \\
                                    \end{bmatrix},\quad k=1,\ldots, N,
$$
$s_k$ is the contribution of the $k$-th site to the fitness, and
$\otimes$ is the Kronecker product. Biologically, this case
describes the absence of \textit{epistasis}. More generally, as it
was first noted in \cite{dress1988evolution}, the exact solution
is in principle can be given if the structure of matrix $\bs W$ is
related to the group of isometries of the binary hypercube
$X=\{0,1\}^N$ (see also below).

Around the same time (at the end of 1980s) another major
breakthrough about Eigen's model was achieved: It was shown that
the quasispecies model \eqref{eq0:1} is equivalent to the Ising
model of statistical physics
\cite{leuthausser1987statistical,leuthausser1986exact}, which
actually caused a stream of papers that used methods of
statistical physics to analyze \eqref{eq0:1} for various choices
of $\bs W$ (see \cite{baake2001mutation} and references therein).
Without going into the details (see, e.g.,
\cite{thompson2015mathematical} for an introduction to the Ising
model), we mention that the Ising model is formulated for a
given undirected graph, where the vertices can be in one of two
states, and the edges represent the interactions between the
vertices. In the classical two-dimensional Ising model that was
solved by Onsager in 1944 \cite{onsager1944crystal} the graph is
the lattice $\Z^2$. The solution is given in the limit when the
number of vertices approaches infinity, and originally was
obtained by analyzing the so-called \textit{transfer matrix},
which, as was shown in
\cite{leuthausser1987statistical,leuthausser1986exact} is exactly
equivalent to the evolutionary Eigen matrix $\bs{QW}$. Moreover,
the error threshold in Eigen's model is the \textit{phase
transition} in the Ising model.

Eventually the methods of statistical physics led to \textit{the maximum principle} for the quasispecies model \cite{Baake2007,Hermisson2002} (see also \cite{saakian2006ese}) that provides an efficient way of calculating the dominant eigenvalue $\overline{w}$ in the case of permutation invariant fitness landscapes and under some ``continuity'' condition on the limit of entries $\bs W$ when $N\to\infty$. Recently, the explicit expressions for the quasispecies distribution $\bs p$ for the permutation invariant fitness landscapes were obtained \cite{cerf2016quasispecies,cerf2016quasispeciesnew}.

Summarizing, we remark that, notwithstanding all the progress in the analysis of Eigen's model \eqref{eq0:1} outlined above, there are a great deal of open questions. In particular, we still lack analytical tools to tackle ``non-continuous'' fitness landscapes (but see \cite{semenov2016eigen}), most of the existing approaches work only with permutation invariant landscapes, and there exist no necessary and sufficient conditions for the existence of the error threshold, to mention just a few. Most importantly, from our point of view, the existing analysis of Eigen's model is almost exclusively concentrated on the case of the binary cube geometry (Fig. \ref{fig:1}), which is supported by the biological motivation for the model (because the RNA and DNA molecules are literally polynucleotide sequences). Mathematically, however, nothing precludes us from considering an abstract model on a finite metric space $X$ with some natural metric, thus changing the mutational landscape of  Eigen's model. We introduced such abstract model in \cite{semenov2016eigen} and the rest of the present paper is devoted to a detailed presentation of this model and its analysis.

\section{Generalized Eigen's model and algebraic background}\label{sec:3}
\subsection{Groups of isometries and a generalized algebraic Eigen's problem }\label{sec:3:1}
Let $(X,d)$ be a finite metric space. We will assume that the
metric $d\colon X\times X \longrightarrow \N_0$ is an {\it integer}-valued
function. Consider a group $\Gamma\leqslant{\rm Iso}(X)$ of
isometries of $X$ and suppose that $\Gamma$ acts {\it
transitively} on $X$, that is, $X$ is a single $\Gamma$-orbit (we
consider the left action).

Since $\Gamma$ acts transitively on $X$ we may fix an arbitrary
point $x_0\in X$ and consider the function $d_{x_0}\colon X\longrightarrow
\N_0$ such that $d_{x_0}(x)=d(x,x_0)$. By definition,
$$\diam X:=\max\{d_{x_0}(x)\mid x\in X\}$$
 is called the {\it diameter}
of $X$. The number $N=\diam X$ does not depend on the choice
of $x_0$.

Below we give two natural examples of such metric spaces. Arguably, the second example is mathematically more attractive, however, to keep a close connection to the classical Eigen's model discussed in Section \ref{sec:2}, the detailed calculations are presented for the metric spaces of more geometrically appealing Example \ref{ex1.1}.

\begin{example}[\textit{Regular polytopes}] \label{ex1.1} Let $X$ be the the set of
vertices of an $n$-dimensional regular polytope $P$ (see, e.g.,
\cite{coxeter1973regular}), all edges of which have an integer
length $e$. For example we can consider a regular $m$-gon ($m\ge
3$) on the plane, a tetrahedron, cube, octahedron, dodecahedron,
or icosahedron in the 3-dimensional space (see Fig. \ref{fig:3})
and so on, equipped with the ``edge'' metric: the distance between
$x$ and $y$ is the minimal number of edges of $P$ connecting $x$
and $y$ multiplied by $e$. For the $n$-dimensional unit cube the
edge metric is the same as the Hamming metric.

The full group of isometries $\Gamma={\rm Iso}(P)$  acts on $P$
and, consequently, on $X$. For instance, let $P$ be an icosahedron
or dodecahedron. Then $\Gamma\cong A_5$ where $A_5<S_5$ is the
alternating group of order 60.

\begin{figure}[!bh]
\centering
\includegraphics[width=0.19\textwidth]{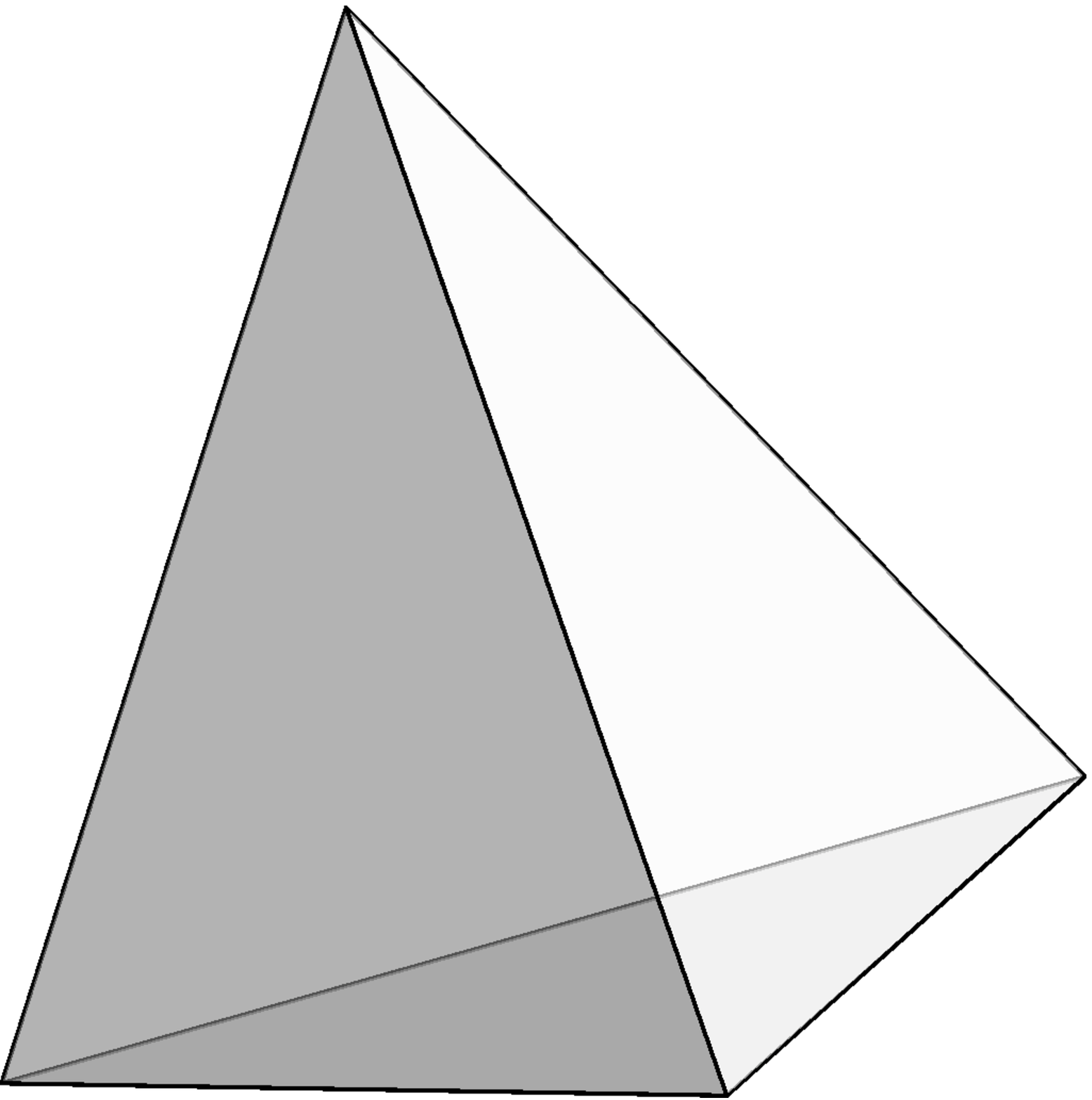}
\includegraphics[width=0.19\textwidth]{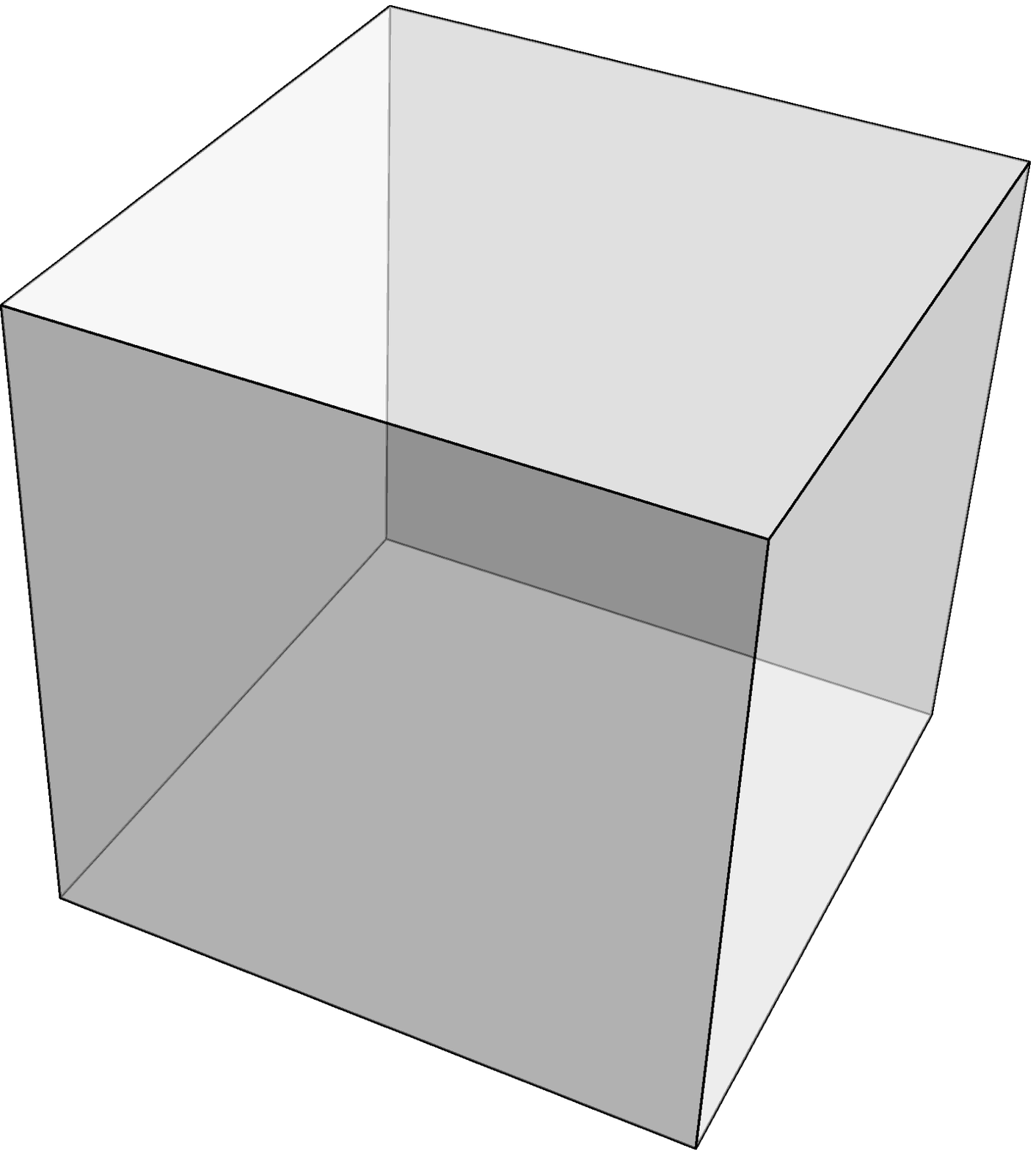}
\includegraphics[width=0.19\textwidth]{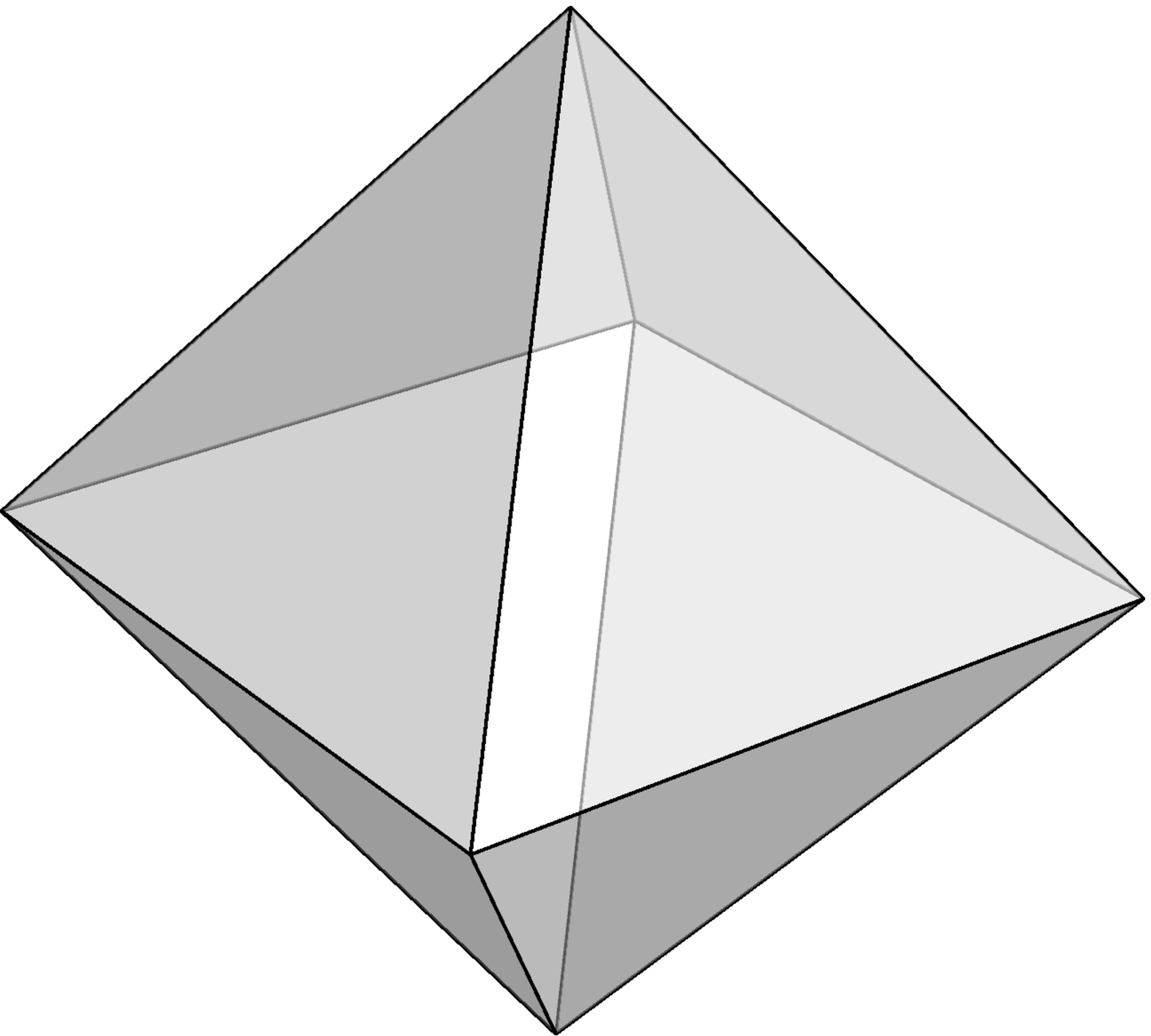}
\includegraphics[width=0.19\textwidth]{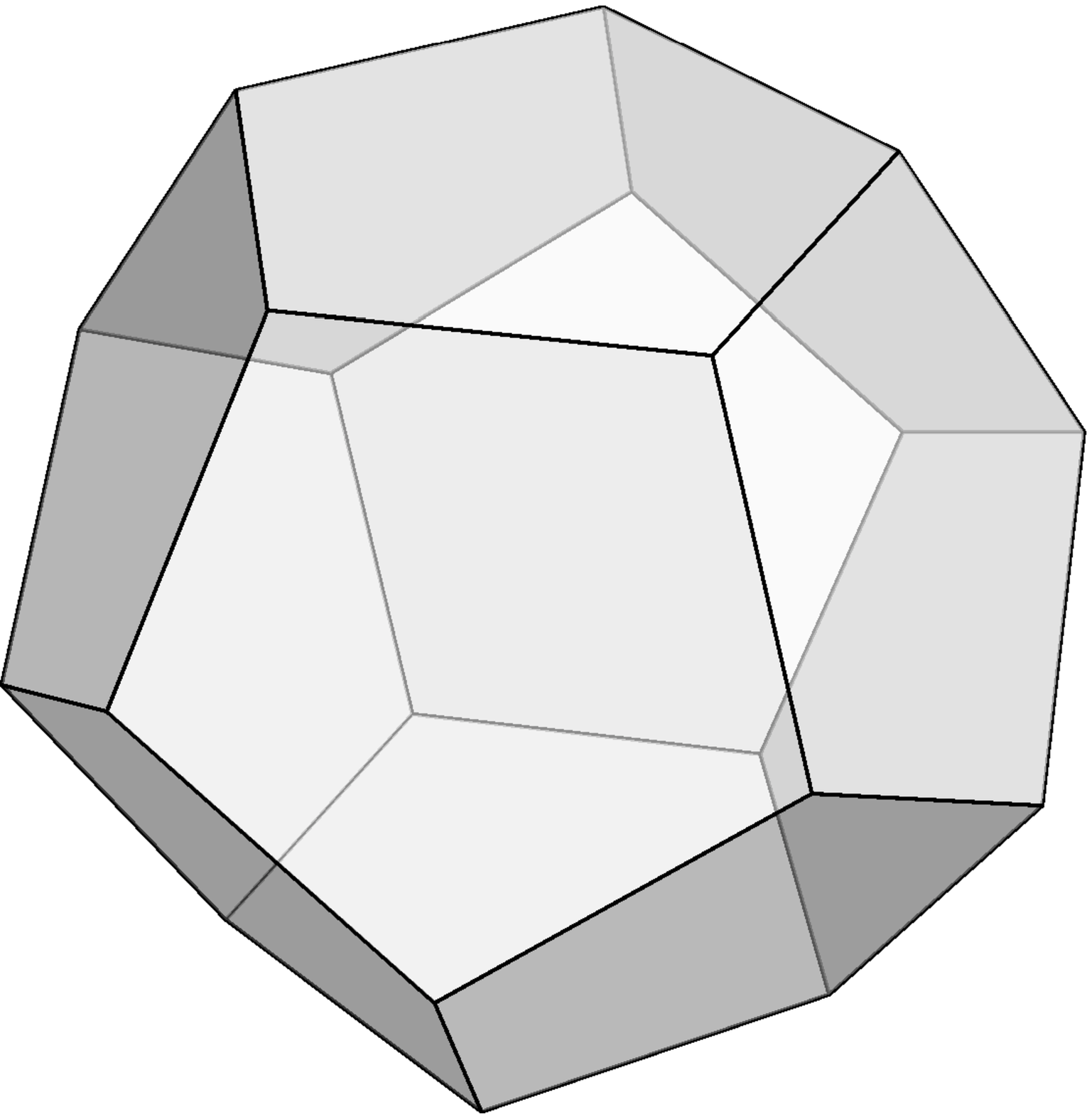}
\includegraphics[width=0.19\textwidth]{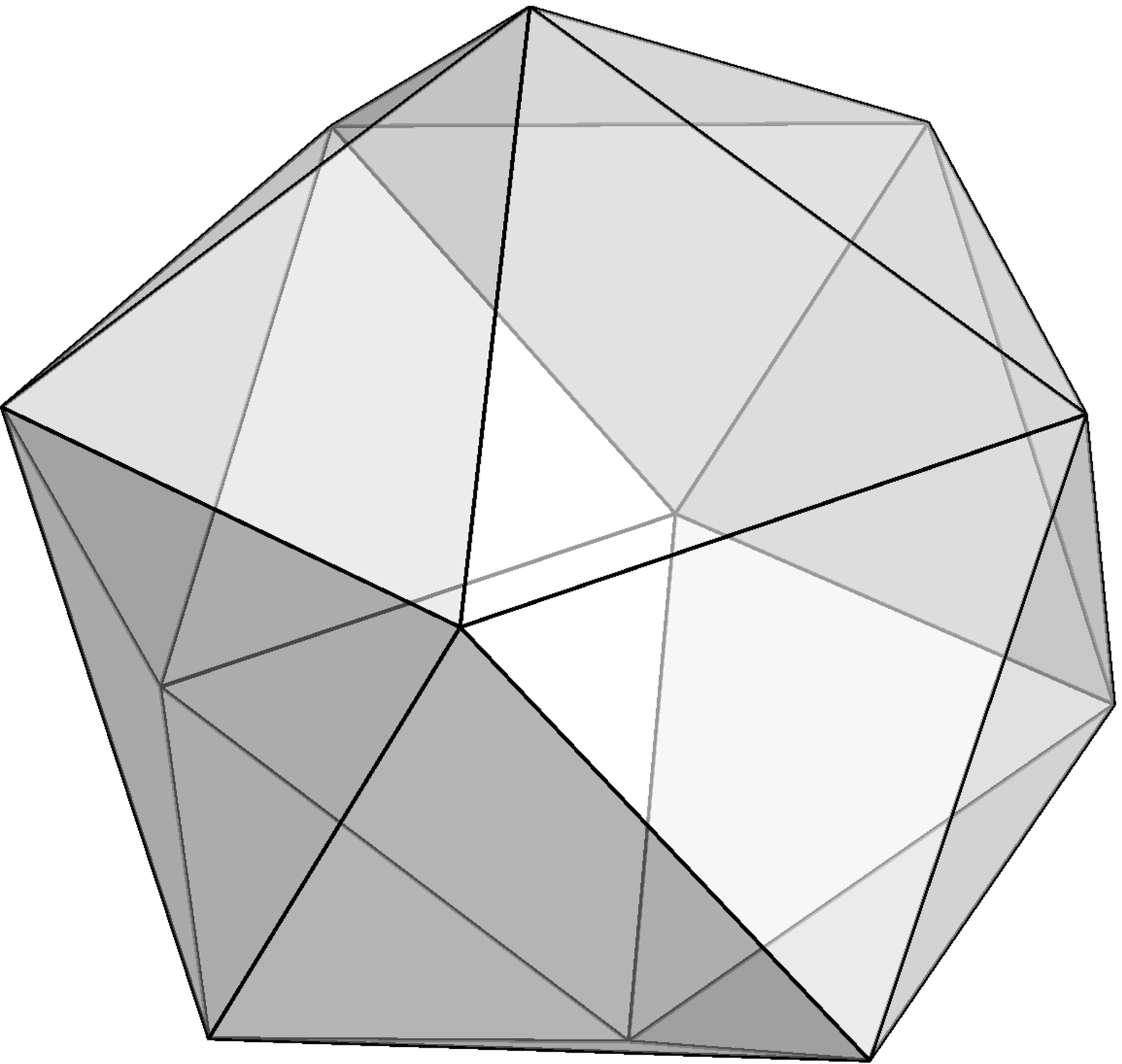}
\caption{Platonic solids (examples of regular polytopes) in dimension 3. From left to right: tetrahedron (regular simplex), cube, octahedron, dodecahedron, icosahedron.}\label{fig:3}
\end{figure}
\end{example}

\begin{example}[Groups as metric spaces]   Let $G$ be a finite group generated by a set
$S=S^{-1}$. The {\it word} metric $d=d_S$ on $G$ is defined as
follows (see \cite[chapter IV ]{de2000topics} for more details and
examples): $d(g,h)=l(g^{-1}h)$ where $l(g^{-1}h)=l$ is the minimal
number of generators $s\in S$ needed to represent $g^{-1}h$ as a
product $s_1\dots s_l$. The word metric is  invariant with respect
to the action of $G$ on itself by left shifts $h\mapsto gh$. Hence, we
have the metric space $X=G$ and the transitive action of
$\Gamma=G$ on $X$ by isometries.

More generally, for any subgroup $H<G$ we can define the metric
space $X_H=\{gH\,|\,g\in G\}$ of the left cosets of $G$ by $H$.
The group $G$ acts on $X_H$ by left shifts and
$$d(gH,aH)=\min\{d(x,y)\mid x\in gH,\, y\in aH\}\,.
$$
If $G$ acts transitively by isometries on a metric space $X$ then
as a $G$-set $X$ is isomorphic to the set of left cosets $G/{\rm
St}_{\Gamma}(x_0)$, $x_0\in X$, where ${\rm St}_{\Gamma}(x_0)$ is
the stabilizer (or the group of isotropy) of $x_0$ in $\Gamma$.
\end{example}

\medskip
Now  consider a quadruple $(X,d,\Gamma, {\bs w})$ where $(X,d)$ is
a finite metric space of
diameter $N$ with integer distances between points and cardinality $l=|X|$, a group $\Gamma\leqslant
{\rm Iso} (X)$ is a fixed group  and a {\it fitness function}
${\bs w}\colon X\longrightarrow \R_{\geq 0}$. The fitness function is often
represented by the vector-column ${\bs w}=(w_x)$ with non-negative
real entries called {\it fitnesses} which are indexed by $x\in X$
(for an appropriate ordering of $X$).

\begin{definition}\label{def:3:3} The quadruple $(X,d,\Gamma, {\bs w})$  is called {\it
homogeneous} $\Gamma$-landscape. It is called {\it symmetric} if for any points $y, z\in X$ there is an isometry
$\gamma=\gamma(y,z)\in \Gamma$ such that $\gamma y=z$ and $\gamma
z=y$.\end{definition}


%
%

Consider also the diagonal matrix ${\bs W}={\rm diag}(w_x)$ of
order $l$ called the {\it fitness matrix}, the symmetric distance
matrix ${\bs D}=\bigl(d(x,y)\bigr)$ of the same order with integer entries,
 and the symmetric matrix
${\bs Q}=\left((1-q)^{d(x,y)}q^{N-d(x,y)}\right)$ for $q\in
[0,1]$. Finally, we introduce the {\it distance polynomial}
\begin{equation}\label{1.1}
P_X(q)=\sum_{x\in X} (1-q)^{d(x,x_0)}q^{N-d(x,x_0)}\,,\quad
x_0\in X\,. \end{equation}
Since $\Gamma$ acts transitively on $X$ this polynomial is
independent of the choice of $x_0\in X$ and is the sum of entries
in each row (column)  of ${\bs Q}$.

\medskip The following key
definition generalizes the classical Eigen's problem.

\begin{definition} The  problem to find the leading eigenvalue
$\overline{w}=\overline{w}(q)$ of the matrix
$\frac{1}{P_X(q)}\,\bs{QW}$ and the eigenvector $\bs{\hat p}=\bs{\hat p}(q)$
satisfying
\begin{equation}\label{1.2}
{\bs Q}{\bs W}\bs{\hat{p}}=P_X(q)\overline{w}\, \bs{\hat p},\quad \hat p_x=\hat p_x(q)>
0,\quad\sum_{x\in X} \hat p_x(q)=1, \end{equation} will be called
{\it generalized algebraic quasispecies or Eigen's problem}.
\end{definition}

\medskip Note that in ({\ref{1.2}})
\begin{equation}\label{1.3} \overline{w}=\sum_{x\in X} w_x\hat p_x\,.
\end{equation} Due to the Perron--Frobenius theorem a solution of this
problem always exists. Also note that the uniform distribution
vector
\begin{equation}\label{1.4}
\bs{\hat p}=\frac{1}{|X|}(1,\dots,1)^\top=\frac{1}{l}(1,\dots,1)^\top
\end{equation} provides a solution of (\ref{1.2}) in the case
of constant fitnesses $w_x\equiv w>0$. By construction matrix $\frac{1}{P_X(q)}\,{\bs Q}$ is symmetric
and double stochastic. It will be called {\it generalized mutation
matrix}.

Problem (\ref{1.2}) turns into classical Eigen's quasispecies
problem if $X=\{0,1\}^N$ is the $N$-dimensional binary cube with
the Hamming metric $H$, and the group $\Gamma={\rm Iso} (X)$
named in 1930 by A. Young the {\it hyperoctahedral} group.
$\Gamma$ is isomorphic as an abstract group to the Weyl group of
the root system of type $B_N$ or $C_N$ and is acting on the binary
cube. In this case $P_X(q)\equiv 1$. The case when $X$ is
the set of vertices of an $N$-dimensional simplex with the
isometry group ${\rm Iso}(X)\cong S_{n+1}$ is treated in detail in
Section 6 of \cite{semenov2016eigen}. Here we continue with a
general analysis of the generalized quasispecies problem. The
first step is to study the properties of the distance polynomials.

%
%

\subsection{ Some general properties of the distance polynomial }
\label{subs1.2}

Using the notations of Section \ref{sec:3:1} we consider the
polynomial $P_X(q)=P_{X,d}(q)$. Polynomial $P_X(q)$ is strictly
positive on $[0,1]$ (provided the parameter $N$ is strictly equal
to $\diam X$)  and
 possesses the following properties, which are checked by direct calculations:

\begin{enumerate}
\item
\begin{equation}\label{1.7}
P_X(1)=1,\quad P_X\left(\frac{1}{2}\right)=
\frac{|X|}{2^N}=\frac{l}{2^N}\;. \end{equation}

\item
\begin{equation}\label{1.8}
P_X(q)=\sum_{k=0}^N f_k\,(1-q)^k q^{N-k}\in \Z[q]\,,
\end{equation} where the non-negative integers $f_k=f_k(X)=:\#\{x\in X\mid d(x,x_0)=k\}$ are the cardinalities of
$d$-spheres in $X$ with the center at the fixed point $x_0$ and of
radius $k$.
\end{enumerate}

\begin{remark}\label{rem:3:5} Polynomial $S_X(t)=\sum_{k=0}^N f_k
t^k$  is often called the {\it spherical growth function} of
$(X,d)$. See, for instance, \cite[chapter IV ]{de2000topics} for details
and examples.
\end{remark}

%

\subsection{An orbital ring associated with the triple $(X,d,\Gamma)$
}\label{subs1.3}

In this section, to study the spectral properties of the mutation matrix $\bs Q$, we introduce what we call \textit{an orbital ring}. For more algebraic details and construction of similar structures we
refer the reader to \cite{brown2012cohomology,feit1982representation,kirillov1976elements,semenov1994rings,serre2012linear}.

Specifically, let $(X,d,\Gamma, {\bs w})$
be a homogeneous symmetric (in the sense of Definition \ref{def:3:3})
$\Gamma$-landscape ($\Gamma\leqslant{\rm Iso}(X)$). We attach to the triple $(X,d,\Gamma)$ a commutative ring
$R=R(X,d,\Gamma)$ with unity, which we call the {\it orbital ring}. As an
abelian group $R$ is free and of rank ${\rm
rk}_{\Z}R=|\Gamma_0\backslash\Gamma/\Gamma_0|$, the number of
double $(\Gamma_0, \Gamma_0)$-cosets in $\Gamma$ where
$\Gamma_0={\rm St}_{\Gamma}(x_0)$.

Since $\Gamma$ acts by isometries on $X$ then the distance
function $d=d(x,y)$ is $\Gamma$-invariant with respect to the
diagonal action of $\Gamma$ on the cartesian square $X\times X$,
namely, $d(\gamma x,\gamma y)=d(x,y)$ for any $x,y \in X$ and
$\gamma\in\Gamma$.

Let $A$ be a $\Gamma$-orbit in $X\times X$ and let ${\bs M}_A$ be
the matrix with entries $({\bs M}_A)_{x,y}$  equal to 1 if
$(x,y)\in A$ and equal to 0 otherwise. It is worth mentioning that
${\bs M}_A$ can be identified with the matrix of
$\Gamma$-invariant $\Z$-linear endomorphism $f_A\in {\rm Hom}(\Z
X, \Z X)$ such that $f_A(y)=\sum_{(x,y)\in A} x$,  $\Z X$ being a
permutation $\Z\Gamma$-module (see, e.g., \cite{brown2012cohomology}).

It is well known that the set ${\rm Orb}$ of $\Gamma$-orbits $A$
in $X\times X$ is in 1-to-1 correspondence with the set
$\Gamma_0\backslash\Gamma/\Gamma_0$ of double $(\Gamma_0,
\Gamma_0)$-cosets in $\Gamma$ where $\Gamma_0={\rm
St}_{\Gamma}(x_0)$. We will say that $\Gamma$-orbit $A$ is of
degree $k$ ($\deg A =k$), $k=0,\dots, N$, if $d(x,y)=k$ for
some (and hence for any) $(x,y)\in A$. By definition, all $\Gamma$-orbits
$A$ of degree $k$ compose a subset ${\rm Orb}_k\subset {\rm Orb}$.

Note that the single $\Gamma$-orbit of degree 0 is the diagonal
$\Delta\subset X\times X$. The corresponding matrix ${\bs
M}_\Delta={\bs I}$, the identity matrix. Since different orbits
$A$ are disjoint the matrices ${\bs M}_A$ are independent over
$\Z$.

Therefore, we have the following expansion of the mutation matrix ${\bs Q}$:
\begin{equation}\label{1.9}
{\bs Q}= \sum_{k=0}^N (1-q)^k q^{N-k}\sum_{A\in {\rm Orb}_k} {\bs
M}_A\,,
\end{equation}
and the equality
\begin{equation}\label{1.10}
{\bs E}=\sum_{A\in {\rm Orb}} {\bs M}_A=\sum_{k=0}^N \sum_{A\in
{\rm Orb}_k} {\bs M}_A\,,
\end{equation}
where ${\bs E}$ is the matrix with all the entries equal to 1.

\begin{lemma} If the
triple $(X,d,\Gamma)$ is symmetric in the sense of Definition
\ref{def:3:3} then matrices ${\bs M}_A$ are symmetric and commute
pairwise. Moreover, there are integer non-negative structural
constants $\mu_{AB}^C=\mu_{BA}^C$, where $A, B, C \in {\rm Orb}$,
such that
\begin{equation}\label{1.11}
{\bs M}_A{\bs M}_B=\sum_{C\in {\rm Orb}} \mu_{AB}^C {\bs
M}_C\,.
\end{equation}
\end{lemma}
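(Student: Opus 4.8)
The plan is to prove the three assertions --- symmetry of the matrices $\bs M_A$, their pairwise commutativity, and the structure constant identity~\eqref{1.11} --- in that order, since each builds on the previous one. For symmetry, I would exploit the hypothesis that $(X,d,\Gamma)$ is symmetric: given a $\Gamma$-orbit $A\subset X\times X$, I want to show $(x,y)\in A \iff (y,x)\in A$, which is exactly the statement that $\bs M_A = \bs M_A^\top$. Transitivity lets me assume $x=x_0$; then for $(x_0,y)\in A$ the symmetry axiom provides $\gamma\in\Gamma$ with $\gamma x_0 = y$ and $\gamma y = x_0$, so $\gamma\cdot(x_0,y) = (y,x_0)$ lies in the same $\Gamma$-orbit $A$. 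For a general pair $(x,y)\in A$, write $(x,y) = \delta\cdot(x_0,y')$ for suitable $\delta\in\Gamma$, $y'\in X$; then $(y,x) = \delta\cdot(y',x_0)$, and we have just shown $(y',x_0)$ and $(x_0,y')$ lie in the same orbit, so applying $\delta$ keeps them together. Hence $\bs M_A$ is symmetric for every $A\in\mathrm{Orb}$.

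Next, the product formula~\eqref{1.11}. The key computation is the matrix entry
\begin{equation}\label{eq:proof:prod}
(\bs M_A \bs M_B)_{x,z} = \sum_{y\in X} (\bs M_A)_{x,y}(\bs M_B)_{y,z} = \#\{y\in X \mid (x,y)\in A,\ (y,z)\in B\}.
\end{equation}
I would then argue that this count depends only on the $\Gamma$-orbit of the pair $(x,z)$: if $\gamma\in\Gamma$ and we replace $(x,z)$ by $(\gamma x,\gamma z)$, the map $y\mapsto\gamma y$ is a bijection of the two counted sets because $A$ and $B$ are $\Gamma$-invariant and $\gamma$ acts by isometries. Therefore the entry $(\bs M_A\bs M_B)_{x,z}$ equals a constant $\mu_{AB}^C$ whenever $(x,z)\in C$, which is precisely~\eqref{1.11}. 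Nonnegativity and integrality of $\mu_{AB}^C$ are immediate since it is a cardinality; this also makes $\mathbb{Z}$-linear combinations close under multiplication, so the $\bs M_A$ span a ring --- the orbital ring $R$ --- as claimed in Section~\ref{subs1.3}.

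For commutativity and the symmetry $\mu_{AB}^C = \mu_{BA}^C$, I would transpose: since each $\bs M_A$ is symmetric by the first part, $\bs M_A\bs M_B = (\bs M_B^\top \bs M_A^\top)^\top = (\bs M_B\bs M_A)^\top$. But the right-hand side of~\eqref{1.11}, being a $\mathbb{Z}$-combination of the symmetric matrices $\bs M_C$, is itself symmetric; transposing~\eqref{1.11} therefore gives $(\bs M_A\bs M_B)^\top = \sum_C \mu_{AB}^C \bs M_C = \bs M_A\bs M_B$, so $\bs M_A\bs M_B$ is symmetric, hence equals $(\bs M_B\bs M_A)^\top = \bs M_B\bs M_A$. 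Reading off structure constants gives $\mu_{AB}^C = \mu_{BA}^C$. I expect the only delicate point to be the first step: one must use the symmetry hypothesis (not merely transitivity) to get $A = A^\top$ as orbits, and the reduction from a general pair $(x,y)\in A$ to the base pair $(x_0,y')$ needs the transitive action to be handled carefully so that the conjugating element $\delta$ is genuinely in $\Gamma$. Once $\bs M_A^\top = \bs M_A$ is in hand, the rest is the standard Hecke-algebra-style orbit-counting argument and goes through routinely.
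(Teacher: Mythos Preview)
Your proof is correct and covers the same ground as the paper's, but two points of your argument differ in route from the published proof.

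For the symmetry of $\bs M_A$, you reduce to pairs $(x_0,y)$ via transitivity and then propagate by conjugation. This works, but it is unnecessary: Definition~\ref{def:3:3} already guarantees a swapping isometry $\gamma(x,z)\in\Gamma$ for \emph{every} pair $(x,z)$, so the paper simply applies it directly to an arbitrary $(x,z)\in A$ and concludes $(z,x)\in A$ in one line. Your detour through $x_0$ is valid but buys nothing.

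For commutativity, the two proofs genuinely diverge. The paper reuses the swapping isometry $\gamma=\gamma(x,z)$ to build an explicit bijection between the sets $\{y:(x,y)\in A,\ (y,z)\in B\}$ and $\{y:(x,y)\in B,\ (y,z)\in A\}$, giving $(\bs M_A\bs M_B)_{x,z}=(\bs M_B\bs M_A)_{x,z}$ directly. You instead first establish~\eqref{1.11}, observe that $\bs M_A\bs M_B$ is therefore a $\Z$-combination of symmetric matrices and hence symmetric, and then combine this with $(\bs M_A\bs M_B)^\top=\bs M_B\bs M_A$ to conclude. Your argument is slicker and uses the symmetry of the $\bs M_C$ a second time rather than invoking the swap isometry again; the paper's argument is more hands-on but self-contained at the level of the counting formula~\eqref{eq:proof:prod}. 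Both are standard moves in the theory of association schemes (Bose--Mesner algebras), and either is fine here.
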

\begin{proof}
Let $(x,z)\in A$. In view of Definition \ref{def:3:3} there exists
an isometry $\gamma=\gamma(x,z)\in \Gamma$ such that $\gamma x=z$
and $\gamma z=x$. Thus, $(z,x)\in A$ and ${\bs M}_A$ is symmetric.

Moreover, it follows from the definition that for any $x, z \in X$
the corresponding matrix entry
\begin{equation}\label{1.12}
({\bs M}_A{\bs M}_B)_{x,z}=\#\{y\in X\mid(x,y)\in
A,\;(y,z)\in B\}\,.
\end{equation}

On the other hand, for the same transposing isometry
$\gamma=\gamma(x,z)\in \Gamma$
$$
\#\{y\in X\mid(x,y)\in A,\;(y,z)\in B\}= \#\{y\in X\mid(z,y)\in
B,\;(y,x)\in A\}= $$
$$=\#\{\gamma y\in X\mid(\gamma z, \gamma y)\in
B,\;(\gamma y, \gamma x)\in A\}=\#\{\gamma y\in X\mid(x, \gamma
y)\in B,\;(\gamma y, z)\in A\}\,.
$$
It follows that $({\bs M}_A{\bs M}_B)_{x,z}=({\bs
M}_B{\bs M}_A)_{x,z}$ and ${\bs M}_A{\bs M}_B={\bs
M}_B{\bs M}_A$.

The pair $(x,z)$ defines a $\Gamma$-orbit $C$. For $g \in\Gamma$
we have the same as in (\ref{1.12}) non-negative number
$$
({\bs M}_A{\bs M}_B)_{g x, g z}=\#\{gy\in X\mid(g x, gy)\in
A,\;(g y, g z)\in B\}\,.
$$
Hence, (\ref{1.11}) holds for some non-negative integer constants
$\mu_{AB}^C$.
 The lemma is proved.
\end{proof}

Consequently, we have proved

\begin{theorem}
 All  $\Z$-linear combinations of $ {\bs M}_A$, $A\in {\rm Orb}$,
compose a commutative unital ring
$R=R(X,d,\Gamma)$ with unity ${\bs M}_\Delta={\bs I}$ called the
{\it orbital ring} associated with the symmetric triple
$(X,d,\Gamma)$. As $\Z$-module $R$ is free of rank ${\rm
rk}_{\Z}R=|{\rm Orb}|=|\Gamma_0\backslash\Gamma/\Gamma_0|$.
\end{theorem}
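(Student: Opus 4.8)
The plan is to deduce the theorem directly from the preceding Lemma together with two elementary observations already recorded in the text. Write $R$ for the set of all $\Z$-linear combinations $\sum_{A\in{\rm Orb}} c_A{\bs M}_A$ with $c_A\in\Z$; the claim is that $R$ is a unital subring of the full matrix ring $M_l(\Z)$, that it is commutative, and that it is $\Z$-free with basis $\{{\bs M}_A\}_{A\in{\rm Orb}}$.

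First I would verify the ring axioms. Closure of $R$ under addition and under multiplication by integers is immediate. For closure under multiplication, take $\sum_A a_A{\bs M}_A$ and $\sum_B b_B{\bs M}_B$ in $R$, expand the product by bilinearity of matrix multiplication, and substitute the structural identity \eqref{1.11}; the result equals $\sum_{C\in{\rm Orb}}\bigl(\sum_{A,B} a_Ab_B\mu_{AB}^C\bigr){\bs M}_C$, which lies in $R$ because every $\mu_{AB}^C\in\Z$. Associativity and distributivity are inherited from $M_l(\Z)$. The matrix ${\bs M}_\Delta={\bs I}$ belongs to $R$ (it is the orbit matrix of the diagonal $\Delta$) and is a two-sided identity. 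Finally, commutativity of $R$ follows from the Lemma, which gives ${\bs M}_A{\bs M}_B={\bs M}_B{\bs M}_A$ for all $A,B\in{\rm Orb}$; by bilinearity this extends to $xy=yx$ for all $x,y\in R$. Hence $R$ is a commutative unital ring.

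It remains to determine the rank of $R$ as a $\Z$-module. Since distinct $\Gamma$-orbits $A\subset X\times X$ are disjoint and together cover $X\times X$, the $0/1$-matrices ${\bs M}_A$ have pairwise disjoint supports; therefore any relation $\sum_A c_A{\bs M}_A=0$ forces $c_A=0$ for all $A$, so $\{{\bs M}_A\}_{A\in{\rm Orb}}$ is a $\Z$-basis of $R$ and ${\rm rk}_\Z R=|{\rm Orb}|$. Combining this with the standard bijection ${\rm Orb}\longleftrightarrow\Gamma_0\backslash\Gamma/\Gamma_0$ recalled above (the orbit of $(x_0,\gamma x_0)$ corresponding to the double coset $\Gamma_0\gamma\Gamma_0$, with $\Gamma_0={\rm St}_\Gamma(x_0)$) gives ${\rm rk}_\Z R=|\Gamma_0\backslash\Gamma/\Gamma_0|$, as asserted.

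There is essentially no obstacle here: the entire substantive content — symmetry of the ${\bs M}_A$, their pairwise commutativity, and the existence of nonnegative integer structure constants — is precisely what the Lemma supplies, and the remainder is bookkeeping. The only point requiring a line of argument beyond quoting the Lemma is the $\Z$-linear independence of the ${\bs M}_A$, which is settled at once by the disjointness of orbits, and the identification of ${\rm Orb}$ with the double-coset set, whose existence was already stated, so neither needs to be re-proved.
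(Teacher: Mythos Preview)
Your proposal is correct and matches the paper's approach exactly: the paper states the theorem with the preamble ``Consequently, we have proved,'' indicating that it is meant as an immediate corollary of the preceding Lemma (for commutativity and closure under products) together with the earlier observations that the ${\bs M}_A$ are $\Z$-independent by disjointness of orbits and that $|{\rm Orb}|=|\Gamma_0\backslash\Gamma/\Gamma_0|$. You have simply written out this deduction in full.
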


\begin{example}\label{ex1.8} Let $X=\{0,1\}^2$ be the binary square with points
$x_0=[0,0]$, $x_1=[0,1]$, $x_2=[1,0]$, $x_3=[3,1]$ (binary
representation of indices) with the Hamming metric $d$. Let
$\Gamma\cong D_4$ (the dihedral group of order 8) be the group of
all isometries of $X$. Then the triple $(X,d,\Gamma)$ is
symmetric.

The set ${\rm Orb}$ consists of three orbits (corresponding to the
three orbits, namely, $d$-spheres $\{x_0\}=S_0(x_0)$, $\{x_1,
x_2\}=S_1(x_0)$, $\{x_3\}=S_2(x_0)$, of the stabilizer
$\Gamma_0={\rm St}_{\Gamma}(x_0)\cong \Z/2\Z$ acting on $X$)
represented by matrices
$$
{\bs M_0}={\bs I}=\left[
\begin{array}{cccc}
1&0&0&0\\
0&1&0&0\\
0&0&1&0\\
0&0&0&1\\
\end{array}\right]\,,\quad {\bs M_1}=\left[
\begin{array}{cccc}
0&1&1&0\\
1&0&0&1\\
1&0&0&1\\
0&1&1&0\\
\end{array}\right]\,,\quad
{\bs M_2}=\left[
\begin{array}{cccc}
0&0&0&1\\
0&0&1&0\\
0&1&0&0\\
1&0&0&0\\
\end{array}\right]\,.
$$
\end{example}
The multiplication table of these matrices in $R=R(X,d,\Gamma)$ is
as follows:
$$
\begin{array}{|c||c|c|c|}\hline
\times&{\bs I}&{\bs M_1}&{\bs M_2}\\ \hline {\bs I}&{\bs I}&{\bs
M_1}&{\bs M_2}\\ \hline {\bs M_1}&{\bs M_1}&2{\bs I}+2{\bs
M_2}&{\bs M_1}\\\hline {\bs M_2}&{\bs M_2}&{\bs M_1}&{\bs
I}\\\hline
\end{array}
$$

\begin{remark} We have not yet applied the triangle
inequality $d(x,z)\leq d(x,y)+d(y,z)$. It can be used for the
construction of a graded ring ${\rm gr}\, R={\rm gr}\, R(X,d,\Gamma)$.
Consider the following increasing filtration on $R$:
$$
R_{-1}=0<\Z=R_0<R_1<\dots<R_N=R\,\quad \mbox{for
$\Z$-modules}\;\;R_k=\bigoplus_{{\rm deg} A\leq k}\Z{\bs M}_A\,.
$$
It follows from the definition and the triangle inequality that
$R_i\cdot R_j\subseteq R_{i+j}$. Hence, we can attach to the
triple $(X,d,\Gamma)$ the  graded ring
$${\rm gr}\, R=\bigoplus_{ k=0}^{N} R_k/R_{k-1}\,. $$

For instance, in the above Example \ref{ex1.8} (here ${\bs M}_k\in
{\rm gr}_k R=R_k/R_{k-1}$ is viewed as the corresponding element
of $R_k$ modulo $R_{k-1}$):

$$
\begin{array}{|c||c|c|c|}\hline
\times&{\bs I}&{\bs M_1}&{\bs M_2}\\ \hline {\bs I}&{\bs I}&{\bs
M_1}&{\bs M_2}\\ \hline {\bs M_1}&{\bs M_1}&2{\bs M_2}&0\\\hline
{\bs M_2}&{\bs M_2}&0&0\\\hline
\end{array}
$$
\end{remark}

\subsection{Spectral properties of the mutation matrix $\bs Q$}

Consider now the space $V={\rm Hom}_{\R}(X,\R)$ of all linear
functions $f\colon X\longrightarrow \bf R$. Each function of $V$ can be represented
as a vector-column ${\bs v}=(f(x))$ (in fact, a {\it covector}). The
matrix ${\bs M}_A$ (see the previous section for the definition)  defines a linear endomorphism ${\bs M}_A: V\longrightarrow
V$ such that
$${\bs M}_A f(x)=\sum_{y\colon (x,y)\in A} f(y)\,.$$
If ${\bs v}=(f(x))$ then this endomorphism is just the
multiplication ${\bs v}\mapsto {\bs M}_A{\bs v}$.

Let us show that  each endomorphism ${\bs M}_A$ commutes with
$\Gamma$-action on $V={\rm Hom}_{\R}(X,\R)$ given by the rule
$\gamma f(x)=f(\gamma^{-1}x)$, $\gamma\in \Gamma$. In fact,
\begin{equation}\label{1.13}
\gamma{\bs M}_A f(x)= {\bs M}_A f(\gamma^{-1}x)=
\sum_{y:\,(\gamma^{-1}x,y)\in A}\!\!\! f(y)=\sum_{y:\,(x,\gamma
y)\in A}\!\!\! f(y)=\sum_{z:\,(x,z)\in A}\!\!\!
f(\gamma^{-1}z)={\bs M}_A \gamma f(x)\,.
\end{equation}

\begin{theorem} \label{thm1.9} Let the
triple   $(X,d,\Gamma)$ be symmetric. Then there exists a
non-degenerate real constant transition
matrix ${\bs T}=(t_{x,y})$ of order $l=|X|$ such that
\begin{enumerate}
\item All matrix entries of ${\bs T}$ are integer algebraic
\emph{(}over the field $\Q$\emph{)}
numbers.\\
\item The column of ${\bs T}$ indexed by a fixed  $x_0\in X$ is
equal to
$\bs 1=(1,\dots,1)^\top$. If $y\ne x_0$ then $\sum_{x\in X}t_{x,y} =0$.\\
\item
\begin{equation}\label{1.14}
{\bs T}^{-1}{\bs Q}{\bs T}=\diag(P_x(q))\,,
\end{equation}
where $P_{x_0}(q)$ is the distance polynomial $P_X(q)\in \Z[q]$,
other eigenpolynomials $P_x(q)\in \R[q]$ have integer algebraic
coefficients and $P_x(1)=1$ for any $x\in X$.\\
\item There are at most $r={\rm rk}_{\Z}R=|{\rm Orb}|$
different eigenpolynomials $P_x(q)$ in (\ref{1.14}).\\
\item For the distance matrix ${\bs D}$ and $N=\diam X$
$$
{\bs T}^{-1}{\bs D}{\bs T}=\diag(\lambda_x)\,,\qquad
\lambda_x=N
2^{N-1}P_x\left(1/2\right)-2^{N-2}P'_x\left(1/2\right)\,.
$$
\end{enumerate}
\end{theorem}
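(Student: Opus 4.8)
The plan is to observe that both matrices in the statement already lie in the orbital ring $R=R(X,d,\Gamma)$ and to diagonalize $R$ itself. From \eqref{1.9} we have ${\bs Q}=\sum_{k=0}^{N}(1-q)^{k}q^{N-k}\sum_{A\in{\rm Orb}_k}{\bs M}_A$, and since $d(x,y)=k$ precisely when $(x,y)$ lies in an orbit of degree $k$, the distance matrix satisfies ${\bs D}=\sum_{k=0}^{N}k\sum_{A\in{\rm Orb}_k}{\bs M}_A\in R$ as well. By the lemma above the generators ${\bs M}_A$ are real symmetric and commute pairwise, hence can be simultaneously diagonalized by a real matrix; in particular $R\otimes_{\Z}\R$ embeds into the diagonal algebra $\R^{l}$ and is a product of at most $r={\rm rk}_{\Z}R$ copies of $\R$. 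I would record its distinct characters (ring homomorphisms) $\chi\colon R\to\R$: each nonzero common eigenvector $v$ with ${\bs M}_Av=a_Av$ yields such a $\chi$ via ${\bs M}_A\mapsto a_A$, since \eqref{1.11} forces $a_Aa_B=\sum_{C}\mu_{AB}^{C}a_C$ and $\chi({\bs I})=1$, and by linear independence of characters there are at most $r$ of them. The common eigenspaces $V_\chi=\{v:{\bs M}_Av=\chi({\bs M}_A)v\text{ for all }A\}$ give an orthogonal decomposition $V\cong\R^{l}=\bigoplus_\chi V_\chi$, on which ${\bs Q}$ acts as the scalar $P_\chi(q):=\sum_{k}(1-q)^{k}q^{N-k}\sum_{A\in{\rm Orb}_k}\chi({\bs M}_A)$ and ${\bs D}$ as the scalar $\lambda_\chi:=\sum_{k}k\sum_{A\in{\rm Orb}_k}\chi({\bs M}_A)$. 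As there are at most $r$ characters, there are at most $r$ distinct eigenpolynomials, which is claim (4).

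Next I would isolate the distinguished character. The all-ones vector $\bs 1$ satisfies ${\bs M}_A\bs 1=n_A\bs 1$ with $n_A=\#\{y:(x,y)\in A\}$ independent of $x$ by transitivity of $\Gamma$, so $\bs 1$ lies in a single eigenspace $V_{\chi_0}$ with $\chi_0({\bs M}_A)=n_A$; since $\sum_{A\in{\rm Orb}_k}n_A=f_k$, this gives $P_{\chi_0}(q)=\sum_k f_k(1-q)^{k}q^{N-k}=P_X(q)$. I then build ${\bs T}$ by placing $\bs 1$ in the column indexed by $x_0$ and filling the remaining $l-1$ columns with a basis of the invariant subspace $\bs 1^{\perp}$ consisting of common eigenvectors of the ${\bs M}_A$ (the subspace $\bs 1^{\perp}$ is ${\bs M}_A$-invariant because each ${\bs M}_A$ is symmetric and $\bs 1$ is an eigenvector, and it splits as $\bigoplus_\chi(V_\chi\cap\bs 1^{\perp})$). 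Each such column lies in some $V_\chi$ and is orthogonal to $\bs 1$, which is claim (2), and the $l$ columns form a basis of $V$, so ${\bs T}$ is nonsingular. Since ${\bs T}$ diagonalizes every ${\bs M}_A$, it diagonalizes ${\bs Q}$ for all $q$ simultaneously, with $x_0$-eigenvalue $P_X(q)$ and the other eigenvalues polynomials $P_x(q)$ satisfying $P_x(1)=\chi({\bs I})=1$, because the unique orbit of degree $0$ is the diagonal with matrix ${\bs I}$; this is claim (3).

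For claim (1), the eigenvalues $\chi({\bs M}_A)$ are roots of the monic integer characteristic polynomial of ${\bs M}_A$, hence real algebraic integers, so every coefficient $\sum_{A\in{\rm Orb}_k}\chi({\bs M}_A)$ of $P_x(q)$ is a real algebraic integer. Moreover each $V_\chi$, and each intersection $V_\chi\cap\bs 1^{\perp}$, is the solution space of a linear system with coefficients in the number field $K\subset\R$ generated by all the $\chi({\bs M}_A)$; choosing bases over the ring of integers $\mathcal{O}_K$ (clearing denominators) makes all entries of ${\bs T}$ real algebraic integers, together with $\bs 1$ itself. Finally, claim (5) is the elementary computation that if $P_x(q)=\sum_k c_k(1-q)^{k}q^{N-k}$ is the eigenpolynomial on an eigenspace, then ${\bs D}$ acts there as $\lambda_x=\sum_k kc_k$; evaluating at $q=\tfrac12$ gives $P_x(\tfrac12)=2^{-N}\sum_k c_k$, while differentiating and evaluating at $q=\tfrac12$ gives $P_x'(\tfrac12)=2^{-(N-1)}\sum_k c_k(N-2k)=2^{-(N-1)}\bigl(N\sum_k c_k-2\sum_k kc_k\bigr)$, and eliminating the two sums yields $\lambda_x=N2^{N-1}P_x(\tfrac12)-2^{N-2}P_x'(\tfrac12)$.

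The approach contains no deep step: the real content is the structural observation that symmetry of $(X,d,\Gamma)$ places both ${\bs Q}$ and ${\bs D}$ into a commutative ring of real symmetric matrices, which is therefore simultaneously diagonalizable over $\R$ by one $q$-independent matrix. The part demanding genuine care is the construction of ${\bs T}$ itself, namely reconciling the normalization required in (2) with the algebraic-integrality required in (1) — that is, verifying that the eigenspaces $V_\chi$ and their intersections with $\bs 1^{\perp}$ admit bases with entries in the ring of integers of a number field; everything else is bookkeeping or the one-line calculus of the last paragraph.
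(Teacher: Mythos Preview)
Your proposal is correct and follows the same core strategy as the paper: observe that the ${\bs M}_A$ are commuting real symmetric matrices (hence simultaneously diagonalizable by a real ${\bs T}$ independent of $q$), single out the eigenvector $\bs 1$, and read off the eigenpolynomials of ${\bs Q}$ and the eigenvalues of ${\bs D}$ from the expansions \eqref{1.9} and ${\bs D}=\sum_k k\sum_{A\in{\rm Orb}_k}{\bs M}_A$.

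The one genuine point of divergence is the bound in item (4). You argue via characters of the commutative $\R$-algebra $R\otimes_{\Z}\R$: each common eigenspace determines a ring homomorphism $\chi\colon R\to\R$, distinct characters are linearly independent by Dedekind's lemma, and since $\dim_{\R}(R\otimes\R)=r$ there are at most $r$ of them. The paper instead works concretely with the $r$-dimensional subspace $V_{\Gamma_0}$ of $\Gamma_0$-invariant functions, notes it is ${\bs M}_A$-invariant by \eqref{1.13}, and then uses an averaging trick (averaging a given eigenvector over a point stabilizer $\Gamma_y$) to show that every eigenpolynomial of ${\bs Q}$ already occurs in the restriction to $V_{\Gamma_0}$. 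Your character argument is cleaner and more structural; the paper's averaging argument is more hands-on and has the side benefit of exhibiting the faithful $r$-dimensional representation of $R$ on $V_{\Gamma_0}$, which is conceptually tied to the double-coset interpretation of $r$. For item (5) the two arguments are identical up to notation: the paper differentiates the matrix identity ${\bs Q}(q)$ at $q=1/2$ and solves for ${\bs D}$, while you do the same computation coefficient-wise on each $P_x(q)$.
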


\begin{proof}Consider the space $V={\rm Hom}_{\R}(X,\R) $ of
linear functions $f\colon X\longrightarrow \bf R$. It is well known that each
symmetric matrix over $\R$ is diagonalizable, has real eigenvalues
and the family of commuting symmetric matrices ${\bs M}_A$, $A\in
{\rm Orb} $, has a common eigenbasis. In fact, if
$V_{\lambda}={\rm Ker}({\bs M}_A-\lambda {\bs I})$ is the eigenspace
corresponding to eigenvalue $\lambda$ of ${\bs M}_A$ then
$V_{\lambda}$ is ${\bs M}_B$-invariant:
$$
{\bs M}_A{\bs M}_B {\bs v}={\bs M}_B{\bs M}_A {\bs v}=\lambda{\bs
M}_A{\bs v}\,.
$$
Further, by induction on $r=|{\rm Orb}|$ we can conclude that
there is a common eigenbasis for all matrices~${\bs M}_A$.

1. First, since all the entries of ${\bs M}_A$ are
zeroes and ones, all the real eigenvalues are integer algebraic
numbers. Hence we can choose eigenvectors (vector-columns) ${\bs
t}_y$, $y\in X$, in a common eigenbasis with integer algebraic
entries (scaling the eigenvectors by appropriate integer factors
if necessary) and compose a transition matrix ${\bs T}$. Thus, the
first assertion is proved.

2. Moreover, the vector $\bs 1=(1,\dots,1)^\top$ (the
constant function $f(x)\equiv 1$ of $V$) is an eigenvector for
each ${\bs M}_A$ since
\begin{equation}\label{1.15}
{\bs M}_A \bs 1=s_A \bs 1,\quad s_A=
\#\{a\in X\mid (x_0,a)\in A\}.
\end{equation}
 Note that the leading eigenvalue $s_A$ of ${\bs M}_A$ is the cardinality of the $\Gamma_0$-orbit in $X$,
 $\Gamma_0={\rm St}_\Gamma(x_0)$, corresponding to $\Gamma$-orbit $A$ in $X\times X$.
 Each row (column) of ${\bs M}_A$ contains exactly $s_A$ ones.
  We may index the eigenvector $\bs 1$
by $x_0$.

The subspace $V_0=\{f\in V={\rm Hom}_{\R}(X,\R)\mid\sum_{x\in X}
f(x)=0\}$ is invariant for each ${\bs M}_A$. Hence,  we can choose the other
eigenvectors from this subspace.

3. Applying the conjugation by ${\bs T}$ to
(\ref{1.9}) we get
\begin{equation}\label{1.16}
{\bs T}^{-1}{\bs Q}{\bs T}=  \sum_{k=0}^N (1-q)^k
q^{N-k}\sum_{A\in {\rm Orb}_k} {\bs T}^{-1}{\bs M}_A{\bs
T}=\diag(P_x(q))\,
\end{equation}
for appropriate polynomials $P_x(q)$. Each $P_x(q)$ is a linear
combination:
$$
P_x(q)=\sum_{k=0}^N (1-q)^k q^{N-k}\sum_{A\in {\rm
Orb}_k}\lambda_{x,A} , \quad \diag(\lambda_{x,A})={\bs
T}^{-1}{\bs M}_A{\bs T}\,.
$$
Since $\lambda_{x_0,A}=s_A$ (see \ref{1.15}) then
$$P_{x_0}(q)=\sum_{k=0}^N (1-q)^k q^{N-k} \sum_{A\in {\rm
Orb}_k}s_A= \sum_{k=0}^N (1-q)^k q^{N-k}|S_{k}(x_0)|
$$ is the distance polynomial. For the definition of $S_k(x_0)$ see Remark \ref{rem:3:5}. For $q=1$ we have ${\bs Q}(1)={\bs
I}$ and the third assertion is proved.

4. Consider the subspace $V_{\Gamma_0}\subset
V={\rm Hom}_{\R}(X,\R)$ of all functions ${\bs v}=(f(x))$ which
are constant on the $\Gamma_0$-orbits in $X$, that is,
$\Gamma_0$-invariant functions in $V={\rm Hom}_{\R}(X,\R)$. In
view of (\ref{1.13}) the subspace $V_{\Gamma_0}={\rm
Hom}_{\Gamma_0}(X,\R)$  is ${\bs M}_A$-invariant for each $A\in
{\rm Orb}$.

It follows that we have $r={\rm rk}_{\Z}R$-dimensional
representation $\pi\colon R\longrightarrow {\rm End}_{\R}(V_{\Gamma_0})$ such that
$\pi\colon {\bs M}_A\longrightarrow {\bf M}_A$, since by construction the ring $R$ is the $\Z$-linear
span of ${\bs M}_A$. It is not hard to see that the representation
is exact (consider function $f$ such that $f(x_0)=1$ and
$f(x)=0$ if $x\ne x_0$). In what follows we identify the elements
of $\pi(R)$ with the corresponding matrices and
$\Gamma_0$-invariant functions $f(x)$ with the corresponding vectors
${\bs v}=(f(x))$.

Since the ring $R$ is commutative we can find, similar to the proofs of 1 and 3,
$\Gamma_0$-invariant eigenfunctions ${\bs v}_0=(P_0(x))$, \dots,
${\bs v}_{r-1}=(P_{r-1}(x))$ such that $P_0(x)\equiv 1$ and
$$
{\bs Q} {\bs v}_j=P_j(q){\bs v}_j\,,\quad j=0,\dots, r-1\,,
$$
where each $P_j(q)$ is an eigenpolynomial of ${\bs Q}$ coinciding
with one of $P_x(q)$ in (\ref {1.14}). In fact, let ${\bs t}_x$ be
the $x$-column of the transition matrix ${\bs T}$. Then ${\bs
Q}\,{\bs t}_x=P_x(q){\bs t}_x$. If ${\bs t}_x$ corresponds to a
$\Gamma_0$-invariant function in $V_{\Gamma_0}$ the proof is
complete. Otherwise we can suppose that the $y$-component
$t_{y,x}$ of ${\bs t}_x$ is nontrivial and apply the operation of
averaging
$$
{\bs t}_x\mapsto\sum_{\gamma\in \Gamma_y}\gamma{\bs t}_x ={\bs
t}\,,\qquad \Gamma_y={\rm
St}_{\Gamma}(y)=g\Gamma_0g^{-1},\;y=gx_0.
$$
Note that ${\bs t}$ corresponds to a $\Gamma_y$-invariant function
and $y$-component ${\bs t}$ is  equal to $|\Gamma_y|\,t_{y,x}$,
i.e., non-trivial. The representation $\pi_y\colon R\longrightarrow {\rm
End}_{\R}(V_{\Gamma_y})$ is equivalent to  $\pi\colon R\longrightarrow {\rm
End}_{\R}(V_{\Gamma_0})$ since $\Gamma$ acts transitively on $X$.
In view of (\ref{1.13}) each ${\gamma{\bs t}_x}$ is a common
eigenvector of all ${\bs M}_A$ and, consequently, of ${\bs Q}$,
corresponding to the eigenvalue $P_x(q)$, so is ${\bs t}$. Since
${\bs t}\ne {\bs 0}$ the proof is complete.

5. Finally, direct calculation yields the equality
$$
{\bs Q}'\left(1/2\right)=2N{\bs
Q}\left(1/2\right)-\frac{1}{2^{N-2}}{\bs D}\,,
$$
whence
$${\bs D}=N 2^{N-1}{\bs Q}\left(1/2\right)-
2^{N-2}{\bs Q}'\left(1/2\right)\,$$ and $$ {\bs T}^{-1}{\bs D}{\bs
T}=N 2^{N-1}{\bs T}^{-1}{\bs Q}\left(1/2\right){\bs T}-
2^{N-2}{\bs T}^{-1}{\bs Q}'\left(1/2\right){\bs
T}\,=\diag(\lambda_x)\,,$$ $\lambda_x=N
2^{N-1}P_x\left(1/2\right)-2^{N-2}P'_x\left(1/2\right)$.

By virtue of Assertion 4 there are at most $r={\rm
rk}_{\Z}R$ different eigenvalues of ${\bs D}$.

The theorem is
proved.
\end{proof}

\begin{example}\label{ex:3:11} In \cite{semenov2016eigen} we considered the quasispecies symmetric triple
$(X,d,\Gamma)$ where $X=\{0,1\}^n$ is the binary hypercube with
the Hamming metric $d$ of dimension $n=N= \diam X$   and $\Gamma$
is the hyperoctahedral group of order $2^n\cdot n!$.

Let $x_0=0=[0,\dots,0]\in X$ (the binary representation). Then
$\Gamma_0={\rm St}_{\Gamma}(x_0)\cong S_N$ and there are exactly
$r=N+1$ $\Gamma_0$-orbits $A_0$, \dots, $A_N$ in $X$, namely the
spheres $A_k=S_k(x_0)$ of cardinalities ${\binom{N}{k}}$. For
${\bs Q}$ there are exactly $r=N+1$ different eigenpolynomials
$P_k(q)=(2q-1)^k$, $k=0,\dots, N$, of multiplicities
$|A_k|={\binom{N}{k}}$.

We also considered the simplicial symmetric triple $(X,d,\Gamma)$
where $X$ is the 0-skeleton of the regular simplex such that
$|X|=n+1$ with unit distances between different vertices. The
group $\Gamma\cong S_{n+1}$ and $\Gamma_0\cong S_{n}$.

There are exactly $r=2$ $\Gamma_0$-orbits $A_0=\{x_0\}$,
$A_1=X\setminus A_0$ in $X$, namely, the spheres $A_0=S_0(x_0)$,
$A_1=S_1(x_0)$ of cardinalities 1, $n$. For ${\bs Q}$ there are
$r=2$ different eigenpolynomials, namely, $P_0(q)=q+n(1-q)$ of
multiplicity 1, $P_1(q)=2q-1$  of multiplicity $n$.
\end{example}

Together with the calculations we present below and summarized in a table form in Appendix \ref{ap:1} Example \ref{ex:3:11} prompts us to formulate the following conjecture.

\begin{conjecture}\label{conj1.11}
For an arbitrary orbital ring $R=(X,d,\Gamma)$ the eigenpolynomials of the corresponding mutation matrix ${\bs Q}$  can be
enumerated by $A\in {\rm Orb}$ and there are  exactly $r={\rm
rk}_{\Z}R$ different eigenpolynomials $P_A(q)$ of ${\bs Q}$ of
multiplicities $m_A$. It follows that
$$
\sum_{A\in {\rm Orb}} m_A =l=|X|\,.
$$
In addition, matrix ${\bs T}$ in Theorem \ref{thm1.9} can be
chosen to be symmetric.
\end{conjecture}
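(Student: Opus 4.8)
The plan is to recast the whole statement in the language of association schemes. By the Lemma the matrices $\bs M_A$, $A\in{\rm Orb}$, are symmetric and commute, and they satisfy the coherence relation \eqref{1.11}; together with $\bs M_\Delta=\bs I$ and \eqref{1.10} this says precisely that $(X,\{A\}_{A\in{\rm Orb}})$ is a \emph{symmetric association scheme} and that $R$ is the $\Z$-form of its Bose--Mesner algebra. Put $\mathcal A:=R\otimes_\Z\R$. As a commutative semisimple algebra of real symmetric matrices, $\mathcal A$ is split: $\mathcal A\cong\R^{\,r}$ with $r={\rm rk}_\Z R$, and it carries a complete system of pairwise orthogonal primitive idempotents $E_0=\tfrac1l\bs E,\,E_1,\dots,E_{r-1}$, each $E_j$ a real symmetric projector of rank $m_j$ with $\sum_j m_j=l$, such that $\bs M_A=\sum_j p_j(A)E_j$ for scalars $p_j(A)=\chi_j(\bs M_A)$, $\chi_j\colon\mathcal A\to\R$ the $j$-th character. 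Substituting into \eqref{1.9} gives
\[
\bs Q(q)=\sum_{j=0}^{r-1}P_j(q)\,E_j,\qquad P_j(q)=\sum_{k=0}^N(1-q)^kq^{N-k}\,\chi_j(\bs A_k),\qquad \bs A_k:=\sum_{A\in{\rm Orb}_k}\bs M_A,
\]
where $\bs A_k$ has a $1$ in position $(x,y)$ exactly when $d(x,y)=k$. This already delivers the ``book‑keeping'' part of the conjecture: the eigenpolynomials of $\bs Q$ are among $P_0,\dots,P_{r-1}$, with multiplicities $m_0,\dots,m_{r-1}$, and $\sum_j m_j={\rm rank}\,\bs I=l$. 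Since $\bs 1$ spans ${\rm Im}\,E_0$ while ${\rm Im}\,E_j\subseteq V_0$ for $j\neq0$ (because $E_0E_j=0$ forces $\bs 1^\top E_j=0$), and since the $E_j$ have algebraic‑integer entries (they are fixed polynomials in the integer matrices $\bs M_A$, up to scaling), one recovers exactly the structure of Theorem \ref{thm1.9}. What is \emph{not} automatic, and what keeps the statement a conjecture, is (a) that the $r$ polynomials $P_j$ are pairwise distinct (so that the distinct eigenpolynomials are genuinely indexed by ${\rm Orb}$), and (b) that $\bs T$ may be taken symmetric.

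For (a) I would first record a reduction. Since $\{(1-q)^kq^{N-k}\}_{k=0}^N$ is a basis of $\R[q]_{\le N}$, one has $P_j=P_k$ iff $\chi_j(\bs A_m)=\chi_k(\bs A_m)$ for all $m$, i.e.\ iff $\chi_j$ and $\chi_k$ agree on the \emph{metric subalgebra} $\mathcal A':=\R[\bs A_0,\dots,\bs A_N]\subseteq\mathcal A$; as $\mathcal A'$ is again split semisimple, the number of distinct $P_j$ equals $\dim_\R\mathcal A'$. Hence the distinctness claim is \emph{equivalent} to $\mathcal A'=\mathcal A$, i.e.\ to the orbital ring being generated by its distance matrices $\bs A_0,\dots,\bs A_N$. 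This is where the triangle inequality — so far unused, cf.\ the graded ring ${\rm gr}\,R$ — must enter: using the filtration $\Z=R_0<R_1<\dots<R_N=R$ with $R_i\cdot R_j\subseteq R_{i+j}$, the idea is to show that multiplication by $\bs A_1$ raises the filtration degree by exactly one, so that $\bs A_1$ (hence all $\bs A_k$) generate $R$. When the distance‑$1$ graph $\Gamma_1$ on $X$ is distance‑transitive this is classical — the scheme is then $P$‑polynomial and $r=N+1$ — and it covers the $m$‑gon and hyperoctahedron landscapes of Section \ref{sec:5}. The honest difficulty, and what I expect to be the main obstacle, is that $\mathcal A'=\mathcal A$ can fail when $\Gamma$ is transitive on $X$ but $\Gamma_1$ is not distance‑transitive (so $r>N+1$): then the $\bs A_k$ are too coarse to separate all $r$ characters, and some $P_j$ coincide. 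I therefore expect the conjecture in full generality to require the additional hypothesis that $\Gamma_1$ be distance‑transitive (equivalently $r=N+1$), or else a case‑by‑case verification for the polytope landscapes of interest; proving it verbatim for an arbitrary orbital ring may in fact be false.

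For (b), assume the distinctness of part (a). Then one builds $\bs T$ eigenspace by eigenspace: for each $j$ choose a set $B_j\subseteq X$ of $m_j$ coordinates for which the columns $\{E_je_x:x\in B_j\}$ of $E_j$ are linearly independent, and declare the $x$‑column of $\bs T$, for $x\in B_j$, to be an $\R$‑linear combination of those columns of $E_j$; any such choice gives an invertible $\bs T$ diagonalising all of $\mathcal A$ and, by the remark above, automatically satisfying Theorem \ref{thm1.9}(2) (column $x_0$ proportional to $\bs 1$, the rest summing to $0$). Symmetry $\bs T=\bs T^\top$ becomes a linear system in the coefficients of these combinations; because $(E_j)_{x,y}$ depends only on the ${\rm Orb}$‑class of $(x,y)$ (these are the dual, ``$\bs Q$‑matrix'', eigenvalues of the scheme), the system inherits the combinatorial regularity of the scheme, and I would solve it by first treating the generic ``$m_j=1$'' situation — where the columns of $\bs T$ can literally be read off the rows of the dual eigenmatrix — and then bootstrapping to the higher‑multiplicity blocks. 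I regard this part as technically delicate but clearly secondary to (a); the crux of the conjecture is the metric generation of the orbital ring.
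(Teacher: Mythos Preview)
The statement you are addressing is labelled a \emph{Conjecture} in the paper and is not proved there; it is put forward on the strength of the worked examples (hypercube, simplex, $m$-gon, hyperoctahedron, dodecahedron, icosahedron) collected in Appendix~\ref{ap:1}. So there is no ``paper's own proof'' to compare against---only the evidence of Tables~\ref{tab:1} and~\ref{tab:2}, in all of which $r=N+1$ and the scheme is distance-regular.

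Your recasting in the language of symmetric association schemes and the Bose--Mesner algebra is correct and is exactly the right framework. The spectral decomposition $\bs Q=\sum_jP_j(q)E_j$ with $\sum_j m_j=l$ that you extract is not new content relative to the paper: it is essentially a repackaging of Theorem~\ref{thm1.9}(3)--(4), which already gives the ``at most $r$'' bound. What the conjecture adds is precisely the two points you isolate, and your reduction of~(a) is sharp: since $\{(1-q)^kq^{N-k}\}_{k=0}^N$ is a basis of $\R[q]_{\le N}$, the $r$ polynomials $P_j$ are pairwise distinct iff the characters $\chi_j$ are separated by the ``distance matrices'' $\bs A_k=\sum_{A\in{\rm Orb}_k}\bs M_A$, i.e.\ iff the $\bs A_k$ generate $R\otimes\R$. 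In the $P$-polynomial (distance-regular) situation each ${\rm Orb}_k$ is a singleton, so $\bs A_k=\bs M_{A_k}$ and the generation is trivial---this covers every example the paper offers and explains why the authors are led to the conjecture. Your caution that for a general symmetric homogeneous triple with $r>N+1$ the metric subalgebra $\R[\bs A_0,\dots,\bs A_N]$ may be proper, and hence some $P_j$ may coincide, is well placed; the conjecture as stated has no mechanism to rule this out, and you are right to flag it as the genuine obstruction rather than to paper over it.

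For~(b), your plan is reasonable but under-specified. Building $\bs T$ column-by-column from the idempotents $E_j$ and then imposing $\bs T=\bs T^\top$ as a linear system does not by itself show the system is consistent; you would need an argument that the freedom in choosing bases of the eigenspaces ${\rm Im}\,E_j$ is large enough to match the cross-block symmetry constraints. In the paper's examples the symmetric $\bs T$ is produced by an explicit ansatz (Fourier-type for the $m$-gon, block construction for the hyperoctahedron), not by a general scheme-theoretic argument, so this part remains genuinely open as well. Your instinct that~(b) is secondary to~(a) is correct.
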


\section {$G$-invariant homogeneous symmetric $\Gamma$-landscapes, $G\leqslant \Gamma$}\label{sec:4}
Having at our disposal the orbital ring associated with the triple $(X,d,\Gamma)$ and, correspondingly, the spectral properties of $\bs Q$, we are in position to consider the eigenvalue problem \eqref{1.2}. To make progress we restrict ourselves to some special fitness landscapes, which are constant along $G$-orbits, where $G\leqslant \Gamma$.
\subsection{Reduced problem}\label{sec:4:1} Let $(X,d,\Gamma, {\bs w})$ be a homogeneous
$\Gamma$-landscape ($\Gamma\leqslant{\rm Iso}(X)$) and let
$G\leqslant \Gamma$ be a fixed subgroup. If $A$ is a $G$-orbit then $(A,d)$ is a metric subspace
of $(X,d)$ on which $G$ acts transitively by isometries. Consider
the restriction ${\bs w}|_A$. Thus, the quadruple $(A,d,G, {\bs
w}|_A)$ can be viewed as a homogeneous $G$-sublandscape of
$(X,d,\Gamma, {\bs w})$.

\begin{definition}We call $\Gamma$-landscape $(X,d,\Gamma, {\bs w})$
$G$-invariant if the fitness function ${\bs w}$ is constant on
each $G$-orbit $A$ of $G$-action on $X$, that is, ${\bs
w}(A)\equiv w_A\geq 0$.
\end{definition}
For instance, for the trivial subgroup $G=\{1\}$ each
homogeneous $\Gamma$-landscape is $G$-invariant.


Let a $\Gamma$-landscape $(X,d,\Gamma, {\bs w})$ be symmetric
and $G$-invariant. We suppose that $G$-invariant fitness
function ${\bs w}$ has at least two values. We will also assume
that there is a decomposition
\begin{equation}{\label{2.2}} X=A_0\sqcup\bigsqcup_{i=1}^t A_i\,
\end{equation}
such that $A_0$ is a union of $G$-orbits on which ${\bs
w}(A_0)\equiv w\geq 0$, and each $A_i$, $i=1,\dots, t$, is just a
single $G$-orbit  on which ${\bs w}(A_i)\equiv w+s_i$, where
$s_i>0$ ($s_i$ are not necessarily different). Then fitness matrix ${\bs W}$ can be
represented as follows
\begin{equation}{\label{2.3}}
{\bs W}= w{\bs I}+\sum_{i=1}^t s_i{\bs E}_{A_i}\;,
\end{equation}
$\bs{ I}$ being the identity matrix and ${\bs E}_{A_i}$ being the
projection matrix   with the  only nontrivial entries $e_{aa}=1$,
$a\in A_i$, on the main diagonal.

We want to solve problem (\ref{1.2}). In view of (\ref{2.3})
equation (\ref{1.2}) reads
$$
w\bs{ Q \hat p}+{\bs Q}\sum_{i=1}^t s_i{\bs
E}_{A_i}\bs{\hat p}=P_X(q)\overline{w}\bs{\hat p},
$$
whence
\begin{equation}{\label{2.4}}
(\overline{w}P_X(q){\bs I}-w{\bs Q})\bs{\hat p}={\bs Q}\sum_{i=1}^t
s_i{\bs E}_{A_i}\bs{\hat p}.
\end{equation}

For the matrix 1-norm we have $\| w{\bs
Q}\|_1=P_X(q)w<P_X(q)\overline{w}=\|
P_X(q)\overline{w}{\bs I}\|_1$. Consequently, the matrix
$P_X(q)\overline{w}{\bs I}-w{\bs Q}$ is non-singular and we
obtain the equality
$$
\bs{\hat p}=(P_X(q)\overline{w}{\bs I}-w{\bs Q})^{-1}\,{\bs Q}\bs v,\quad
\bs v=\sum_{i=1}^t s_i{\bs E}_{A_i}\bs{\hat p}.
$$
Multiplying the last equality by $\sum_{i=1}^t s_i{\bs E}_{A_i}$
yields
\begin{equation}{\label{2.5}}
\bs v=\sum_{i=1}^t s_i{\bs E}_{A_i}\bs{\hat p}=\sum_{i=1}^t s_i{\bs
E}_{A_i}(\overline{w}P_X(q){\bs I}-w{\bs Q})^{-1}\,{\bs Q}\bs v.
\end{equation}

Denoting \begin{equation}{\label{2.6}} {\bs M}=\sum_{i=1}^t s_i{\bs
E}_{A_i}(\overline{w}P_X(q){\bs I}-w{\bs Q})^{-1}\,{\bs Q}\,
\end{equation}
we can rewrite (\ref{2.5}) as
\begin{equation}{\label{2.7}}
\bs v={\bs M}\bs v\,,\quad \bs v=\sum_{i=1}^t s_i{\bs E}_{A_i}\bs{\hat p},
\end{equation}
 and hence vector $\bs v$ is an eigenvector of ${\bs M}$
corresponding to the eigenvalue $\lambda=1$.

Considering $\overline{w}$ in (\ref{2.6}), (\ref{2.7}) as a parameter we now concentrate on the following \textit{reduced problem}: To find the eigenvector
$\bs v$ satisfying (\ref{2.7}) and corresponding to the eigenvalue
$\lambda=1$ of matrix ${\bs M}$ defined in (\ref{2.6}).

\begin{remark} Expanding the  right-hand side of (\ref{2.6}) we get
\begin{equation}{\label{2.8}}{\bs
M}=\frac{1}{\overline{w}P_X(q)}\sum_{i=1}^t s_i
\sum_{m=0}^\infty\left(\frac{w}{\overline{w}P_X(q)}\right)^m\,{\bs
E}_{A_i}{\bs Q}^{m+1}\,.
\end{equation}

The parameter $\overline{w}=\overline{w}(q)$ satisfies the formula
\begin{equation}{\label{2.9}}
\overline{w}=w+\sum_{i=1}^ts_i\sum_{a\in
A_i}p_a=w+\sum_{i=1}^ts_i\| {{\bs E}_{A_i}p}\|_1\;.
\end{equation}

\end{remark}

\subsection{Equation for the leading eigenvalue $\overline{w}$ }

Here, using the notation and results from Section \ref{sec:4:1} we show that there exists an algebraic equation of degree at
most $t\cdot {\rm rk}_{\Z}R(X,d,\Gamma)$ for $\overline{w}$. Here
$R=R(X,d,\Gamma)$ is the orbital ring defined in Section~\ref{subs1.3}.

We can rewrite (\ref{2.6}), (\ref{2.7}) as follows
($\overline{w}$, defined in (\ref{2.9}), is considered to be a parameter here)
\begin{equation}{\label{2.12}} {\bs M}=\sum_{i=1}^t s_i{\bs E}_{A_i}{\bs L}\,,\quad {\bs
L}={\bs Q}(\overline{w}P_X(q){\bs I}-w{\bs Q})^{-1},
\end{equation}
\begin{equation}{\label{2.13}}
\sum_{i=1}^t s_i{\bs E}_{A_i}\bs{\hat p}=\sum_{i=1}^t s_i{\bs E}_{A_i}{\bs
L}\sum_{k=1}^t s_k{\bs E}_{A_k}\bs{\hat p},\quad \bs v=\sum_{i=1}^t s_i{\bs
E}_{A_i}\bs{\hat p}.
\end{equation}
Since ${\bs E}_{A_j}$ is a projection matrix, ${\bs
E}^2_{A_j}={\bs E}_{A_j}$, ${\bs E}_{A_j}{\bs E}_{A_i}={\bs 0}$
when $i\ne j$ and $s_j>0$, we can multiply both sides of
(\ref{2.13}) by ${\bs E}_{A_j}$. Then we obtain $t$ equalities
\begin{equation}{\label{2.14}}
{\bs E}_{A_j}\bs{\hat p}=\sum_{k=1}^t s_k{\bs E}_{A_j}{\bs L}{\bs
E}_{A_k}\bs{\hat p},\quad j=1,\dots, t.
\end{equation}

\begin{lemma}\label{lem:4:3} Let $\Gamma$-landscape be symmetric and $G$-invariant. Then not-trivial positive vector $\bs{\hat p}$ corresponding to the dominant eigenvalue $\overline w$ is constant on the
$G$-orbits $A_i$:
\begin{equation}{\label{2.15}}
({\bs E}_{A_i}\bs{\hat p})_x\equiv C_i>0\,,\;x\in A_i,\quad ({\bs
E}_{A_i}\bs{\hat p})_x\equiv 0 \,,\;x\notin A_i\,,\qquad i=1,\dots, t.
\end{equation}
\end{lemma}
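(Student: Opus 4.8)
The plan is to exploit the symmetry of the problem together with the uniqueness of the Perron–Frobenius eigenvector. First I would observe that since the $\Gamma$-landscape is symmetric and $G$-invariant, the fitness matrix $\bs W = w\bs I + \sum_{i=1}^t s_i \bs E_{A_i}$ commutes with the permutation action of $G$ on $\R^l$ (because each set $A_i$ is a single $G$-orbit, so the diagonal matrix $\bs E_{A_i}$ is $G$-invariant). Likewise the mutation matrix $\bs Q$ is $\Gamma$-invariant, hence a fortiori $G$-invariant, since $d(\gamma x,\gamma y)=d(x,y)$ for all $\gamma\in\Gamma\geqslant G$; this is precisely the content of \eqref{1.13} applied to the whole sum \eqref{1.9}. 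Therefore the evolutionary matrix $\bs Q\bs W$ commutes with the $G$-action: for every $g\in G$, writing $\bs\Pi_g$ for the permutation matrix of $g$, we have $\bs\Pi_g(\bs Q\bs W)\bs\Pi_g^{-1}=\bs Q\bs W$.

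Next I would invoke the Perron–Frobenius theorem, which guarantees that the dominant eigenvalue $P_X(q)\overline w$ of $\bs Q\bs W$ is simple and that its eigenvector $\bs{\hat p}$ is the unique (up to positive scaling) strictly positive eigenvector. If $\bs{\hat p}$ solves $\bs Q\bs W\bs{\hat p}=P_X(q)\overline w\,\bs{\hat p}$, then for any $g\in G$ the vector $\bs\Pi_g\bs{\hat p}$ satisfies $\bs Q\bs W(\bs\Pi_g\bs{\hat p})=\bs\Pi_g(\bs Q\bs W)\bs{\hat p}=P_X(q)\overline w\,(\bs\Pi_g\bs{\hat p})$, so it is again a strictly positive eigenvector for the same eigenvalue. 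By simplicity, $\bs\Pi_g\bs{\hat p}=c_g\bs{\hat p}$ for some scalar $c_g>0$; comparing, say, the sum of entries (which is preserved by the permutation $\bs\Pi_g$) forces $c_g=1$. Hence $\bs{\hat p}$ is $G$-invariant: $\hat p_{gx}=\hat p_x$ for all $g\in G$, $x\in X$. Since $A_i$ is a single $G$-orbit, $\bs{\hat p}$ is constant on $A_i$, say equal to $C_i$; positivity of $\bs{\hat p}$ gives $C_i>0$. Because $\bs E_{A_i}$ zeros out every coordinate outside $A_i$ and leaves those in $A_i$ fixed, this is exactly \eqref{2.15}.

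The one point that needs a little care — and which I expect to be the main (though still minor) obstacle — is the justification that the normalization constant $c_g$ equals $1$ rather than merely being positive. The cleanest route is the remark above that every permutation matrix is doubly stochastic, so $\mathbf 1^\top\bs\Pi_g\bs{\hat p}=\mathbf 1^\top\bs{\hat p}=1$ by the normalization in \eqref{1.2}, and simultaneously $\mathbf 1^\top(c_g\bs{\hat p})=c_g$; hence $c_g=1$. Alternatively one could argue coordinate-wise: pick any $x$ with $gx=x$ is not available in general, so the doubly-stochastic argument is the safe one. No deeper structure (the orbital ring, the transition matrix $\bs T$, the reduced problem) is needed for this lemma; it is a pure symmetry-plus-uniqueness statement, and the orbital-ring machinery only enters afterwards when one counts the degree of the resulting algebraic equation for $\overline w$.
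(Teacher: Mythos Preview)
Your proposal is correct and follows essentially the same route as the paper: both arguments show that the $G$-action commutes with $\bs Q\bs W$ (via \eqref{1.13} and $G$-invariance of $\bs w$), deduce that each $g\bs{\hat p}$ is again a positive dominant eigenvector, and then use Perron--Frobenius uniqueness together with the normalization $\sum_x \hat p_x=1$ in \eqref{1.2} to force $g\bs{\hat p}=\bs{\hat p}$. The paper phrases the last step via averaging $\bs{\hat p}\mapsto |G|^{-1}\sum_{g\in G} g\bs{\hat p}$ rather than handling each $g$ separately, but this is a cosmetic difference.
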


\begin{proof} 

Take some solution $\bs z$ of problem \eqref{1.2}.
In fact, the
multiplication by ${\bs Q}$ commutes (see (\ref{1.13})) with
$G$-action $\bs{z}\mapsto g\bs{z}$, $g\bs{z}(x)=\bs{z}(g^{-1}x)$. Then ${\bs Q}{\bs{
w z}}=P_X(q)\overline{w}\bs{z}$ is equivalent to $g{\bs Q}g^{-1}
g{\bs w}g^{-1} g\bs{z}=P_X(q)\overline{w}\,g\bs{z}$, or, in view of
$G$-invariance, ${\bs Q}{\bs w}\,g\bs{z}=P_X(q)\overline{w}\,g\bs{z}$.
Hence, $g\bs{z}=g\bs{z}(x)$, $g\in G$, is also a solution. The averaging
$\bs{z}(x)\mapsto |G|^{-1}\sum g\bs{z}(x)$ provides a $G$-invariant
solution. In view of the Perron--Frobenius theorem the averaged solution  is proportional to $\bs z$ and moreover, is equal to $\bs z$ due to the last condition of \eqref{1.2}.
\end{proof}

The constants $C_i$ in \eqref{2.15} are to be normalized in such a way that
\begin{equation}{\label{2.16}}
\sum_{x\in A_0}\hat p_x+\sum_{i=1}^t C_i\,|A_i|=1\,.
\end{equation}
Then in view of (\ref{2.9}) we get
\begin{equation}{\label{2.17}}
\overline{w}=w+\sum_{i=1}^t s_i C_i\,|A_i|\;.
\end{equation}

\medskip In this case let $a\in A_j$ be a fixed point. The
equality ({\ref{2.14}) implies
\begin{equation}{\label{2.18}}
C_j=\sum_{k=1}^t s_k C_k \sum_{b\in A_k}l_{ab}\,,\qquad
(l_{ab})={\bs L}\,,\;\;j=1,\dots, t.
\end{equation}

\begin{proposition}\label{thm2.2} If $A_j$, $A_k$ are two $G$-orbits then
the inner sum $\sum_{b\in A_k}l_{ab}$ in \eqref{2.18} does not depend on the
choice of $a\in A_j$.
\end{proposition}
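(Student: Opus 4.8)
The plan is to exploit the $\Gamma$-invariance of the matrix $\bs{L}$ together with the fact that the orbits $A_j, A_k$ are $G$-orbits with $G\leqslant\Gamma$. First I would observe that $\bs{L}={\bs Q}(\overline{w}P_X(q){\bs I}-w{\bs Q})^{-1}$ is a polynomial (indeed a power series, by \eqref{2.8}, which converges since $\|w\bs Q\|_1<\overline w P_X(q)$) in $\bs Q$ alone, hence lies in the orbital ring $R(X,d,\Gamma)$; in particular $\bs L$ commutes with the $\Gamma$-action on $V={\rm Hom}_{\R}(X,\R)$ exactly as in \eqref{1.13}, so $l_{\gamma a,\gamma b}=l_{a,b}$ for all $\gamma\in\Gamma$ and all $a,b\in X$. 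This is the crucial structural input and it is already established in the excerpt (equation \eqref{1.13} applied to each ${\bs M}_A$, hence to any element of $R$).

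Next, fix $a,a'\in A_j$. Since $A_j$ is a single $G$-orbit, there is $g\in G\leqslant\Gamma$ with $ga=a'$. Then
\begin{equation*}
\sum_{b\in A_k} l_{a'b}=\sum_{b\in A_k} l_{ga,b}=\sum_{b\in A_k} l_{ga,g(g^{-1}b)}=\sum_{b\in A_k} l_{a,g^{-1}b},
\end{equation*}
using $\Gamma$-invariance of $\bs L$ in the last step. Now as $b$ ranges over $A_k$, the element $g^{-1}b$ ranges over $g^{-1}A_k$; but $A_k$ is a $G$-orbit and $g\in G$, so $g^{-1}A_k=A_k$, i.e. the substitution $b\mapsto g^{-1}b$ is a bijection of $A_k$ onto itself. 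Hence $\sum_{b\in A_k} l_{a,g^{-1}b}=\sum_{b'\in A_k} l_{a,b'}$, which proves that the inner sum $\sum_{b\in A_k} l_{ab}$ is independent of the choice of $a\in A_j$.

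There is no real obstacle here; the only point that deserves care is the justification that $\bs L\in R(X,d,\Gamma)$ and therefore commutes with $\Gamma$ — one must note that $\overline{w}P_X(q){\bs I}-w{\bs Q}$ is invertible (already shown via the $1$-norm estimate preceding \eqref{2.12}) and that its inverse, being expressible as the convergent Neumann series $\sum_{m\ge0}\bigl(w/(\overline w P_X(q))\bigr)^m\bs Q^m$ up to the scalar $1/(\overline w P_X(q))$, is a limit of polynomials in $\bs Q$ and hence still commutes with the $\Gamma$-action (the relation $\gamma\bs Q=\bs Q\gamma$ from \eqref{1.13} passes to polynomials in $\bs Q$ and to norm-limits thereof). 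Alternatively, one can bypass the series entirely: since $\bs Q$ commutes with $\Gamma$, so does any polynomial in $\bs Q$, so does $\overline w P_X(q)\bs I-w\bs Q$, and therefore so does its inverse; multiplying by $\bs Q$ again keeps the $\Gamma$-equivariance. Either way the entrywise invariance $l_{\gamma a,\gamma b}=l_{a,b}$ follows, and the proposition drops out. I would present the second, series-free argument as the main line and mention the series only as motivation.
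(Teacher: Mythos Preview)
Your proof is correct and follows essentially the same route as the paper: both arguments hinge on the fact that $\bs L$, being built from $\bs Q$ (equivalently, a combination of the orbital matrices $\bs M_A$), commutes with the $\Gamma$-action by \eqref{1.13}, and hence the sum over the $G$-orbit $A_k$ is unchanged when $a$ is replaced by $ga$. The only difference is presentational --- you phrase the invariance entrywise as $l_{\gamma a,\gamma b}=l_{a,b}$, while the paper writes the same computation in the vector form $\bs 1_a^\top \bs L\,\bs 1_{A_k}=\bs 1_{ga}^\top \bs L\,\bs 1_{A_k}$ using $g\bs 1_{A_k}=\bs 1_{A_k}$.
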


\begin{proof} Note that
$$
S_{a,A_k}=\sum_{b\in A_k}l_{ab}=\bs{ 1}^\top_a \bs L \bs{
1}_{A_k},
$$
where ${\bs 1}_a$ and ${\bs 1}_{A_k}$ are vector-columns
corresponding to the characteristic functions of the sets $\{a\}$
and $A_k$ respectively. For ${\bs L}$, given by (\ref{2.12}) and
considered as a kind of resolution (see (\ref{2.24}) below),  we
may assert  in view of (\ref{1.9}) that
$$
{\bs L}=\sum_{A\in {\rm Orb}}h_A(q) {\bs M}_A \;,
$$
$h_A(q)$ being some rational functions depending also on $w$,
$\overline{w}$ (see below the conjugate matrix ${{\bs T}^{-1}{\bs
L}{\bs T}}$). From (\ref{1.13}) we know that $\Gamma$- and,
consequently, $G$-action commute with the multiplication by each
${\bs M}_A$, hence, by ${\bs L}$. Then for $g\in G$
$$
S_{a,A_k}={\bs 1}^{\top}_a\bs L\bs{ 1}_{A_k}=({g^{-1}\bs
1}^\top)_a\bs L {g\bs 1}_{A_k}={\bs 1}^{\top}_{ga}\bs L
{\bs 1}_{A_k}=S_{ga,A_k},
$$
since ${\bs 1}_{A_k}$  corresponds to the characteristic function
of the set $A_k$, which is $G$-orbit. Since $ga$, $g\in G$, run
over $G$-orbit $A_j$ the proposition is proved.
\end{proof}

 In what follows we use notation $F_{jk}=S_{a,A_k}$ for any choice $a\in
A_j\,$. Let us conjugate ${\bs L}$ by the transition matrix ${\bs
T}$. In view of Theorem \ref{thm1.9} we obtain
\begin{align*}
F_{jk}&={\bs 1}^{\top}_a{\bs T}{\bs T}^{-1}{\bs L}{\bs T} {\bs
T}^{-1}{\bs 1}_{A_k}={\bs 1}^{\top}_a{\bs T}{\bs T}^{-1}{\bs
Q}{\bs T}(\overline{w}P_X(q){\bs I}-w{\bs T}^{-1}{\bs Q}{\bs
T}^{-1})^{-1}{\bs T}^{-1}{\bs 1}_{A_k}=\\
&={\bs 1}^{\top}_a{\bs T}\diag\left(\frac{P_x(q)}{\overline{w}P_X(q)-wP_x(q)}\right)
{\bs T}^{-1}{\bs
1}_{A_k}=\sum_{c=0}^{r-1}\frac{G^c_{jk}P_c(q)}{\overline{w}P_X(q)-wP_c(q)}\,,\quad
P_0(q)=P_X(q),
\end{align*}
for real algebraic numbers
\begin{equation}\label{2.19}
G^c_{jk}=\sum_{x: P_x(q)=P_c(q)}\sum_{b\in
A_k}t_{ax}t^{(-1)}_{xb}\,,\quad (t_{ax})={\bs
T},\;(t^{(-1)}_{xb})={\bs T}^{-1},\, a\in A_j,
\end{equation} since there are
at most $r={\rm rk}_{\Z}R=|{\rm Orb}|$ different eigenpolynomials
$P_x(q)$ in (\ref{1.14}).

Thus,
\begin{equation}\label{2.20}
F_{jk}=F_{jk}(q,w,\overline{w})= \sum_{c=0}^{r-1}\frac{G_{jk}^c
P_c(q)}{\overline{w}P_X(q)-wP_c(q)}\;.
\end{equation}
System  ({\ref{2.18}) now reads
\begin{equation}{\label{2.21}}
C_j=\sum_{k=1}^t F_{jk}s_k C_k \,,\qquad j=1,\dots, t.
\end{equation}
For the square matrix  ${\bs F}={\bs F}(\overline{w})=(F_{jk})$ of
order $t$, the positive diagonal matrix ${\bs S}=\diag(s_1,\ldots,s_t)$, and
the positive vector-column ${\bs c}=(C_k)$ we consequently have
\begin{equation}{\label{2.22}}
{\bs c}=\bs{FSc}.
\end{equation}

Summarizing the arguments in this section we thus have proved the following
\begin{theorem}\label{th:4:5}
Let $\Gamma$-landscape be symmetric and $G$-invariant. Then the dominant eigenvalue $\overline{w}$ of problem \eqref{1.2} satisfies the equation
\begin{equation}{\label{2.23}}
\det({\bs F}(\overline{w})-\bs{S}^{-1})=0\,.
\end{equation}
In view of (\ref{2.20}) this is an algebraic equation of degree at
most $t\cdot r=t\cdot {\rm rk}_\Z R(X,d,\Gamma)$ with coefficients
depending on $q$.
\end{theorem}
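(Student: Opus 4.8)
The plan is to assemble Theorem~\ref{th:4:5} from the machinery already developed, treating $\overline{w}$ as a formal parameter throughout and only at the end invoking Perron--Frobenius to pin down which root is the dominant eigenvalue. First I would recall the chain of reductions from Section~\ref{sec:4:1}: equation \eqref{1.2} with the decomposition \eqref{2.3} of $\bs W$ is equivalent to the eigenvalue-one problem \eqref{2.7} for $\bs M$, and after multiplying by the orthogonal projections $\bs E_{A_j}$ this becomes the finite system \eqref{2.14}. Lemma~\ref{lem:4:3} then collapses \eqref{2.14} from an $l$-dimensional statement to the $t$-dimensional one \eqref{2.18}, because the Perron vector $\bs{\hat p}$ is constant on each $G$-orbit $A_i$ with value $C_i$. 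So the first block of the proof is simply to cite \eqref{2.18} and Proposition~\ref{thm2.2}, which tells us the coefficients $F_{jk}=\sum_{b\in A_k}l_{ab}$ are well-defined independently of the choice of $a\in A_j$.

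The heart of the argument is the explicit formula \eqref{2.20} for $F_{jk}$ as a rational function of $q,w,\overline{w}$. Here I would conjugate $\bs L = \bs Q(\overline{w}P_X(q)\bs I - w\bs Q)^{-1}$ by the transition matrix $\bs T$ of Theorem~\ref{thm1.9}: since $\bs T^{-1}\bs Q\bs T = \diag(P_x(q))$, we immediately get $\bs T^{-1}\bs L\bs T = \diag\!\bigl(P_x(q)/(\overline{w}P_X(q)-wP_x(q))\bigr)$, and contracting with the characteristic vectors $\bs 1_a$, $\bs 1_{A_k}$ yields \eqref{2.20}, where the algebraic numbers $G_{jk}^c$ collect the contributions of all $x$ sharing the eigenpolynomial $P_c(q)$. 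Crucially, Theorem~\ref{thm1.9}(4) guarantees there are at most $r = {\rm rk}_\Z R$ distinct eigenpolynomials, so the sum in \eqref{2.20} has at most $r$ terms, each with a \emph{single} linear-in-$\overline{w}$ denominator $\overline{w}P_X(q)-wP_c(q)$. This is the decisive structural fact controlling the degree.

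With \eqref{2.20} in hand, system \eqref{2.18} rewrites as $\bs c = \bs F(\overline{w})\bs S\bs c$ with $\bs S = \diag(s_1,\dots,s_t)$ positive, i.e.\ \eqref{2.22}; since $\bs c$ is a nonzero (indeed positive) vector, $\bs 1$ is an eigenvalue of $\bs F(\overline{w})\bs S$, which is equivalent to $\det(\bs F(\overline{w})\bs S - \bs I)=0$, hence to \eqref{2.23} after factoring out $\det\bs S \neq 0$. For the degree count I would clear denominators in the $t\times t$ determinant $\det(\bs F(\overline{w})-\bs S^{-1})$: multiplying row $j$ (or the whole expression) by the common denominator $\prod_{c=0}^{r-1}(\overline{w}P_X(q)-wP_c(q))$ turns each entry into a polynomial in $\overline{w}$ of degree at most $r-1$ (the $\overline{w}$ from the $-s_j^{-1}$ diagonal term only raises this to $\le r$ per entry, but more carefully: after clearing all $r$ linear factors, each entry of $\bs F$ becomes a polynomial of degree $\le r-1$ in $\overline w$, and the diagonal shift contributes one more factor, so each entry has degree $\le r$), so the $t\times t$ determinant has degree at most $t\cdot r$ in $\overline{w}$, as claimed. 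Finally I would remark that this equation has real algebraic coefficients depending on $q$ (since the $G_{jk}^c$, $P_c(q)$, $P_X(q)$ all do), and that by Perron--Frobenius a genuine solution $\overline{w}=\overline{w}(q)>w$ exists among the roots.

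The step I expect to be the main obstacle is the careful bookkeeping of the degree bound: one must be precise about how clearing the $r$ linear denominators interacts with the $-s_j^{-1}$ on the diagonal, and verify that no spurious cancellation or inflation occurs --- in particular that the polynomial obtained after clearing denominators is not identically zero (which would make the "equation" vacuous). The resolution is to observe that the leading behavior as $\overline{w}\to\infty$ is governed by $\bs F(\overline{w})\to 0$, so $\det(\bs F(\overline{w})-\bs S^{-1}) \to \det(-\bs S^{-1}) = (-1)^t/\prod s_i \neq 0$, hence after clearing denominators the resulting polynomial is genuinely nonzero; everything else in the proof is an assembly of already-established facts.
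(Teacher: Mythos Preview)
Your proposal is correct and follows essentially the same route as the paper: the paper's proof of Theorem~\ref{th:4:5} is literally the preceding material of Section~4.2 (it is introduced with ``Summarizing the arguments in this section we thus have proved the following''), namely the chain \eqref{2.14}~$\to$~Lemma~\ref{lem:4:3}~$\to$~\eqref{2.18}~$\to$~Proposition~\ref{thm2.2}~$\to$~\eqref{2.20}~$\to$~\eqref{2.22}~$\to$~\eqref{2.23}, exactly as you outline. Your extra care about the degree bookkeeping and the $\overline{w}\to\infty$ argument for non-vacuousness goes beyond what the paper spells out (it simply asserts the degree bound ``in view of \eqref{2.20}''), so if anything your write-up is more complete on those points.
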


Now consider the simplest case when
$$
X=A_0\sqcup A_1,
$$
which we called \textit{two-valued fitness landscape} in \cite{semenov2016eigen}.

\begin{corollary}\label{cor2.4}
In conditions of Theorem \ref{th:4:5} let the fitness function
${\bs w}$ have $2$ values, $w$ on $A_0$ and $w+s$ on
$A_1=X\setminus A_0$, where $s>0$, $A_1$ is a single $G$-orbit.

Then the equation for $\overline{w}$ takes the form
\begin{equation}{\label{2.25}}
\sum_{c=0}^{r-1}\frac{G^c_{11}P_c(q)}{\overline{w}P_X(q)-wP_c(q)}=\frac{1}{s}\,,\quad
P_0(q)=P_X(q)\,.
\end{equation}
\end{corollary}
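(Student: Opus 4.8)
The plan is to obtain \eqref{2.25} simply as the $t=1$ specialization of Theorem \ref{th:4:5}, together with an explicit evaluation of the $1\times 1$ determinant appearing in \eqref{2.23}. A two-valued fitness landscape is precisely the case in which the decomposition \eqref{2.2} has a single non-trivial piece, namely $X=A_0\sqcup A_1$ with ${\bs w}(A_0)\equiv w$, ${\bs w}(A_1)\equiv w+s$, $s_1=s>0$, and $A_1$ a single $G$-orbit. Hence all the machinery of Section \ref{sec:4} applies verbatim with $t=1$, and nothing new needs to be built.

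Concretely, I would proceed as follows. First, observe that for $t=1$ the matrix ${\bs F}(\overline w)=(F_{jk})$ of Theorem \ref{th:4:5} collapses to the scalar $F_{11}=F_{11}(q,w,\overline w)$, while the diagonal matrix ${\bs S}=\diag(s_1,\dots,s_t)$ becomes $(s)$, so ${\bs S}^{-1}=(1/s)$. Thus equation \eqref{2.23} reads $\det\bigl((F_{11})-(1/s)\bigr)=F_{11}(q,w,\overline w)-\tfrac{1}{s}=0$, i.e. $F_{11}=1/s$. Next I substitute the closed form \eqref{2.20} for $F_{jk}$ with $j=k=1$, namely $F_{11}=\sum_{c=0}^{r-1}\dfrac{G^c_{11}P_c(q)}{\overline w P_X(q)-wP_c(q)}$, and recall the normalization $P_0(q)=P_X(q)$ fixed just before \eqref{2.20}. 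This is exactly \eqref{2.25}. Clearing denominators shows it is an algebraic equation in $\overline w$ of degree at most $r={\rm rk}_{\Z}R$, consistent with the general bound $t\cdot r$ of Theorem \ref{th:4:5} at $t=1$; the denominators do not vanish on $[0,1]$ because $\overline w P_X(q){\bs I}-w{\bs Q}$ is non-singular there, as already seen from the $1$-norm estimate $\|w{\bs Q}\|_1=P_X(q)w<P_X(q)\overline w$.

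Alternatively — and this is the route I would actually write out in the text, since it exposes the logic most directly — one can bypass the determinant entirely and start from the system \eqref{2.21}, which for $t=1$ is the single scalar identity $C_1=F_{11}\,sC_1$. By Lemma \ref{lem:4:3} the Perron eigenvector $\bs{\hat p}$ is constant and strictly positive on the orbit $A_1$, so $C_1>0$; dividing through by $sC_1$ yields $sF_{11}=1$, and one again substitutes \eqref{2.20}. The only point requiring care is the non-vanishing of $C_1$, which is exactly what symmetry together with $G$-invariance and the Perron--Frobenius theorem supply through Lemma \ref{lem:4:3}; beyond that the corollary is pure bookkeeping, so there is no genuine obstacle here. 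The substantive work lies upstream — in Theorem \ref{thm1.9} (simultaneous diagonalization of ${\bs Q}$ by a common eigenbasis of the orbital ring), in Proposition \ref{thm2.2} (orbit-independence of $S_{a,A_k}$), and in the resolvent expansion behind \eqref{2.20}.
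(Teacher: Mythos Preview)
Your proposal is correct and matches the paper's approach exactly: the corollary is stated without a separate proof because it is the immediate $t=1$ specialization of Theorem~\ref{th:4:5}, where the $1\times 1$ determinant in \eqref{2.23} reduces to $F_{11}-1/s=0$ and one substitutes \eqref{2.20}. Your alternative derivation via \eqref{2.21} and Lemma~\ref{lem:4:3} is equally valid and arguably more transparent, but neither route adds anything beyond what the paper already regards as automatic.
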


Finally, since the equation \eqref{2.23} in principle allows to find $\overline w$, we can use it to find the corresponding eigenvector $\bs{\hat p}$.
\begin{theorem}\label{thm2.3} Let a homogeneous $\Gamma$-landscape $(X,d,\Gamma, {\bs w})$
be symmetric and $G$-invariant, $G<\Gamma$. Suppose that  there is
a decomposition $X=A_0\sqcup\bigsqcup_{i=1}^t A_i\,$ such that
$A_0$ is a union of $G$-orbits on which ${\bs w}(A_0)\equiv w\geq
0$, and each $A_i$, $i=1,\dots, t$, is  a single $G$-orbit  on
which ${\bs w}(A_i)\equiv w+s_i$, where $s_i>0$. Suppose also that
in (\ref{2.19}) all coefficients $G^c_{jk}\geq 0$.

Then there exists a solution $\bs{\hat p}=\bs{\hat p}(q)$ of the
generalized Eigen's problem (\ref{1.2}) which is constant on
$G$-orbits.

\end{theorem}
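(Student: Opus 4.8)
The plan is to invert the reduction of Section~\ref{sec:4:1}. Given the dominant eigenvalue $\overline w$ (which Theorem~\ref{th:4:5} has already pinned down as a root of $\det(\bs F(\overline w)-\bs S^{-1})=0$), I would first solve the small $t\times t$ system \eqref{2.22} for the orbit-constants $C_1,\dots,C_t$, then reconstruct the full vector as $\bs{\hat p}:=\bs L\bs v$ with $\bs v=\sum_{i=1}^{t}s_iC_i\bs 1_{A_i}$ and $\bs L=\bs Q(\overline wP_X(q)\bs I-w\bs Q)^{-1}$ as in \eqref{2.12}, and finally check that $\bs{\hat p}$ is $G$-invariant, strictly positive, correctly normalized, and a genuine solution of \eqref{1.2}. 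It is worth noting that $G$-invariance of the dominant eigenvector is already implicit in the proof of Lemma~\ref{lem:4:3} — the $G$-averaged solution is shown there to coincide with it by Perron--Frobenius — so the real content of the theorem, and the role of the hypothesis $G^c_{jk}\ge 0$, is that this \emph{explicit} reconstruction is legitimate.

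First I would fix $\overline w$ to be the dominant eigenvalue of \eqref{1.2}: it exists by Perron--Frobenius, and $\overline w>w\ge 0$ because $\overline w=\sum_x w_x\hat p_x$ is a strict convex combination of at least two distinct fitness values. Since $\bs 1$ is a left Perron eigenvector of $\bs Q$ with eigenvalue $P_X(q)$, the spectral radius of $w\bs Q/(\overline wP_X(q))$ is $w/\overline w<1$, so $(\overline wP_X(q)\bs I-w\bs Q)^{-1}$ expands as a convergent Neumann series in $\bs Q$ with entrywise nonnegative terms; hence $\bs L\ge 0$ entrywise, with $l_{xb}>0$ for all $x,b$ when $q\in(0,1)$. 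Consequently $\bs F=(F_{jk})=\bigl(\sum_{b\in A_k}l_{ab}\bigr)$ is a nonnegative matrix — this nonnegativity is precisely the role played by the hypothesis $G^c_{jk}\ge 0$ at the level of \eqref{2.19}--\eqref{2.20} — so $\bs{FS}$ is a nonnegative $t\times t$ matrix having $1$ as an eigenvalue. Invoking Perron--Frobenius on $\bs{FS}$ (and the fact that, at the Perron value $\overline w$, $1$ is attained by a positive eigenvector, via Lemma~\ref{lem:4:3}) then gives a kernel vector $\bs c=(C_k)>0$ of $\bs F(\overline w)-\bs S^{-1}$, which I would rescale by a positive factor so that $\sum_{x\in X}(\bs L\bs v)_x=1$.

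The remaining verifications should all be routine bookkeeping on the construction. $G$-invariance: $\bs v$ is a nonnegative combination of indicators of $G$-orbits and $\bs L$, being a power series in $\bs Q$, commutes with the $G$-action by \eqref{1.13}, so $\bs{\hat p}=\bs L\bs v$ is constant on every $G$-orbit and equals $C_i$ on $A_i$. Positivity: $\bs L\ge 0$ and $\bs v\ge 0$ give $\bs{\hat p}\ge 0$, and $l_{xb}>0$ upgrades this to $\hat p_x>0$. Normalization: from $\bs 1^\top\bs Q=P_X(q)\bs 1^\top$ one gets $\bs 1^\top\bs L=(\overline w-w)^{-1}\bs 1^\top$, so $\sum_x\hat p_x=1$ is equivalent to the relation $\overline w=w+\sum_i s_iC_i|A_i|$ of \eqref{2.17} and hence holds after rescaling. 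Finally, \eqref{2.22} says precisely that $\bs v=\bs M\bs v$ with $\bs M=\sum_i s_i\bs E_{A_i}\bs L$; retracing the (reversible) identities $\bs W\bs{\hat p}=w\bs{\hat p}+\bs v$ and $(\overline wP_X(q)\bs I-w\bs Q)\bs{\hat p}=\bs Q\bs v$ of Section~\ref{sec:4:1} then turns this into $\bs Q\bs W\bs{\hat p}=P_X(q)\overline w\bs{\hat p}$, as required.

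I expect the main obstacle to be the single delicate point in the second step: guaranteeing that the reduced system \eqref{2.22} admits a \emph{strictly positive} solution $\bs c$. Equation \eqref{2.23} may have several roots, and for an arbitrary root the kernel of $\bs F(\overline w)-\bs S^{-1}$ need not be sign-definite; one must therefore combine the fact that $\overline w$ is the Perron value (so $1$ is truly the dominant eigenvalue of the nonnegative matrix $\bs{FS}$) with the hypothesis $G^c_{jk}\ge 0$, which keeps $\bs{FS}$ within the Perron--Frobenius regime and makes the positive eigenvector $\bs c$ available. Everything after that — $G$-invariance, strict positivity, normalization, and the check that \eqref{1.2} holds — is a matter of unwinding the reduction of Section~\ref{sec:4:1}.
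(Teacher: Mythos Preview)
Your proposal is correct and follows essentially the same route as the paper: solve the reduced $t\times t$ system \eqref{2.22} for a positive $\bs c$ via Perron--Frobenius on $\bs{FS}$, reconstruct $\bs{\hat p}$ through the formula $\bs{\hat p}=\bs L\bs v$ (the paper's \eqref{2.24}), and then check $G$-invariance, positivity, and the normalization \eqref{2.16}--\eqref{2.17}. You are in fact more careful than the paper at two points: you justify $\bs{FS}\ge 0$ directly from the Neumann expansion of $\bs L$ (so the hypothesis $G^c_{jk}\ge 0$ is not strictly needed for that step, since $P_c(q)$ can change sign), and you correctly flag that positivity of $\bs c$ requires knowing $1$ is the \emph{dominant} eigenvalue of $\bs{FS}$, which you secure by feeding the Perron eigenvector of the original problem through Lemma~\ref{lem:4:3}.
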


\begin{proof} The (maximal) root $\overline{w}=\overline{w}(q)$  of
(\ref{2.23}) provides a non-trivial solution of (\ref{2.22}).
Consider the eigenvalue $\lambda=1$
 of  the matrix ${\bs F}{\bs S}$
with non-negative entries. It follows from the Perron--Frobenius
theorem that we can find a positive solution ${\bs c}=(C_k)$,
$k=1,\dots, t$ up to the positive scalar factor. Thus, we
can determine the projections ${\bs E}_{A_i}\bs{\hat p}=C_i{\bs
E}_{A_i}{\bs 1}$, where ${\bs 1}=(1,\dots,1)^\top$.

In view of (\ref{2.4}) solution $\bs{\hat p}$ of the problem (\ref{1.2})
can be reconstructed with the help of the formula
\begin{equation}{\label{2.24}}
\bs{\hat p}={\bs Q}(\overline{w}P_X(q){\bs I}-w{\bs Q})^{-1}\sum_{i=1}^t
s_i{\bs E}_{A_i}\bs{\hat p}=\sum_{i=1}^t s_i{\bs L}{\bs E}_{A_i}\bs{\hat p}.
\end{equation}

Lemma \ref{lem:4:3} implies that this solution $\bs{\hat p}$ is $G$-invariant.

The conditions (\ref{2.16}) and  (\ref{2.17}) (which is the same
as (\ref{1.3}))  enable us to determine the multiplication scalar
for $C_k$ and the final expression for $\bs{\hat p}$. The proof is complete.
\end{proof}

%
%

\begin{remark} For a two-valued symmetric and $G$-invariant
landscape $(X,d,\Gamma, {\bs w})$ satisfying the conditions of
Theorem \ref{thm2.3}, a solution  of (\ref{1.2}) can be essentially
simplified if $w=0$ (biologically, this is the case of {\it lethal} mutations). For
instance, see \cite[Example 4.8]{semenov2016eigen}.
\end{remark}

\section{Two examples: Polygonal and Hyperoctahedral landscapes}\label{sec:5}
In this section we show how the general theory of Sections \ref{sec:3} and \ref{sec:4} can be applied to some specific finite metric spaces $X$. Namely, we consider first the polygonal mutational landscape and then turn to analysis of the hyperoctahedral one. Two more detailed examples of the hypercube and regular simplex can be found in \cite{semenov2016eigen}. We would like to remark that although the classical Eigen's model is almost exclusively based on the geometry of binary cube, other mutational landscapes can be biologically relevant. For instance, in \cite{semenov2016eigen} we argued that the simplicial landscape is a natural description of the switching of the antigenic variants for some bacteria.
\subsection {Polygonal landscapes}\label{sec:5:1}

\subsubsection{Preliminaries} Let $X_l$ be the 0-skeleton of a
regular $l$-gon  with unit edges on a plane. We will assume that
$l\geq 3$, the case $l=2$ can be treated either directly, or as
the case of 1-dimensional simplex or the case of 1-dimensional
cube (i.e., a segment). Both cases were investigated in  \cite{semenov2016eigen}.

We will enumerate the points of $X_l$ by numbers of the set
$X_l=\{0,1,\dots,l-1\}$ with the fixed point 0 and the
counterclockwise enumeration of vertices (see Fig. \ref{fig:4}). It
is convenient to consider these numbers as elements of the cyclic
group $\Z/l\Z$, that is, consider the integer numbers modulo $l$.
Sometimes we will refer to the classical geometric interpretation
of $X_l$ as the set of all roots of unit of degree $l$ on the
complex plane $\C$, i.e.,
$X_l\cong\{1=\varepsilon^0,\varepsilon,\dots,\varepsilon^{l-1}\}$,
where $\varepsilon=e^{2\pi \I/l}$, so that $k\mod
l\leftrightarrow \varepsilon^k=e^{2\pi k \I/l}$.
\begin{figure}[!th]
\centering
\includegraphics[width=0.7\textwidth]{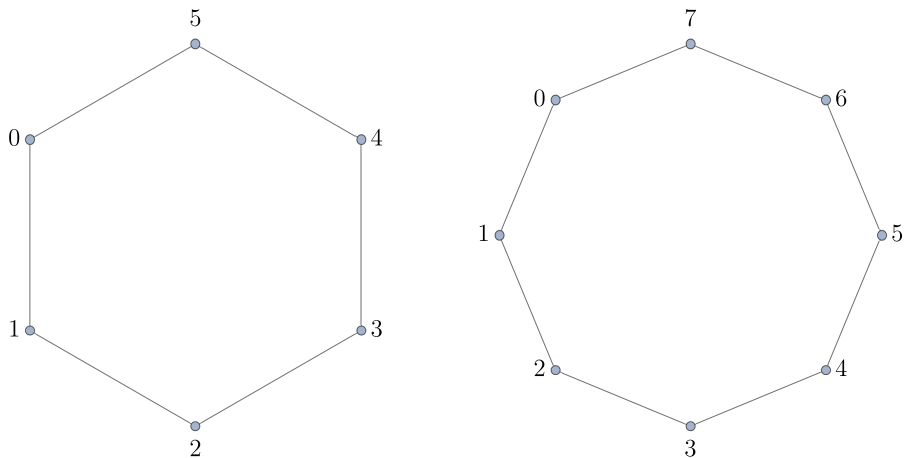}
\caption{Regular pentagon and hexagon}\label{fig:4}
\end{figure}

%
%
The metric $d$ is the so called {\it edge} metric on $X_l$:
the distance $d(k,j)$ between $k$ and $j$ is the minimal number of
edges of the regular $l$-gon connecting successively $k$ and $j$.
If $X_l\cong\Z/l\Z$ then $d(k,j)=\min\{|k-j|,l-|k-j|\}$.

For this metric if the cardinality $l=|X_l|=2N+1$ is odd then
$N=\diam X_l$ and there are two points, namely, $N$ and
$N+1$, such that $d(0,N)=d(0,N+1)=N$. If the cardinality
$l=|X_l|=2N$ is even then $N=\diam X_l$ and there is the
unique point $N$ for which $d(0,N)=N$. Moreover, we have  the
following distance polynomial:
\begin{equation}\label{3.1}
P_{X_l}(q)=\left\{ \begin{array}{ll}\vspace{3pt}
q^N+2\sum\limits_{k=1}^{N}(1-q)^kq^{N-k}\,,&l=2N+1\;\mbox{is odd},\\
q^N+2\sum\limits_{k=1}^{N-1}(1-q)^kq^{N-k}+(1-q)^N\,,&l=2N\;\mbox{is
even}\,.
 \end{array}
\right.
\end{equation}

It is well known that the group ${\rm Iso}(X_l)\cong D_l$ where
$D_l$ is a dihedral subgroup of order $2l$ which acts transitively
on $X_l$. The cyclic subgroup $C_l<D_l$ acts also transitively on
$X_l$ but the triple $(X_l,d,D_l)$ is symmetric in the sense of
Definition \ref{def:3:3} meanwhile the triple $(X_l,d,C_l)$ is not.

In what follows we consider only the symmetric polygonal
landscapes $(X_l,d,\Gamma, {\bs w})$, $\Gamma=D_l$. The stabilizer
$\Gamma_0={\rm St}_{\Gamma}(0)\cong \Z/2\Z$. If
$X_l\cong\{1=\varepsilon^0,\varepsilon,\dots,\varepsilon^{l-1}\}$,
then the unique non-trivial element of $\Gamma_0$ acts as the
complex conjugation. For the model $X_l\cong \Z/l\Z$ it acts by
the rule $k\mapsto l-k\mod l$.

There are exactly $N={\rm diam}(X_l)$ $\Gamma_0$-orbits in $X$:
$A_0=\{0\}$, $A_k=\{k,l-k\}$, $k=1,\dots,N-1$, and $A_N=\{N,N+1\}$
when $l=2N+1$ is odd, $A_N=\{N\}$ when $l=2N$ is even. In any case
each $A_k=S_k(0)$ is the sphere of radius $k$ centered at 0.

For the orbital ring $R_l=R(X_l,d,\Gamma)$ (see Section
\ref{subs1.3}) it means that $r={\rm rk}_\Z(X_l,d,\Gamma)=N+1$.
The orbital matrices ${\bs M}_k={\bs M}_{A_k}$ have the following
entries: $({\bs M}_k)_{ab}=1$ if $d(a,b)=k$ and $({\bs
M}_k)_{ab}=0$ otherwise (for a square $X_4$, see Example
\ref{ex1.8}).

The multiplication in the commutative ring $R_l=R(X_l,d,D_l)$ is
slightly different for the cases of odd and even $l$. In both
cases we have ${\bs M}_0{\bs M}_k={\bs M}_k$ since ${\bs M}_0={\bs
I}$ is the unity of $R_l$.
\begin{enumerate}
\item Case $l=2N+1$. It can be checked that
$${\bs M}^2_k=2{\bs M}_0+{\bs M}_{2k}\;\;\mbox {if}\; 0<2k\leq N,\quad{\bs
M}^2_k=2{\bs M}_0+{\bs M}_{l-2k} \;\;\mbox {if}\; N<2k\leq
2N=l-1\,.$$ Also we have for $0<k<j\leq N$:
$${\bs M}_k{\bs M}_j={\bs M}_{j-k}+{\bs M}_{j+k}\;\;\mbox {if}\; 0<j+k\leq N,
\quad{\bs M}_k{\bs M}_j={\bs M}_{j-k}+{\bs M}_{l-j-k} \;\;\mbox {if}\; N<j+k<
2N=l-1\,.$$

\item Case $l=2N$. We also check that
$${\bs M}^2_k=2{\bs M}_0+{\bs M}_{2k}\;\;\mbox {if}\; 0<2k< N,\quad{\bs
M}^2_k=2{\bs M}_0+{\bs M}_{l-2k} \;\;\mbox {if}\; N<2k< 2N=l\,.$$
If $2k=N$ then ${\bs M}^2_k=2{\bs M}_0+2{\bs M}_{N}$. If $k=N$
then ${\bs M}^2_N={\bs M}_0$.

\end{enumerate}
Also we have for $0<k<j< N$:
$${\bs M}_k{\bs M}_j={\bs M}_{j-k}+{\bs M}_{j+k}\;\;\mbox {if}\; 0<j+k< N,
\quad{\bs M}_k{\bs M}_j={\bs M}_{j-k}+{\bs M}_{l-j-k} \;\;\mbox
{if}\; N<j+k< 2N=l\,.$$ If $k+j=N$ we have
${\bs M}_k{\bs
M}_j={\bs M}_{j-k}+2{\bs M}_{N}$. If $j=N$ then ${\bs M}_k{\bs
M}_N={\bs M}_{N-k}$.

\begin{remark}It follows that the $\Z$-linear mapping $\rho\colon
R_l\longrightarrow\Z[2\cos(2\pi/l)]$ such that $\rho({\bs M}_0)=1$,
$\rho({\bs M}_k)=2\cos(2\pi k/l)$ ($k=1,\dots, N$ for the case
$l=2N+1$ and $k=1,\dots, N-1$ for the case $l=2N$) and, if $l=2N$,
$\rho({\bs M}_N)=-1$, is, in fact, a {\it ring homomorphism}. Note
also that $\Z[2\cos(2\pi/l)]=\R\cap \Z[\varepsilon]$ is the ring
of integers of the real field $\R\cap \Q[\varepsilon]$,
$Q[\varepsilon]$ being the cyclotomic field, since
$2\cos(2\pi/l)=\varepsilon+\varepsilon^{-1}$.
\end{remark}
\subsubsection{Transition matrix
${\bs T}={\bs T}_l$ and eigenpolynomials of the matrix ${\bs
Q}={\bs Q_l}$} Let the symmetric triple $(X_l,d,\Gamma)$,
$\Gamma=D_l$, be as in the previous subsection and let the columns
(rows) of all matrices under consideration be indexed by $0$,
\dots, $l-1$ modulo $l$ (since $X_l\cong \Z/l\Z$). Consider the
square symmetric matrix of order $l$
\begin{equation}\label{3.2}
{\bs T}={\bs T_l}=(t_{ab}):=\bigl(\cos(2\pi ab/l)-\sin(2\pi
ab/l)\bigr)\;, \quad a,b\in X_l\cong \Z/l\Z\,.
\end{equation}
\begin{theorem}\label{thm3.1}
 The matrix  ${\bs T}$
satisfies the following conditions:

1. The columns ${\bs t}_b$ of ${\bs T}$ compose a common
eigenbasis for all orbital matrices ${\bs M}_k$, $k=0,\dots, N$.

2. If $l=2N+1$ is odd then
$${\bs T}^{-1}{\bs M}_k{\bs
T}=2\diag(1,\cos(2\pi k/l), \cos(4\pi k/l),\dots, \cos(2\pi
(l-1)k/l)),\quad k=1,\dots, N\,.$$ If $l=2N$ is even then
$${\bs T}^{-1}{\bs M}_k{\bs
T}=2\diag(1,\cos(2\pi k/l), \cos(4\pi k/l),\dots,
\cos(2\pi(l-1)k/l)),\quad k=1,\dots, N-1\,,$$
$${\bs T}^{-1}{\bs M}_N{\bs
T}=\diag(1,-1, 1,-1,\dots,1,-1)\,.$$

3. ${\bs T}^{-1}{\bs Q}{\bs T}=\diag(P_0(q),P_1(q),\dots,
P_{l-1}(q))$ where $P_0(q)=P_{X_l}(q)$ is the distance polynomial
(\ref{3.1}).

If $l=2N+1$ is odd then
\begin{equation}\label{3.3}
p_j(q)=p_{l-j}(q)=q^N+\sum_{k=1}^N 2\cos(2\pi
kj/l)\,(1-q)^kq^{N-k}\;,\quad j=1,\dots, N \;.
\end{equation} If $l=2N$ is even then
\begin{equation}\label{3.4}
p_j(q)=p_{l-j}(q)=q^N+\sum_{k=1}^{N-1} 2\cos(2\pi
kj/l)\,(1-q)^kq^{N-k}+(-1)^j(1-q)^N\;,\quad j=1,\dots, N\;.
\end{equation}

4. ${\bs T}^2=l\,{\bs I}$. In other words, ${\bs
T}^{-1}=\frac{1}{l}{\bs T}$.
\end{theorem}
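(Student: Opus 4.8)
The plan is to verify the four assertions essentially by direct trigonometric computation, exploiting the fact that the columns of $\bs T$ are built from the standard characters of the cyclic group $\Z/l\Z$ combined into real eigenvectors. First I would observe that $\bs T_l$ can be decomposed column-by-column: the $b$-th column has entries $t_{ab}=\cos(2\pi ab/l)-\sin(2\pi ab/l)=\Re\bigl((1+\I)\varepsilon^{ab}\bigr)$ where $\varepsilon=e^{2\pi\I/l}$. Since the matrices $\bs M_k$ have the explicit form $(\bs M_k)_{ab}=1$ iff $d(a,b)=k$ iff $b-a\equiv\pm k\pmod l$ (with the obvious single-point modification when $l=2N$ and $k=N$), I would compute $(\bs M_k\bs t_b)_a=\sum_{c:\,d(a,c)=k}t_{cb}$. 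For $0<k<N$ (or $k\le N$ when $l$ is odd) this sum is $t_{(a+k)b}+t_{(a-k)b}$, and using the identity $\cos(2\pi(a\pm k)b/l)=\cos(2\pi ab/l)\cos(2\pi kb/l)\mp\sin(2\pi ab/l)\sin(2\pi kb/l)$ together with the analogous identity for the sine, the cross terms cancel and one gets $t_{(a+k)b}+t_{(a-k)b}=2\cos(2\pi kb/l)\,t_{ab}$. This simultaneously proves assertion 1 and the first two displayed eigenvalue formulas of assertion 2; the case $k=N$ with $l=2N$ is the special computation $(\bs M_N\bs t_b)_a=t_{(a+N)b}=\cos(2\pi(a+N)b/l)-\sin(2\pi(a+N)b/l)=(-1)^b t_{ab}$, since $2\pi Nb/l=\pi b$, which gives the $\diag(1,-1,1,-1,\dots)$ claim.

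For assertion 3 I would plug expansion \eqref{1.9}, i.e. $\bs Q=\sum_{k=0}^N(1-q)^kq^{N-k}\bigl(\sum_{A\in\mathrm{Orb}_k}\bs M_A\bigr)=\sum_{k=0}^N(1-q)^kq^{N-k}\bs M_k$ (each $\mathrm{Orb}_k$ being a single orbit here), conjugate by $\bs T$, and use the eigenvalues just computed. The $b=0$ column gives all eigenvalues $2\cos 0=1$ (resp.\ $1$ for $\bs M_N$), hence $P_0(q)=q^N+2\sum_{k=1}^N(1-q)^kq^{N-k}$ in the odd case and the corresponding even-case expression, which is exactly the distance polynomial \eqref{3.1}. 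For $b=j$ with $1\le j\le N$ one obtains $p_j(q)=q^N+\sum_{k=1}^N 2\cos(2\pi kj/l)(1-q)^kq^{N-k}$ (odd case) and the formula with the extra $(-1)^j(1-q)^N$ term (even case), matching \eqref{3.3} and \eqref{3.4}; the equality $p_j=p_{l-j}$ is immediate from $\cos(2\pi k(l-j)/l)=\cos(2\pi kj/l)$.

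Assertion 4, $\bs T^2=l\,\bs I$, I would prove by computing $(\bs T^2)_{ac}=\sum_{b=0}^{l-1}\bigl(\cos\tfrac{2\pi ab}{l}-\sin\tfrac{2\pi ab}{l}\bigr)\bigl(\cos\tfrac{2\pi bc}{l}-\sin\tfrac{2\pi bc}{l}\bigr)$, expanding the product into $\cos\cos+\sin\sin-\cos\sin-\sin\cos$, and applying the standard orthogonality relations $\sum_b\cos\tfrac{2\pi ab}{l}\cos\tfrac{2\pi bc}{l}$, $\sum_b\sin\tfrac{2\pi ab}{l}\sin\tfrac{2\pi bc}{l}$, and $\sum_b\cos\tfrac{2\pi ab}{l}\sin\tfrac{2\pi bc}{l}=0$. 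One finds $\sum_b(\cos\cos+\sin\sin)=\sum_b\cos\tfrac{2\pi b(a-c)}{l}=l\,\delta_{ac}$ (mod $l$) and the mixed sum vanishes by parity ($b\mapsto l-b$), giving $(\bs T^2)_{ac}=l\,\delta_{ac}$ and hence $\bs T^{-1}=\tfrac1l\bs T$. Note this also shows $\bs T$ is non-degenerate, so the diagonalizations above are genuine. The only mildly delicate points are the bookkeeping in the mixed-parity case $l=2N$ — where $\bs M_N$ is a permutation matrix for a single point rather than a pair, so the formula $t_{(a+k)b}+t_{(a-k)b}$ must be replaced by a single term — and making sure the index arithmetic $a\pm k$ is read modulo $l$ throughout; neither is a real obstacle, just careful case separation. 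The main "engine" of the whole proof is the pair of angle-addition identities that force the cross terms to cancel, so I would state those cleanly once at the outset and then reuse them in each part.
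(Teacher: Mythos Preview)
Your proposal is correct and follows essentially the same route as the paper. The only stylistic difference is that the paper first introduces the one-step cyclic shift matrices $\bs C_k$ (with $\bs M_k=\bs C_k+\bs C_{l-k}$) and diagonalizes them via the complex character vectors $\bs v_j=(1,\varepsilon^j,\dots,\varepsilon^{(l-1)j})^\top$, then passes to the real columns $\bs t_j$, whereas you compute the action of $\bs M_k$ on the real columns $\bs t_b$ directly via the angle-addition identities; these are two ways of writing the same computation, and your version is arguably cleaner since it avoids the bookkeeping of combining $\Re\bs v_j$ and $\Im\bs v_{l-j}$.
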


\begin{proof} 1 and 2. Consider cyclic matrices ${\bs C}_k$
with only $l$ non-trivial entries $({\bs C}_k)_{a,a+k}=1$ where
subindices are taken modulo $l$. Note that ${\bs C}_0={\bs
M}_0={\bs I}$ in any case and ${\bs C}_N={\bs M}_N$ if $l=2N$ is
even. In the other cases ${\bs C}_k+{\bs C}_{l-k}={\bs M}_k$.

Consider also the vectors ${\bs
v}_j=(1,\varepsilon^j,\varepsilon^{2j},\dots,\varepsilon^{(l-1)j})^T$,
$j\in X_l$, $\varepsilon=e^{2\pi \I/l}$. Straightforward checking
yields
$$
{\bs C}_k{\bs v}_j=\varepsilon^{kj}{\bs v}_j\;.
$$
Note that the Vandermonde determinant $\det(\varepsilon^{kj})\ne
0$. It follows that
$$
{\bs C}_k{\bs v}_j=\varepsilon^{kj}{\bs v}_j\;,\quad {\bs
C}_{l-k}{\bs v}_{j}=\varepsilon^{-kj}{\bs v}_{j}\;,\quad{\bs
C}_k{\bs v}_{l-j}=\varepsilon^{-kj}{\bs v}_{l-j}\;,\quad {\bs
C}_{l-k}{\bs v}_{l-j}=\varepsilon^{kj}{\bs v}_{l-j}\,
$$
since $\varepsilon^l=1$. If $l=2N+1$ is odd then the real
vector-columns ${\bs t}_0={\bs v}_0$, ${\bs t}_j={\rm Re}\,{\bs
v}_j-{\rm Im}\,{\bs v}_{l-j}$, ${\bs t}_{l-j}={\rm Re}\,{\bs
v}_{l-j}+{\rm Im}\,{\bs v}_{j}$ are linearly independent
eigenvectors of each ${\bs M}_k$ such that
$$
{\bs M}_0 {\bs t}_j={\bs t}_j\,,\quad {\bs M}_k {\bs
t}_j=(\varepsilon^{kj}+\varepsilon^{-kj}){\bs t}_j=2\cos(2\pi
kj/l){\bs t}_j,\quad k=1,\dots,N\,.
$$
If $l=2N$ is even then the  vector-columns ${\bs t}_0={\bs v}_0$ ,
${\bs t}_N=(1,-1, 1,-1,\dots,1,-1)^\top$, ${\bs t}_j={\rm Re}\,{\bs
v}_j-{\rm Im}\,{\bs v}_{l-j}$,
 ${\bs t}_{l-j}={\rm
Re}\,{\bs v}_{l-j}+{\rm Im}\,{\bs v}_{j}$ are linearly independent
eigenvectors of each ${\bs M}_k$ such that
$$
{\bs M}_0 {\bs t}_j={\bs t}_j\,,\quad {\bs M}_k {\bs
t}_j=\varepsilon^{kj}+\varepsilon^{-kj}{\bs t}_j=2\cos(2\pi
kj/l){\bs t}_j,\quad k=1,\dots,N-1\,,\quad {\bs M}_N {\bs
t}_j=\cos(\pi j){\bs t}_j\,.$$

In any case the transition matrix ${\bs T}$ has the form
(\ref{3.2}). This finishes the proof of the assertions 1 and 2.

3. Recall (\ref{1.9}) that ${\bs Q}= \sum_{k=0}^N (1-q)^k
q^{N-k}{\bs M}_k\,$. In view of the assertion 2  $${\bs
T}^{-1}{\bs Q}{\bs T}=\sum_{k=0}^N (1-q)^k q^{N-k}{\bs T}^{-1}{\bs
M}_k{\bs T}=\diag(p_0(q),p_1(q),\dots, p_{l-1}(q))\,.$$ Comparing
the diagonal entries we get the desired result.

4.  Straightforward calculations with trigonometric sums. The
theorem is proved.
\end{proof}

\begin{remark} Note that Conjecture \ref{conj1.11} is true for the
triple $(X_l,d,\Gamma)$.
\end{remark}

%
\subsubsection{On single peaked and alternating $G$-invariant polygonal landscapes}
It is known that the subgroups of the dihedral group $D_l$ are, up
to isomorphism, the following groups: dihedral groups $D_m$ and
cyclic groups $C_m$ for $m$ dividing $l$. Note that $D_1\cong
\Z/2\Z$.

In this section the explicit expression of the equation
(\ref{2.25}) is given for two-valued fitness landscapes
$(X_l,d,\Gamma=D_l,{\bs w})$. Specifically, we consider
\begin{enumerate}
\item Single peaked landscapes when $X=A_0\sqcup
A_1$ and $A_1$ consists of a single point, say, $A_1=\{0\}$. Thus,
${\bs w}(0)=w+s$, ${\bs w}(x)=w$, $x\ne 0$. This landscape is
$G$-invariant for $G=\{1\}$.

\item Alternating landscapes for $l$-gons with even $l=2N$ (see Fig. \ref{fig:5}). For
these landscapes $X=A_0\sqcup A_1$ where $A_1=\{0,2,4,\dots,l-2\}$
and ${\bs w}(A_0)=w$, ${\bs w}(A_1)=w+s$. This landscape is
$G$-invariant for the cyclic group $G=C_N$.
\end{enumerate}
\begin{figure}[!th]
\centering
\includegraphics[width=0.7\textwidth]{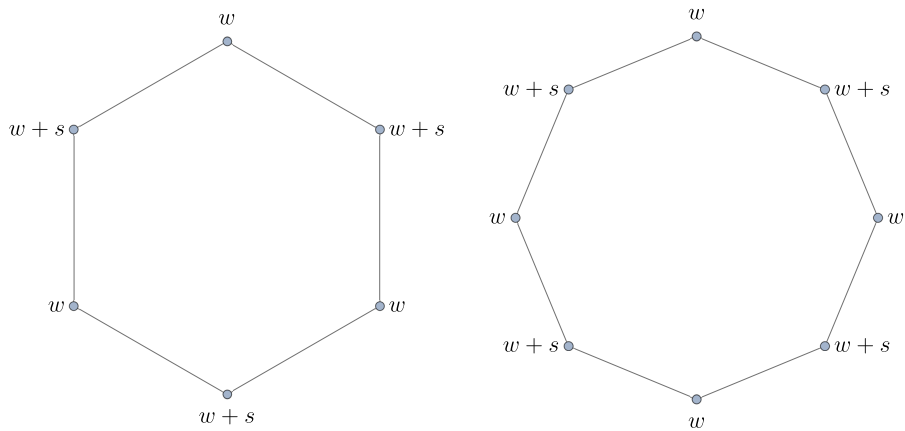}
\caption{Alternating hexagonal and octagonal landscapes}\label{fig:5}
\end{figure}
%
%
%
%

1. In the case of a single peaked landscape the equation
(\ref{2.25}) reads
\begin{equation}\label{3.5}
\frac{1}{\overline{w}-w}+ \sum_{c=1}^{N-1}\frac{2P_c(q)}
{\overline{w}P_X(q)-wP_c(q)}+\frac{HP_N(q)}
{\overline{w}P_X(q)-wP_N(q)}=\frac{l}{s}\,,\qquad P_0(q)=P_X(q)\,,
\end{equation}
where $H=2$ if $l=2N+1$ and $H=1$ if $l=2N$. The polynomials
$P_c(q)$ are defined in (\ref{3.1}), (\ref{3.3}), (\ref{3.4}).

Indeed, we need to calculate the numbers $G^{c}_{11}$ with the
help of (\ref{2.19}). But $a=0$, $A_1=\{0\}$ and in view of
Theorem \ref{3.2} $t_{0x}=lt^{(-1)}_{x0}=1$ for the transition
matrix ${\bs T}$. Hence the result.
\begin{figure}[!th]
\centering
\includegraphics[width=0.4\textwidth]{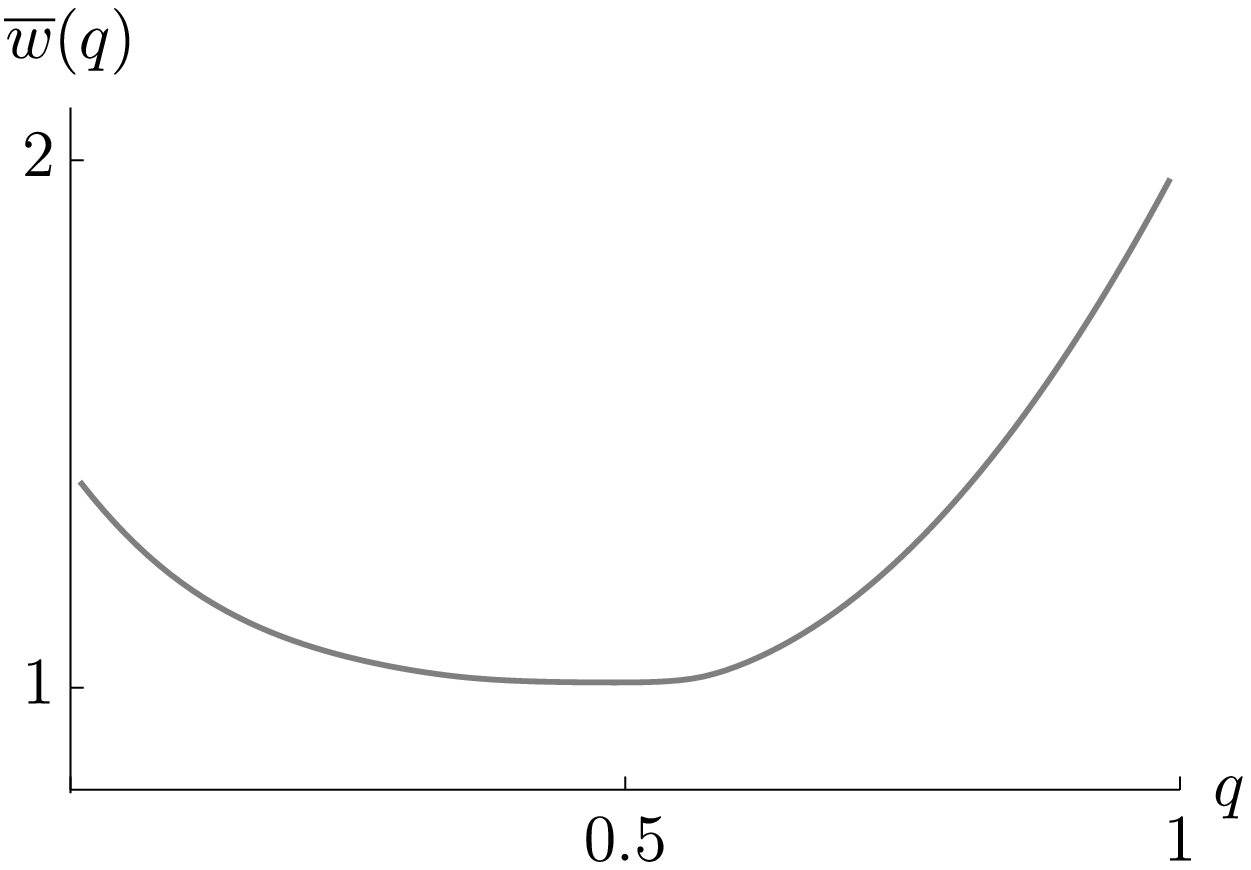}$\qquad$
\includegraphics[width=0.4\textwidth]{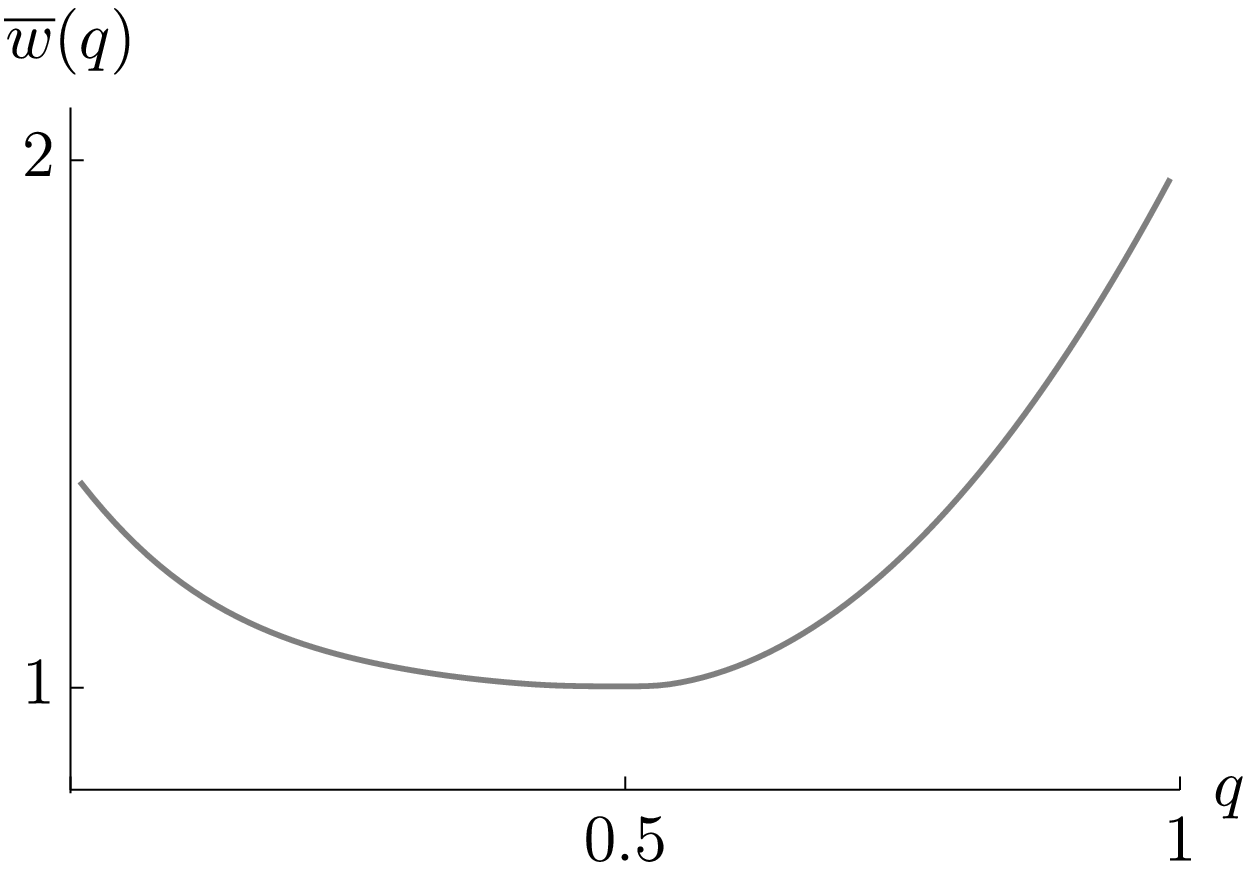}
\caption{The leading eigenvalue of spectral problem \eqref{1.2} on regular $2N$-gon for single peaked landscape $w=1,s=1$. On the left $N=50$, on the right $N=200$.}\label{fig:6}
\end{figure}

In Fig. \ref{fig:6} we present two numerical examples for the considered situation. Note the absence of non-analytical behavior of $\overline w$, i.e., the absence of the error threshold (or phase transition).


2.  In the case of an alternating landscape with $l=2N$ the
straightforward calculation yields $G^0_{11}=G^N_{11}=1/2$,
$G^k_{11}=0$ for $0<k<N$. Then the equation (\ref{2.25}) is
quadratic:
$$
\frac{1}{\overline{w}-w}+ \frac{P_N(q)}
{\overline{w}P_X(q)-wP_N(q)}=\frac{2}{s}\,,\qquad P_0(q)=P_X(q)\,,
$$ or, for $u=w/s$, $\overline{u}=\overline{w}/s$,
\begin{equation}\label{3.7}
\frac{1}{\overline{u}-u}+ \frac{P_N(q)}
{\overline{u}P_0(q)-uP_N(q)}=2\,.
\end{equation}
 The polynomials $P_0(q),\, P_N(q)\in\Z[q]$:
 $$
P_0(q)=q^N+2\sum\limits_{k=1}^{N-1}(1-q)^kq^{N-k}+(1-q)^N\,,\quad
P_N(q)=q^N+2\sum\limits_{k=1}^{N-1}(-1)^k(1-q)^kq^{N-k}+(-1)^N(1-q)^N\,.$$

The solution of (\ref{3.7}) is (the positive square root is taken)
\begin{equation}\label{3.8}
\overline{u}(q)=\frac{(2u+1)(P_0(q)+P_N(q))+\sqrt{(2u+1)^2(P_0(q)+P_N(q))^2-16u(u+1)P_0(q)P_N(q)}}
{4P_0(q)}\,,
\end{equation}
numerical examples are given in Fig. \ref{fig:7}.
\begin{figure}[!th]
\centering
\includegraphics[width=0.4\textwidth]{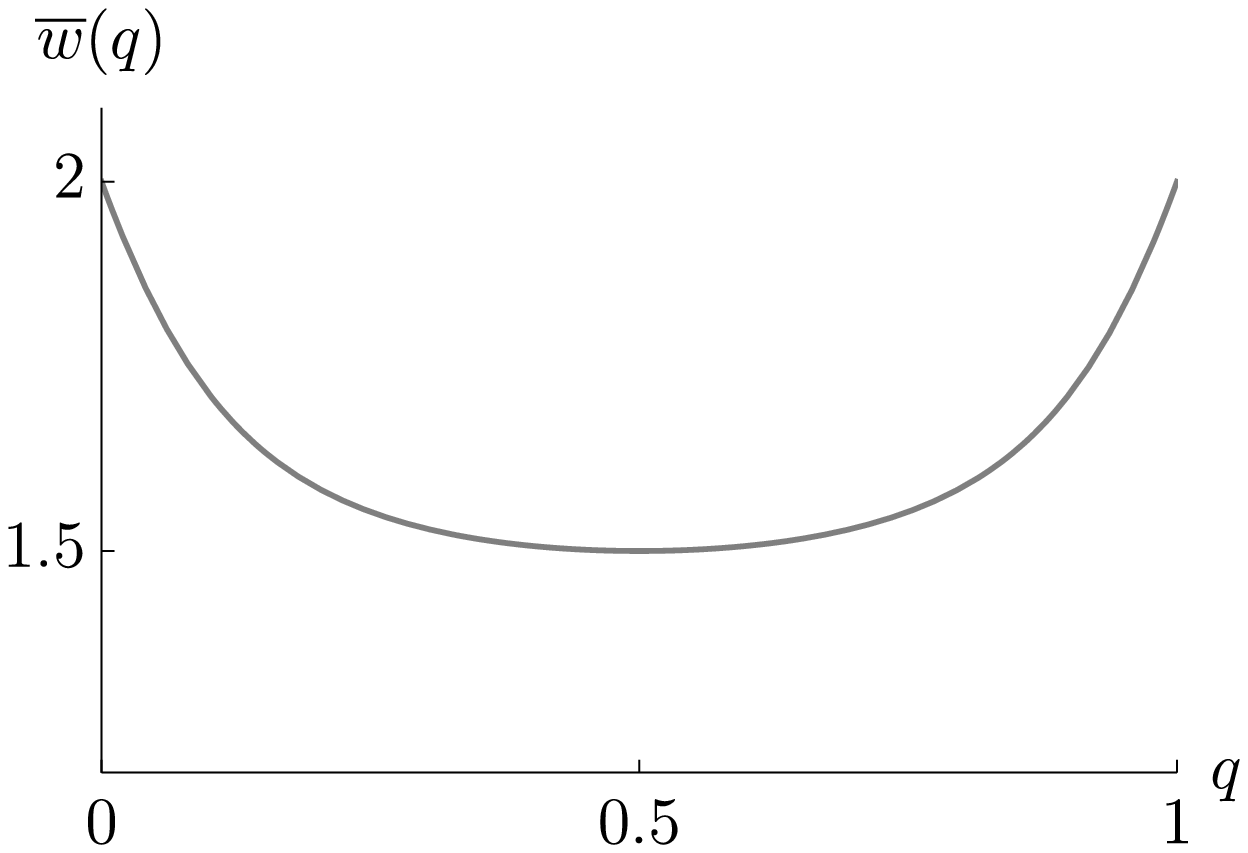}$\qquad$
\includegraphics[width=0.4\textwidth]{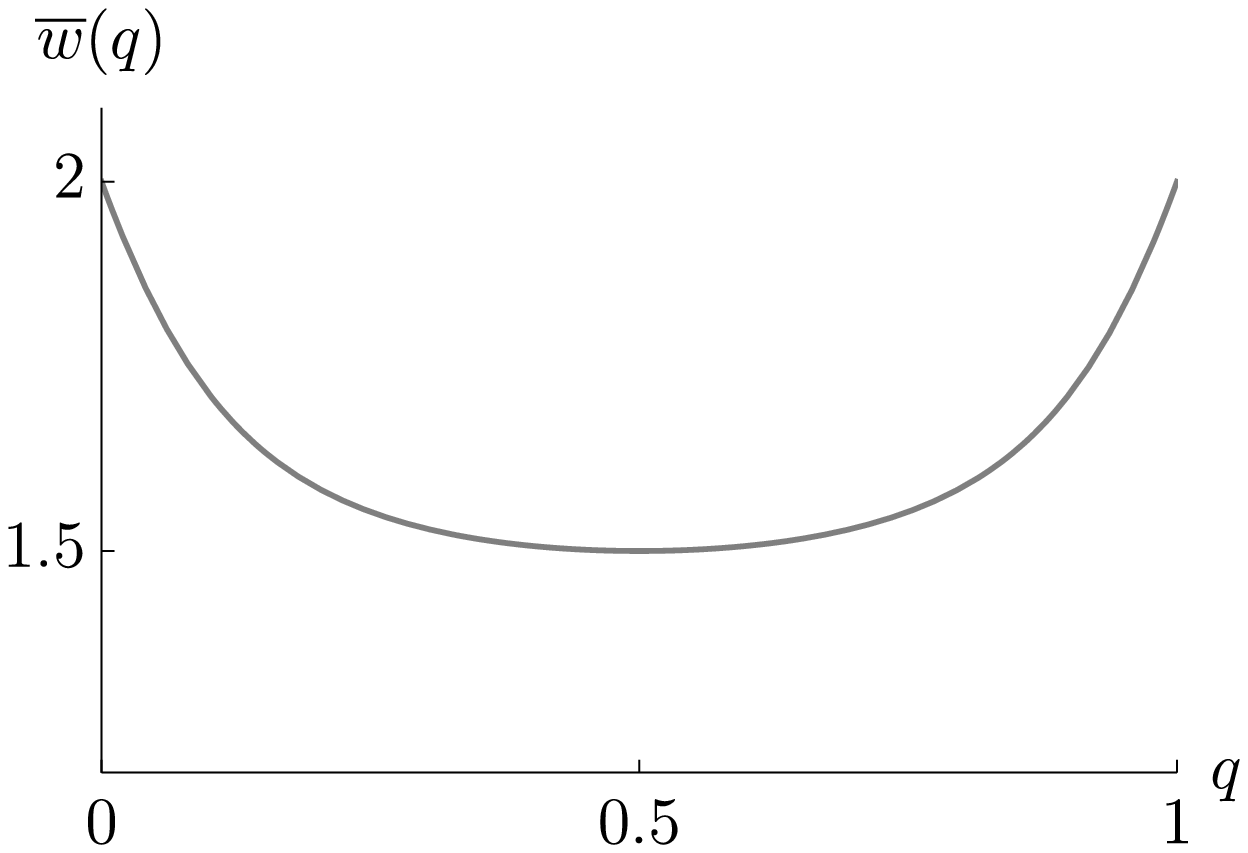}
\caption{The leading eigenvalue of spectral problem \eqref{1.2} on regular $2N$-gon for alternating fitness ladscape landscape $w=1,s=1$. On the left $N=50$, on the right $N=200$.}\label{fig:7}
\end{figure}

\subsection {Hyperoctahedral or dual Eigen's landscapes}

\subsubsection{Preliminaries} Let $X_n=\{x_0,\dots,x_k,\dots,x_{2n-1-k},\dots,x_{2n-1}\}$
be the 0-skeleton of a regular $n$-dimensional  hyperoctahedron
which is the convex hull of the vertices (in
$\R^n=\{(\xi_1,\dots,\xi_n)\}$)
$$
x_0=(1,0,\dots,0) ,\,x_{2n-1}=(-1,0,\dots,0),\;\dots,$$
$$x_k=(0,\dots,0,1,0,\dots,0) \;(\mbox{1
stands on $(k+1)$-th position}), \quad
x_{2n-1-k}=(0,\dots,0,-1,0,\dots,0),\;\dots\;.
$$

A classical octahedron ($n=3$) is represented in
Figure \ref{fig:8}.

\begin{figure}[!th]
\centering
\includegraphics[width=0.45\textwidth]{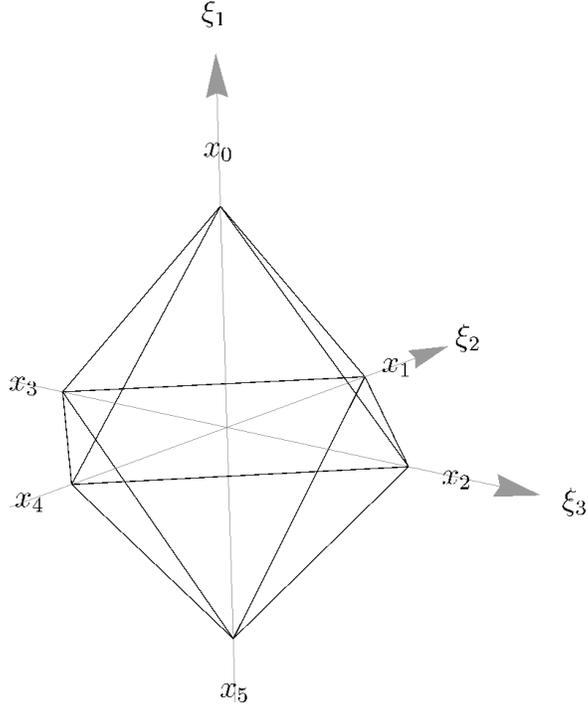}
\caption{Regular octahedron.}\label{fig:8}
\end{figure}
%
%
%
%
%
%
%
%
%

The metric $d$ is again the {\it edge} metric metric on $X_n$:
the distance $d(x_k,x_j)$ is defined as follows
$$
d(x_k,x_j)=\left\{
\begin{array}{l}
0,\;\mbox{if}\quad k=j,\\
2, \;\mbox{if}\quad k+j=2n-1,\\
1, \;\mbox{otherwise}\,.
\end{array}
\right.
$$

We have the cardinality $l=|X_n|=2n$ and  ${\rm diam}(X_n)=N=2$.
The distance polynomial is $P_{X_n}(q)=q^2+(2n-2)(1-q)q+(1-q)^2$.

Since a hyperoctahedron is the dual polytope to the hypercube
appearing in the classical Eigen's model, the isometry group
$\Gamma=\Gamma_n={\rm Iso} (X_n)$ is a hyperoctahedral group.
$\Gamma_n$ is isomorphic as an abstract group to the Weyl group of
the root system of type $B_n$ or $C_n$ and $|\Gamma_n|=2^n\cdot
n!$.

Note that the triple $(X_n,d,\Gamma_n)$ is symmetric  in the sense
of definition \ref{def:3:3} and that the stabilizer $\Gamma_0={\rm
St}_{\Gamma_n}(x_0)\cong {\rm Iso} (X_{n-1})$. Indeed, viewing
$x_0$ and $x_{2n-1}$ as the "north" and "south" poles
respectively, we have an isometry action of $\Gamma_0={\rm
St}_{\Gamma}(x_0)$ on the "equatorial" hyperoctahedron $X_{n-1}$
which is 1-sphere of $x_0$ with respect to the metric $d$. Then it
is not hard to see that $\Gamma_0={\rm St}_{\Gamma_n}(x_0)$ is the
group of all isometries of $X_{n-1}$.

If $n\geq 2$ then there are exactly $3={\rm diam}(X_n)+1$
$\Gamma_0$-orbits in $X_n$: namely, $A_0=\{x_0\}$,
$A_1=\{x_1,\dots,x_{2n-2}\}\cong X_{n-1}$, and $A_2=\{x_{2n-1}\}$.

For the orbital ring $R_n=R(X_n,d,\Gamma)$ (see Section
\ref{subs1.3}) it means that $r={\rm rk}_\Z(X_l,d,\Gamma)=3$. The
orbital matrices ${\bs M}_k={\bs M}_{A_k}$ have the following
entries: $({\bs M}_k)_{ab}=1$ if $d(a,b)=k$ and $({\bs
M}_k)_{ab}=0$ otherwise. For a 2-octahedron (a square) $X_2$ see
Example \ref{ex1.8}, for a 3-octahedron $X_3$ (a classical one)
$$
{\bs M_0}={\bs I}=\left[
\begin{array}{cccccc}
1&0&0&0&0&0\\
0&1&0&0&0&0\\
0&0&1&0&0&0\\
0&0&0&1&0&0\\
0&0&0&0&1&0\\
0&0&0&0&0&1\\
\end{array}\right],\, {\bs M_1}=\left[
\begin{array}{cccccc}
0&1&1&1&1&0\\
1&0&1&1&0&1\\
1&1&0&0&1&1\\
1&1&0&0&1&1\\
1&0&1&1&0&1\\
0&1&1&1&1&0\\
\end{array}\right],\,
{\bs M_2}=\left[
\begin{array}{cccccc}
0&0&0&0&0&1\\
0&0&0&0&1&0\\
0&0&0&1&0&0\\
0&0&1&0&0&0\\
0&1&0&0&0&0\\
1&0&0&0&0&0
\end{array}\right].
$$

The multiplication table of these matrices in
$R_n=R(X_n,d,\Gamma_n)$ is as follows ($n\geq 2$):
$$
\begin{array}{|c||c|c|c|}\hline
\times&{\bs I}&{\bs M_1}&{\bs M_2}\\ \hline {\bs I}&{\bs I}&{\bs
M_1}&{\bs M_2}\\ \hline {\bs M_1}&{\bs M_1}&(2n-2)({\bs I}+{\bs
M_2})+(2n-4){\bs M_1}&{\bs M_1}\\\hline {\bs M_2}&{\bs M_2}&{\bs
M_1}&{\bs I}\\\hline
\end{array}
$$

\subsubsection{Transition matrix
${\bs T}={\bs T}_n$ and eigenpolynomials of the matrix ${\bs
Q}={\bs Q_n}$} Let the symmetric triple $(X_n,d,\Gamma_n)$
 be as in the previous subsection and let the columns
(rows) of all matrices under consideration be indexed by $0$,
\dots, $2n-1$ corresponding to $x_0,\dots, x_{2n-1}$. Consider the
following four matrices of order $n$ ($n\geq 2$):
$$
{\bs T_{00}}=\left[
\begin{array}{crrcr}
1&1&1&\dots&1\\
1&-1&0&\dots&0\\
1&0&-1&\dots&0\\
\vdots&\vdots&\vdots&\ddots&\vdots\\
1&0&0&\cdots&-1\\
\end{array}\right],\;
{\bs T_{01}}=\left[
\begin{array}{rlrrr}
1&\dots&1&1&1\\
0&\dots&0&-1&1\\
0&\dots&-1&0&1\\
\vdots&.^{\,\displaystyle{.^{\,.}}}&\vdots&\vdots&\vdots\\
-1&\cdots&0&0&1\\
\end{array}\right],
$$
$$
{\bs T_{10}}=\left[
\begin{array}{crrcr}
1&0&0&\cdots&-1\\
\vdots&\vdots&\vdots&.^{\,\displaystyle{.^{\,.}}}&\vdots\\
1&0&-1&\dots&0\\
1&-1&0&\dots&0\\
1&1&1&\dots&1\\
\end{array}\right],\;
{\bs T_{11}}=\left[
\begin{array}{rrrrr}
1&\cdots&0&0&-1\\
\vdots&\ddots&\vdots&\vdots&\vdots\\
0&\dots&1&0&-1\\
0&\dots&0&1&-1\\
-1&\dots&-1&-1&-1\\
\end{array}\right].
$$
Here the entries of the first row and column of the matrix ${\bs
T_{00}}$ are equal to 1, the diagonal entries, except for the
first one, are equal to $-1$, and the other entries are trivial.
The matrices ${\bs T_{01}}$ and ${\bs T_{10}}$ are the horizontal
and vertical mirror copies of ${\bs T_{00}}$, the matrix ${\bs
T_{11}}$ is the horizontal mirror copy of ${\bs T_{10}}$
multiplied by $-1$.

 Consider the square symmetric matrix of order $2n$
\begin{equation}\label{4.1}
{\bs T}={\bs T_n}=\left[
\begin{array}{cc}
{\bs T_{00}}&{\bs
T_{01}}\\
{\bs T_{10}}&{\bs
T_{11}}\\
\end{array}\right]\,.
\end{equation}

We also introduce the following four matrices of order $n$:
$$
{\bs K_{00}}=\left[
\begin{array}{crrcr}
1&1&1&\dots&1\\
1&1-n&1&\dots&1\\
1&1&1-n&\dots&1\\
\vdots&\vdots&\vdots&\ddots&\vdots\\
1&1&1&\cdots&1-n\\
\end{array}\right],\;
{\bs K_{01}}=\left[
\begin{array}{rlrrr}
1&\dots&1&1&1\\
1&\dots&1&1-n&1\\
1&\dots&1-n&1&1\\
\vdots&.^{\,\displaystyle{.^{\,.}}}&\vdots&\vdots&\vdots\\
1-n&\cdots&1&1&1\\
\end{array}\right],
$$
$$
{\bs K_{10}}=\left[
\begin{array}{crrcr}
1&1&1&\cdots&1-n\\
\vdots&\vdots&\vdots&.^{\,\displaystyle{.^{\,.}}}&\vdots\\
1&1&1-n&\dots&1\\
1&1-n&1&\dots&1\\
1&1&1&\dots&1\\
\end{array}\right],\;
{\bs K_{11}}=\left[
\begin{array}{rrrrr}
n-1&\cdots&-1&-1&-1\\
\vdots&\ddots&\vdots&\vdots&\vdots\\
-1&\dots&n-1&-1&-1\\
-1&\dots&-1&n-1&-1\\
-1&\dots&-1&-1&-1\\
\end{array}\right].
$$
Here the diagonal entries of the matrix ${\bs K_{00}}$, except for
$k_{00}=1$, are equal to $1-n$, and the other entries are equal to
1. The matrices ${\bs K_{01}}$ and ${\bs K_{10}}$ are the
horizontal and vertical mirror copies of ${\bs K_{00}}$, the
matrix ${\bs K_{11}}$ is the horizontal mirror copy of ${\bs
K_{10}}$ multiplied by $-1$.

 Consider the square symmetric matrix of order $2n$
\begin{equation}\label{4.2}
{\bs K}={\bs K_n}=\left[
\begin{array}{cc}
{\bs K_{00}}&{\bs
K_{01}}\\
{\bs K_{10}}&{\bs
K_{11}}\\
\end{array}\right]\,.
\end{equation}

\begin{theorem}\label{thm4.1}
 The matrix  ${\bs T}={\bs T}_n$, $n\geq 2$,
satisfies the following conditions:

1. $\det({\bs T}_n)=(-1)^n\cdot 2^n\cdot n^2$, consequently,
${\bs T}_n$ is a non-degenerate matrix.

2. ${\bs T}^{-1}_n=\frac{1}{2n}{\bs K}_n$.

3. The columns ${\bs t}_b$ ($b=0,\dots,2n-1$) of ${\bs T}$ compose a
common eigenbasis for each orbital matrix ${\bs M}_0$, ${\bs M}_1$, ${\bs M}_2$.

4. More precisely,
$${\bs T}^{-1}{\bs M}_1{\bs
T}=\diag(2n-2,\underbrace{-2,\dots,
-2}_{n-1\;\mbox{\scriptsize{times}}},\underbrace{0,\dots,
0}_{n\;\mbox{\scriptsize{times}}})\,,$$
$${\bs T}^{-1}{\bs M}_2{\bs
T}=\diag(\underbrace{1,\dots,
1}_{n\;\mbox{\scriptsize{times}}},\underbrace{-1,\dots,
-1}_{n\;\mbox{\scriptsize{times}}})\,.$$

5. Let
${\bs Q}={\bs Q}_n=\left((1-q)^{d(x_a,x_b)}\cdot
q^{2-d(x_a,x_b)}\right)$. Then
  $${\bs T}^{-1}{\bs Q}{\bs
T} =\diag(P_0(q),P_1(q),\dots, P_{2n-1}(q))=$$
$$=\diag(q^2+(2n-2)(1-q)q+(1-q)^2,\underbrace{(2q-1)^2,\dots,
(2q-1)^2}_{n-1\;\mbox{\scriptsize{times}}},\underbrace{2q-1,\dots,
2q-1}_{n\;\mbox{\scriptsize{times}}})\,.
$$
\end{theorem}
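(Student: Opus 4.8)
The plan is to reduce the whole statement to the joint diagonalization of two elementary matrices and then to match the resulting eigenbasis with the explicit matrix $\bs T=\bs T_n$.

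First note that $\bs M_0=\bs I$, that $\bs M_2$ is the permutation matrix of the antipodal involution $x_k\mapsto x_{2n-1-k}$, and that, since for $n\geq2$ every ordered pair $(x_a,x_b)$ satisfies $d(x_a,x_b)\in\{0,1,2\}$ with $d=0$ iff $a=b$ and $d=2$ iff $a+b=2n-1$, we have $\bs M_0+\bs M_1+\bs M_2=\bs E$, the all-ones matrix; hence $\bs M_1=\bs E-\bs I-\bs M_2$. Since $\bs E$ and $\bs M_2$ commute it suffices to diagonalize them simultaneously. Write $\R^{2n}=W_+\oplus W_-$ for the $(\pm1)$-eigenspaces of the involution $\bs M_2$, i.e.\ $W_\pm=\{\bs v:v_k=\pm v_{2n-1-k}\}$, each of dimension $n$; observe $\bs 1\in W_+$ whereas $W_-\subset\bs 1^\perp$. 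On $\bs 1$ one has $\bs E\bs 1=2n\bs 1$, $\bs M_2\bs 1=\bs 1$, so $\bs M_1\bs 1=(2n-2)\bs 1$; on $W_+\cap\bs 1^\perp$ (dimension $n-1$) one has $\bs E=0$, $\bs M_2=+1$, so $\bs M_1=-2$; on $W_-$ (dimension $n$) one has $\bs E=0$, $\bs M_2=-1$, so $\bs M_1=0$. This is the entire eigenvalue content of Assertions~3 and~4, provided the columns of the given $\bs T$ realize this decomposition in the stated order.

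Establishing that last point is the bookkeeping step. Unwinding the ``mirror copy'' descriptions of the four blocks one checks: the first column of $\bs T$ is $\bs 1$; for $1\leq j\leq n-1$ the $j$-th column equals $\bs e_0-\bs e_j+\bs e_{2n-1}-\bs e_{2n-1-j}\in W_+\cap\bs 1^\perp$, and these $n-1$ vectors are linearly independent (under the coordinate isomorphism $W_+\cong\R^n$ they are $\bs e_0-\bs e_j$, $j=1,\dots,n-1$); and the last $n$ columns all lie in $W_-$ and are linearly independent (under $W_-\cong\R^n$ they become $\bs e_0-\bs e_m$, $m=1,\dots,n-1$, together with $\bs 1$). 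Hence $\bs T$ is invertible and its columns form a common eigenbasis of $\bs M_0,\bs M_1,\bs M_2$ ordered exactly so that $\bs T^{-1}\bs M_1\bs T$ and $\bs T^{-1}\bs M_2\bs T$ are the diagonal matrices displayed in Assertion~4; this proves Assertions~3 and~4. Assertion~5 is then immediate: by \eqref{1.9}, $\bs Q=q^2\bs M_0+q(1-q)\bs M_1+(1-q)^2\bs M_2$, so conjugation by $\bs T$ together with Assertion~4 gives a diagonal matrix whose first entry is $q^2+(2n-2)(1-q)q+(1-q)^2=P_{X_n}(q)$, whose next $n-1$ entries equal $q^2-2q(1-q)+(1-q)^2=(2q-1)^2$, and whose last $n$ entries equal $q^2-(1-q)^2=2q-1$.

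For Assertions~1 and~2 I would argue directly. Assertion~2 amounts to the matrix identity $\bs T\bs K=2n\bs I$, verified by multiplying out the four $n\times n$ block products $\bs T_{00}\bs K_{00}+\bs T_{01}\bs K_{10}$ and so on; this is routine. For the determinant, the same change of basis as above, namely to $\{\bs e_k+\bs e_{2n-1-k}\}_{k<n}\cup\{\bs e_k-\bs e_{2n-1-k}\}_{k<n}$, block-diagonalizes $\bs T$ into $\bs T_{00}$ (acting on $W_+$) and $\bs T_{01}$ (acting on $W_-$); the corresponding basis-change matrix has determinant $(-2)^n(-1)^{\binom n2}$, while a single row (resp.\ column) operation gives $\det\bs T_{00}=(-1)^{n-1}n$ and $\det\bs T_{01}=(-1)^{\binom{n-1}2}n$, and the product collapses because $n+\binom n2+(n-1)+\binom{n-1}2=n^2\equiv n\pmod 2$, yielding $\det\bs T=(-1)^n2^nn^2$; alternatively one reduces $\bs T$ to triangular form by hand. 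The only genuinely delicate point is the third step above — being certain that the explicit columns of $\bs T$ are the eigenvectors \emph{in the order written} — with sign-tracking in the determinant a secondary nuisance; given Theorem~\ref{thm1.9}, everything else is mechanical.
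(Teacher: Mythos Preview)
Your argument is correct, and it takes a route that is genuinely different in spirit from the paper's.  For Assertions~3 and~4 the paper simply verifies $\bs M_1\bs t_b$ and $\bs M_2\bs t_b$ column by column; you instead observe the identity $\bs M_1=\bs E-\bs I-\bs M_2$ and read off the eigenvalues from the decomposition $\R^{2n}=\R\bs 1\oplus(W_+\cap\bs 1^\perp)\oplus W_-$ induced by the antipodal involution, and only then identify the columns of $\bs T$ with this decomposition.  This is more conceptual and explains \emph{why} the eigenvalues are $2n-2,-2,0$ rather than merely confirming them; it also makes clear that any basis adapted to $W_\pm$ would do, with $\bs T$ being one convenient choice.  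For Assertion~1 the paper subtracts row $k$ from row $2n-1-k$ to block-triangularize $\bs T$ into $\begin{pmatrix}\bs T_{00}&\bs T_{01}\\ \bs 0&2\bs T_{11}\end{pmatrix}$ and then computes $\det\bs T_{00}=(-1)^{n-1}n$, $\det\bs T_{11}=-n$; your change to the $W_\pm$ basis gives instead the block-diagonal factorization $\bs T=P\,\mathrm{diag}(\bs T_{00},\bs T_{01})$, which is consistent with your eigenspace picture but entails the extra sign bookkeeping you note.  The paper's row-operation trick is slicker for the determinant; your approach has the advantage of reusing the same geometric decomposition throughout.  Assertions~2 and~5 are handled identically in both.
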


\begin{proof} 1. Subtracting the row 0 from the row $2n-1$,
the row 1 from the row $2n-2$, \dots ,  the row $n-1$ from the row
$n$ (note that the subindices of matrix entries range from $0$ to
$2n-1$) we get
$$
\det({\bs T})=\det\left[
\begin{array}{cc}
{\bs T_{00}}&{\bs
T_{01}}\\
{\bs T_{10}}&{\bs
T_{11}}\\
\end{array}\right]=\det\left[
\begin{array}{cc}
{\bs T_{00}}&{\bs
T_{01}}\\
{\bs 0}&{\bs
2T_{11}}\\
\end{array}\right]=2^n\det({\bs T_{00}})\det({\bs
T_{11}}).
$$

Adding the sum of columns 2, \dots, $n$ to the first column of the
matrix ${\bs T_{00}}$ we obtain the equality $\det({\bs
T_{00}})=(-1)^{n-1}n$. In a similar way we obtain that $\det({\bs
T_{11}})=-n$. Hence the desired result.

2.  Straightforward checking shows that ${\bs T}_n{\bs
K_n}=2n{\bs I}$. Note that the matrix ${\bs T_{00}}$ appears as a
transition matrix for a simplicial landscape, see \cite[Section
6]{semenov2016eigen} for the inverse ${\bs T^{-1}_{00}}$ and for more details.

3 and 4. Straightforward calculations show that
$$
{\bs M_{1}}{\bs t}_0=(2n-2) {\bs t}_0,\quad {\bs M_{1}}{\bs
t}_b=-2 {\bs t}_b\;\;(b=1,\dots, n-1), \quad {\bs M_{1}}{\bs
t}_b={\bs 0}\;\;(b=n,\dots, 2n-1)\,,
$$
$$
 {\bs M_{2}}{\bs
t}_b= {\bs t}_b\;\;(b=0,\dots, n-1), \quad {\bs M_{1}}{\bs
t}_b=-{\bs t}_b\;\;(b=n,\dots, 2n-1)\,.
$$

5. Since ${\bs Q}=q^2{\bs I}+q(1-q){\bs M}_1+(1-q)^2{\bs
M}_2$ and in view of 3 we get
$${\bs T}^{-1}{\bs Q}{\bs
T}=q^2{\bs I}+q(1-q)\diag(2n-2,\underbrace{-2,\dots,
-2}_{n-1\;\mbox{\scriptsize{times}}},\underbrace{0,\dots,
0}_{n\;\mbox{\scriptsize{times}}})+(1-q)^2\diag(\underbrace{1,\dots,
1}_{n\;\mbox{\scriptsize{times}}},\underbrace{-1,\dots,
-1}_{n\;\mbox{\scriptsize{times}}})=
$$
$$
=\diag(q^2+(2n-2)(1-q)q+(1-q)^2,\underbrace{(2q-1)^2,\dots,
(2q-1)^2}_{n-1\;\mbox{\scriptsize{times}}},\underbrace{2q-1,\dots,
2q-1}_{n\;\mbox{\scriptsize{times}}})\,.
$$
The theorem is proved.
\end{proof}

\begin{remark} Note that Conjecture \ref{conj1.11} is true for the
triple $(X_n,d,\Gamma_n)$.
\end{remark}

\subsubsection{On single peaked  $G$-invariant hyperoctahedral landscapes}

In this subsection the explicit expression of the equation
(\ref{2.25}) is given for 2-fitness hyperoctahedral landscapes
$(X_n,d,\Gamma_n,{\bs w})$.

 Consider a single peaked landscape
 for which $X_n=A_0\sqcup A_1$ and $A_1$ consists of a
single point, say, $A_1=\{x_0\}$. Thus, ${\bs w}(x_0)=w+s$, ${\bs
w}(x_k)=w$, $k=1,\dots,2n-1$. This landscape is
$G$-invariant under the action of the trivial group $G=\{1\}$.

In the case of a single peaked landscape the equation (\ref{2.25})
reads
\begin{equation}\label{4.3}
\frac{G_{11}^0}{\overline{w}-w}+\frac{G_{11}^1(2q-1)^2}
{\overline{w}P_X(q)-w(2q-1)^2}+\frac{G_{11}^{2}(2q-1)}
{\overline{w}P_X(q)-w(2q-1)}=\frac{l}{s}\,,
\end{equation}
where $P_X(q)=p_0(q)=q^2+(2n-2)q(1-q)+(1-q)^2$ is the distance
polynomial and, in view of the assertion 2 of Theorem \ref{thm4.1}
and (\ref{2.19}),
$$
G_{11}^0=\frac{1}{2n}\,\quad G_{11}^1=\frac{n-1}{2n},\quad
G_{11}^2=\frac{n}{2n}=\frac{1}{2}\,.
$$

Finally, for the parameters $u=w/s$, $\overline{u}=\overline{w}/s$
we obtain the (cubic) equation
\begin{equation}\label{4.4}
\frac{1}{\overline{u}-u}+\frac{(n-1)(2q-1)^2}
{\overline{u}P_X(q)-u(2q-1)^2}+\frac{n(2q-1)}
{\overline{u}P_X(q)-u(2q-1)}=2n\,.
\end{equation}
Numerical illustrations are given in Fig. \ref{fig:9}. Note that in this case, different from the polygonal landscapes we considered in Section \ref{sec:5:1}, the numerical experiments indicate that this model possesses the error threshold.

\begin{figure}[!th]
\centering
\includegraphics[width=0.45\textwidth]{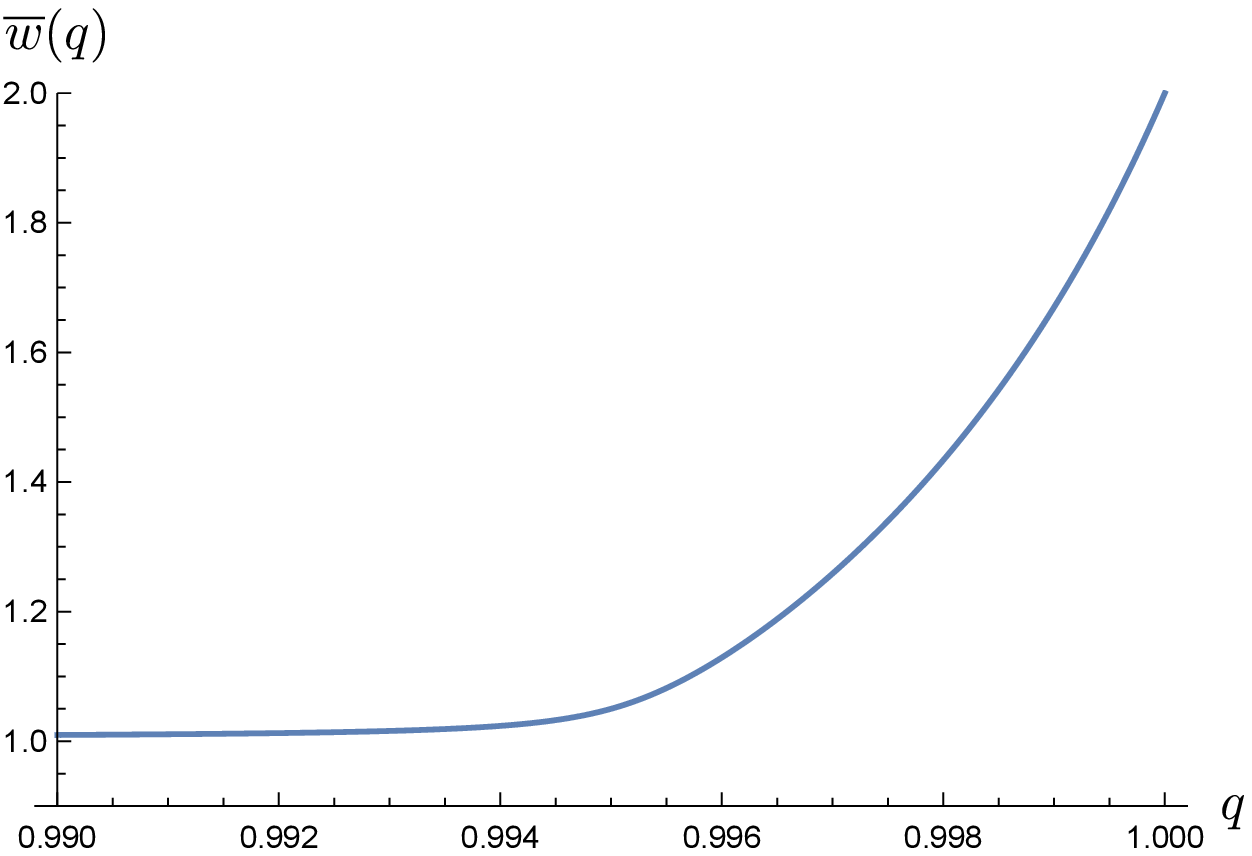}$\qquad$
\includegraphics[width=0.45\textwidth]{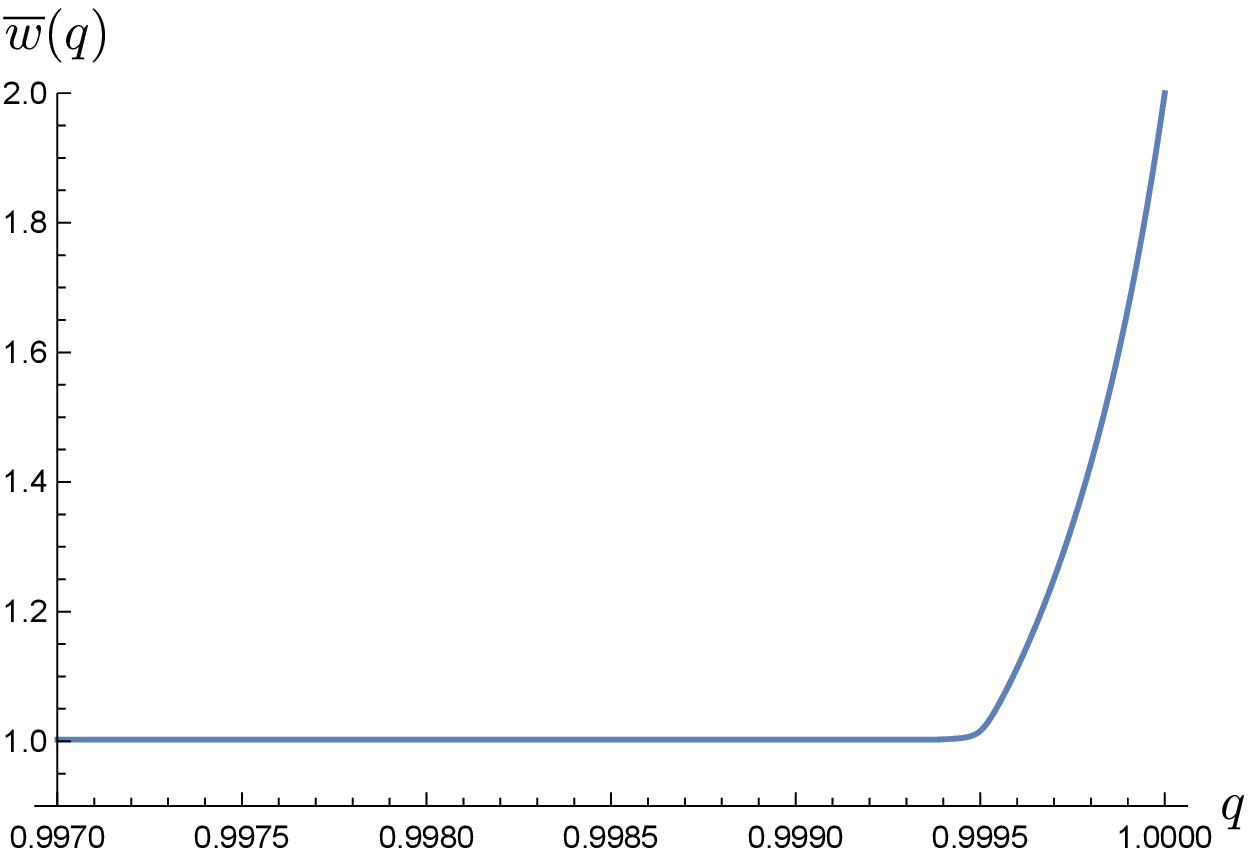}
\caption{The leading eigenvalue of spectral problem \eqref{1.2} on hyperoctahedral landscape for SPL $w=1,s=1$. On the left $n=100$, on the right $n=1000$.}\label{fig:9}
\end{figure}

\section{Concluding remarks}
The main contribution of the present paper is twofold. First, we introduced a generalized algebraic quasispecies model in which the standard binary hypercube of Eigen's model is replaced with an arbitrary finite metric space $X$. Second, we showed that if the structure of the fitness landscape is related to the isometry group of $X$ then a progress can be made in analytical investigation of the corresponding spectral problem. In particular, we found an explicit form of the algebraic equation for the leading eigenvalue (equation \eqref{2.23}).

At the same time, there are a number of open questions, which would be interesting to work on using the framework we suggest.

While the equation for $\overline{w}$ is written in the general form, in all the examples we considered here and in \cite{semenov2016eigen} we deal with the simplest case of two-valued fitness landscapes, when $X=A_0\bigsqcup A_1$. It is important to consider examples with more complicated partition of $X$. For example, the so-called mesa landscapes \cite{wolff2009robustness} have exactly this form.

The error threshold phenomenon (see Fig. \ref{fig:2}) was not analyzed in the present text. We remark that the error threshold was proven to exist for a simplicial mutation landscape in \cite{semenov2016eigen}. It looks plausible to conjecture that for the considered in the present text $m$-gon landscapes the error threshold is absent whereas for the hyperoctahedral mutation landscape it does exist. In general, we now have a more general question to ask: What are the properties of a finite metric space $X$ that guarantee the existence of the error threshold at least for some fitness landscapes $\bs w$?

Finally, more detailed analysis of the connections of the considered spectral problems with the Ising model is necessary. The proof that 2D Ising model possesses the phase transition, given by Onsager, is very non-elementary. On the other hand, for the simplicial and hyperoctahedral mutation landscapes the algebraic equation for the leading eigenvalue has degree 2 and 3 respectively and they provide much simpler examples of modeling systems that possess phase transition behavior.

\appendix

\section{Resulting tables}\label{ap:1}

In the following Table \ref{tab:1} several known homogeneous symmetric
triples $(X,d,\Gamma)$ of the landscapes are presented. Here $H_n$
is a hyperoctahedral group (the Weyl group of root system $B_n$ or
$C_n$ ) of order $2^n n!$, $S_n$ is a symmetric group, $D_n$ is a
dihedral group of order $2n$, $A_5$ is the alternating group of
order 60, $D_1\cong \Z/2\Z$, $\Gamma_0={\rm St}(x)$, $x\in X$. For
regular polytopes $P$ the metric space  $(X,d)$ consists of the
set $X=P^{(0)}$ of vertices, the metric $d$ is the edge metric
(see Example \ref{ex1.1}).
\begin{table}
$$\begin{array}{|l|c|c|c|c|c|c|c|}\hline
\;\mbox{Landscape}&X&d&l=|X|&\diam X&\Gamma&\Gamma_0
&r={\rm rk}_\Z R(X,d,\Gamma)\\
\hline\phantom{\rule{0pt}{12pt}} 1.\; \mbox{\it Hypercubic,}&\{0,1\}^n&Hamming&2^n&n&H_n&S_n&n+1\\
\mbox{\it or Eigen's}&&(edge)&&&&&\\&&metric&&&&&\\
\hline\phantom{\rule{0pt}{12pt}} 2.\; \mbox{\it Simplicial}&X_n&edge&n+1&1&S_{n+1}&S_n&2\\
&&metric&&&&&\\
\hline\phantom{\rule{0pt}{12pt}} 3.\; \mbox{\it Polygonal}&X_{2n}&edge&2n&n&D_{2n}&D_1&n+1\\
&&metric&&&&&\\
\hline\phantom{\rule{0pt}{12pt}} 4.\; \mbox{\it Polygonal}&X_{2n+1}&edge&2n+1&n&D_{2n+1}&D_1&n+1\\
&&metric&&&&&\\
\hline\phantom{\rule{0pt}{12pt}} 5.\, \mbox{\it Hyper-}&X_{n}&edge&2n&2&H_n&H_{n-1}&3\\
\mbox{\it octahedral}&&metric&&&&&\\
\hline\phantom{\rule{0pt}{12pt}} 6.\, \mbox{\it Dodeca-}&X=P^{(0)}&edge&20&5&A_5&D_3&6\\
\mbox{\it hedral}&&metric&&&&&\\\hline \phantom{\rule{0pt}{12pt}}
7.\, \mbox{\it Icosa-}&X=P^{(0)}&edge&12&3&A_5&D_5&4\\
\mbox{\it hedral}&&metric&&&&&\\\hline
\end{array}
$$
\caption{Examples of the homogeneous symmetric triples $(X,d,\Gamma)$}\label{tab:1}
\end{table}

In Table \ref{tab:2} the eigenpolynomials and their
multiplicities (in brackets) of the matrix ${\bs Q}$ are given.
The first one is always the (leading) distance polynomial of
multiplicity 1.

\begin{table}
$$\begin{array}{|l|c|c|}\hline
\;\mbox{Landscape}&X&\mbox{Eigenpolynomials of}\; {\bs Q}\; \mbox{(their multiplicities)} \\
\hline\phantom{\rule{0pt}{12pt}} 1.\; \mbox{\it Hypercubic,}&X_n&P_X(q)\equiv 1\;,\;(1),\\
\mbox{\it or Eigen's}&&P_j(q)=(2q-1)^j\;,\;\left(\binom{n}{j}\right),\quad j=1,\dots,n\\
\hline\phantom{\rule{0pt}{12pt}} 2.\; \mbox{\it Simplicial}&X_n&P_X(q)=q+n(1-q)\;,\;(1),\\
&&P_1(q)=2q-1\;,\;(n)\\
\hline\phantom{\rule{0pt}{12pt}} 3.\; \mbox{\it Polygonal}&X_{n},
&P_X(q)= q^N+2\sum\limits_{k=1}^{N-1}(1-q)^kq^{N-k}+(1-q)^N\;,\;(1),\\
&n=2N&P_j(q)=q^N+\sum\limits_{k=1}^{N-1} 2\cos(2\pi
kj/l)\,(1-q)^kq^{N-k}+(-1)^j(1-q)^N\;,\;(2),\\
&&j=1,\dots,N-1,\\&& P_N(q)=
q^N+2\sum\limits_{k=1}^{N-1}(-1)^k(1-q)^kq^{N-k}+(-1)^N(1-q)^N\;,\;(1),\\
\hline\phantom{\rule{0pt}{12pt}} 4.\; \mbox{\it Polygonal}&X_{n},
&P_X(q)= q^N+2\sum\limits_{k=1}^{N}(1-q)^kq^{N-k}\;,\;(1),\\
&n=2N+1&P_j(q)=q^N+\sum\limits_{k=1}^{N} 2\cos(2\pi
kj/l)\,(1-q)^kq^{N-k}\;,\;(2),\\
&&j=1,\dots,N\\
\hline\phantom{\rule{0pt}{12pt}} 5.\; \mbox{\it Hyper-}&X_{n}
&P_X(q)=q^2+(2n-2)(1-q)q+(1-q)^2\;,\;(1),\\
\mbox{\it octahedral}&&P_1(q)=(2q-1)^2\;,\;(n-1),\\
&&P_2(q)=2q-1\;,\;(n)\\
\hline\phantom{\rule{0pt}{12pt}} 6.\; \mbox{\it Dodeca-}&X
&P_X(q)=2q^4-4q^3+4q^2-2q+1\;,\;(1),\\
\;\mbox{\it hedral}&&P_1(q)=(-2q^4+4q^3+q^2-3q+1)(2q-1)\;,\;(4),\\
&&P_2(q)=(3q^4-6q^3+6q^2-3q+1+\sqrt{5}(q^4-2q^3+2q^2-q))\times\\
&&\times(2q-1)\;,\;(3),\\
&&P_3(q)=(3q^4-6q^3+6q^2-3q+1-\sqrt{5}(q^4-2q^3+2q^2-q))\times\\
&&\times(2q-1)\;,\;(3),\\
&&P_4(q)=(3q^2-3q+1)(2q-1)^2\;\;(4),\\
&&P_5(q)=(2q-1)^2\;,\;(5)\\
\hline\phantom{\rule{0pt}{12pt}} 7.\; \mbox{\it Icosa-}&X
&P_X(q)=-2q^2+2q+1\;,\;(1),\\
\;\mbox{\it hedral}&&P_1(q)=(q^2-q+1+\sqrt{5}(q^2-q))(2q-1)\;,\;(3),\\
&&P_2(q)=(q^2-q+1-\sqrt{5}(q^2-q))(2q-1)\;,\;(3),\\
&&P_3(q)=(2q-1)^2\;,\;(5)\\\hline
\end{array}
$$
\caption{Examples of eigenpolynomials and their multiplicities for several generalized mutation matrices $\bs Q$ corresponding to particular triples $(X,d,\Gamma)$}\label{tab:2}
\end{table}

\newpage


\begin{thebibliography}{10}

\bibitem{baake1999}
E.~Baake and W.~Gabriel.
\newblock {Biological evolution through mutation, selection, and drift: An
  introductory review}.
\newblock In D.~Stauffer, editor, {\em {Annual Reviews of Computational Physics
  VII}}, pages 203--264. World Scientific, 1999.

\bibitem{Baake2007}
E.~Baake and H.-O. Georgii.
\newblock Mutation, selection, and ancestry in branching models: a variational
  approach.
\newblock {\em Journal of Mathematical Biology}, 54(2):257--303, Feb 2007.

\bibitem{baake2001mutation}
E.~Baake and H.~Wagner.
\newblock Mutation--selection models solved exactly with methods of statistical
  mechanics.
\newblock {\em Genetical research}, 78(1):93--117, 2001.

\bibitem{bratus2013linear}
A.~S. Bratus, A.~S. Novozhilov, and Y.~S. Semenov.
\newblock {Linear algebra of the permutation invariant Crow--Kimura model of
  prebiotic evolution}.
\newblock {\em Mathematical Biosciences}, 256:42--57, 2014.

\bibitem{brown2012cohomology}
K.~S. Brown.
\newblock {\em Cohomology of groups}, volume~87.
\newblock Springer, 2012.

\bibitem{cerf2016quasispeciesnew}
R.~Cerf and J.~Dalmau.
\newblock The quasispecies distribution.
\newblock {\em arXiv preprint arXiv:1609.05738}, 2016.

\bibitem{cerf2016quasispecies}
R.~Cerf and J.~Dalmau.
\newblock Quasispecies on class-dependent fitness landscapes.
\newblock {\em Bulletin of mathematical biology}, 78(6):1238--1258, 2016.

\bibitem{coxeter1973regular}
H.~S.~M. Coxeter.
\newblock {\em Regular polytopes}.
\newblock Courier Corporation, 1973.

\bibitem{de2000topics}
P.~de~la Harpe.
\newblock {\em Topics in geometric group theory}.
\newblock University of Chicago Press, 2000.

\bibitem{dress1988evolution}
A.~W.~M. Dress and D.~S. Rumschitzki.
\newblock Evolution on sequence space and tensor products of representation
  spaces.
\newblock {\em Acta Applicandae Mathematica}, 11(2):103--115, 1988.

\bibitem{eigen1971sma}
M.~Eigen.
\newblock Selforganization of matter and the evolution of biological
  macromolecules.
\newblock {\em Naturwissenschaften}, 58(10):465--523, 1971.

\bibitem{eigen1988mqs}
M.~Eigen, J.~McCaskill, and P.~Schuster.
\newblock Molecular quasi-species.
\newblock {\em Journal of Physical Chemistry}, 92(24):6881--6891, 1988.

\bibitem{feit1982representation}
W.~Feit.
\newblock {\em The representation theory of finite groups}, volume~2.
\newblock Elsevier, 1982.

\bibitem{Hermisson2002}
J.~Hermisson, O.~Redner, H.~Wagner, and E.~Baake.
\newblock Mutation-selection balance: ancestry, load, and maximum principle.
\newblock {\em Theoretical Population Biology}, 62(1):9--46, Aug 2002.

\bibitem{jainkrug2007}
K.~Jain and J.~Krug.
\newblock {Adaptation in Simple and Complex Fitness Landscapes}.
\newblock In U.~Bastolla, M.~Porto, H.~Eduardo~Roman, and M.~Vendruscolo,
  editors, {\em Structural approaches to sequence evolution}, chapter~14, pages
  299--339. Springer, 2007.

\bibitem{kirillov1976elements}
A.~A. Kirillov.
\newblock {\em Elements of the Theory of Representations}, volume 145.
\newblock Springer, 1976.

\bibitem{leuthausser1986exact}
I.~Leuth{\"a}usser.
\newblock {An exact correspondence between Eigen's evolution model and a
  two-dimensional Ising system}.
\newblock {\em The Journal of Chemical Physics}, 84(3):1884--1885, 1986.

\bibitem{leuthausser1987statistical}
I.~Leuth{\"a}usser.
\newblock {Statistical mechanics of Eigen's evolution model}.
\newblock {\em Journal of statistical physics}, 48(1):343--360, 1987.

\bibitem{onsager1944crystal}
L.~Onsager.
\newblock Crystal statistics. i. a two-dimensional model with an order-disorder
  transition.
\newblock {\em Physical Review}, 65(3-4):117, 1944.

\bibitem{rumschitzki1987spectral}
D.~S. Rumschitzki.
\newblock {Spectral properties of Eigen evolution matrices}.
\newblock {\em Journal of Mathematical Biology}, 24(6):667--680, 1987.

\bibitem{saakian2006ese}
D.~B. Saakian and C.~K. Hu.
\newblock {Exact solution of the Eigen model with general fitness functions and
  degradation rates}.
\newblock {\em Proceedings of the National Academy of Sciences USA},
  103(13):4935--4939, 2006.

\bibitem{schuster2015quasispecies}
P.~Schuster.
\newblock Quasispecies on fitness landscapes.
\newblock In E.~Domingo and P.~Schuster, editors, {\em Quasispecies: From
  Theory to Experimental Systems}, Current Topics in Microbiology and
  Immunology, pages 61--120. Springer, 2015.

\bibitem{semenov1994rings}
Y.~Semenov.
\newblock Rings associated with hyperbolic groups.
\newblock {\em Communications in Algebra}, 22(15):6323--6347, 1994.

\bibitem{semenov2015}
Y.~S. Semenov and A.~S. Novozhilov.
\newblock {Exact solutions for the selection-mutation equilibrium in the
  Crow-Kimura evolutionary model}.
\newblock {\em Mathematical Biosciences}, 266:1--9, 2015.

\bibitem{semenov2016eigen}
Y.~S. Semenov and A.~S. Novozhilov.
\newblock {On Eigen's quasispecies model, two-valued fitness landscapes, and
  isometry groups acting on finite metric spaces}.
\newblock {\em Bulletin of mathematical biology}, 78(5):991--1038, 2016.

\bibitem{serre2012linear}
J.-P. Serre.
\newblock {\em Linear representations of finite groups}, volume~42.
\newblock Springer, 2012.

\bibitem{swetina1982self}
J.~Swetina and P.~Schuster.
\newblock Self-replication with errors: A model for polvnucleotide replication.
\newblock {\em Biophysical Chemistry}, 16(4):329--345, 1982.

\bibitem{thompson2015mathematical}
C.~J. Thompson.
\newblock {\em Mathematical statistical mechanics}.
\newblock New York: Macmillan, 1972.

\bibitem{wiehe1997model}
T.~Wiehe.
\newblock Model dependency of error thresholds: the role of fitness functions
  and contrasts between the finite and infinite sites models.
\newblock {\em Genetical research}, 69(02):127--136, 1997.

\bibitem{wilke2005quasispecies}
C.~O. Wilke.
\newblock Quasispecies theory in the context of population genetics.
\newblock {\em BMC Evolutionary Biology}, 5(1):44, 2005.

\bibitem{wolff2009robustness}
A.~Wolff and J.~Krug.
\newblock Robustness and epistasis in mutation-selection models.
\newblock {\em Physical biology}, 6(3):036007, 2009.

\end{thebibliography}

%
%
%
%
%
%
%
%
%
%
%
\end{document}